\theoremstyle{plain}
  \newtheorem{thm}{Theorem}[section]
  \newtheorem{lem}[thm]{Lemma}
  \newtheorem{prop}[thm]{Proposition}
\theoremstyle{definition}
  \newtheorem{defn}[thm]{Definition}
  \newtheorem{exmps}[thm]{Examples}
  \newtheorem{rem}[thm]{Remark}
\DeclareMathAlphabet{\mathcal}{OMS}{cmsy}{m}{n}
\DeclareMathOperator{\ub}{ub}
\DeclareMathOperator{\lb}{lb}
\DeclareMathOperator{\id}{id}
\DeclareMathOperator{\supp}{supp}
\def\ps@pprintTitle{%
 \let\@oddhead\@empty
  \let\@evenhead\@empty
  \def\@oddfoot{\vbox{\hsize=\textwidth\footnotesize
  \vskip 8pt
  \copyright 2018. This manuscript version is made available under the CC-BY-NC-ND 4.0 license \url{https://creativecommons.org/licenses/by-nc-nd/4.0/}. The published version is available at \url{https://doi.org/10.1016/j.fss.2018.05.016}.\\
  }}%
  \let\@evenfoot\@oddfoot}
\def\oto{{\bfig\morphism<180,0>[\mkern-4mu`\mkern-4mu;]\place(86,0)[\circ]\efig}}
\def\rto{{\bfig\morphism<180,0>[\mkern-4mu`\mkern-4mu;]\place(82,0)[\mapstochar]\efig}}
\def\nra{\relbar\joinrel\joinrel\mapstochar\joinrel\joinrel\rightarrow}
\def\nla{\leftarrow\joinrel\joinrel\joinrel\mapstochar\joinrel\relbar}
\newcommand{\da}{\downarrow}
\newcommand{\ua}{\uparrow}
\newcommand{\ra}{\rightarrow}
\newcommand{\la}{\leftarrow}
\newcommand{\lra}{\longrightarrow}
\newcommand{\lda}{\swarrow}
\newcommand{\rda}{\searrow}
\newcommand{\rat}{\rightarrowtail}
\newcommand{\bv}{\bigvee}
\newcommand{\bw}{\bigwedge}
\newcommand{\dv}{\dashv}
\newcommand{\nat}{\natural}
\renewcommand{\phi}{\varphi}
\newcommand{\al}{\alpha}
\newcommand{\be}{\beta}
\newcommand{\Lam}{\Lambda}
\newcommand{\lam}{\lambda}
\newcommand{\Om}{\Omega}
\newcommand{\CC}{\mathcal{C}}
\newcommand{\CD}{\mathcal{D}}
\newcommand{\CM}{\mathcal{M}}
\newcommand{\sP}{{\sf P}}
\newcommand{\sy}{{\sf y}}
\newcommand{\FQ}{\mathfrak{Q}}
\newcommand{\Ord}{{\bf Ord}}
\newcommand{\Rel}{{\bf Rel}}
\newcommand{\Set}{{\bf Set}}
\newcommand{\Sup}{{\bf Sup}}
\newcommand{\QOrd}{\FQ\text{-}\Ord}
\newcommand{\dphi}{\phi^{\da}}
\newcommand{\uphi}{\phi_{\ua}}
\newcommand{\dpsi}{\psi^{\da}}
\newcommand{\upsi}{\psi_{\ua}}
\newcommand{\hphi}{\widehat{\phi}}
\newcommand{\tphi}{\widetilde{\phi}}
\newcommand{\sPd}{\sP^{\dag}}
\newcommand{\syd}{\sy^{\dag}}
\newcommand{\co}{{\rm co}}
\newcommand{\op}{{\rm op}}
\newcommand{\PX}{\sP X}
\newcommand{\PY}{\sP Y}
\newcommand{\PZ}{\sP Z}
\newcommand{\PdX}{\sPd X}
\newcommand{\PdY}{\sPd Y}
\newcommand{\DQ}{\CD\FQ}
\newcommand{\FDist}{{\bf FDist}}
\newcommand{\FOrd}{{\bf FOrd}}
\newcommand{\FRel}{{\bf FRel}}
\newcommand{\FSup}{{\bf FSup}}
\newcommand{\POrd}{{\bf POrd}}
\newcommand{\QFDist}{\FQ\text{-}\FDist}
\newcommand{\QFOrd}{\FQ\text{-}\FOrd}
\newcommand{\QFRel}{\FQ\text{-}\FRel}
\newcommand{\QFSup}{\FQ\text{-}\FSup}
\newcommand{\QPOrd}{\FQ\text{-}\POrd}
\newcommand{\ldd}{\mathrel{\slash}}
\newcommand{\rdd}{\mathrel{\backslash}}
\newcommand{\with}{\mathrel{\&}}
\newcommand{\comp}{\mathrel{\circ}}
\numberwithin{equation}{section}
\begin{document}

\begin{frontmatter}



\title{Fuzzy Galois connections on fuzzy sets}


\author[B]{Javier~Guti\'{e}rrez~Garc\'{\i}a}
\ead{javier.gutierrezgarcia@ehu.eus}

\author[S]{Hongliang~Lai}
\ead{hllai@scu.edu.cn}

\author[S]{Lili~Shen\corref{cor}}
\ead{shenlili@scu.edu.cn}

\cortext[cor]{Corresponding author.}
\address[B]{Department of Mathematics, University of the Basque Country UPV/EHU, Apdo. 644, 48080 Bilbao, Spain}
\address[S]{School of Mathematics, Sichuan University, Chengdu 610064, China}

\begin{abstract}
In fairly elementary terms this paper presents how the theory of preordered fuzzy sets, more precisely quantale-valued preorders on quantale-valued fuzzy sets, is established under the guidance of enriched category theory. Motivated by several key results from the theory of quantaloid-enriched categories, this paper develops all needed ingredients purely in order-theoretic languages for the readership of fuzzy set theorists, with particular attention paid to fuzzy Galois connections between preordered fuzzy sets.
\end{abstract}

\begin{keyword}
Preordered fuzzy set \sep Fuzzy Galois connection \sep Fuzzy relation \sep Quantale \sep Quantaloid


\MSC[2010] 03B52 \sep 06A15 \sep 06F07 \sep 18B35 \sep 18D20 \sep 18A40
\end{keyword}

\end{frontmatter}




\section{Introduction}

The theory of fuzzy preorders was initiated by Zadeh's pioneering work \cite{Zadeh1971} and has been developed for decades, during which time the table of truth-values under concern has been extended from the unit interval $[0,1]$ to a unital quantale $\FQ$ \cite{Rosenthal1990}. With the multiplication $\with\colon\FQ\times\FQ\to\FQ$ of a unital quantale $\FQ$ playing the role of the logical conjunction and its unit $e$ representing the logical value ``true'', a $\FQ$-preorder on a set $X$ is given by a map ${\al\colon X\times X\to\FQ}$ such that
\begin{equation} \label{Q-preorder-crisp}
e\leq\al(x,x)\ \ \text{(reflexivity)}\quad\text{and}\quad\al(y,z)\with\al(x,y)\leq\al(x,z)\ \ \text{(transitivity)}
\end{equation}
for all $x,y,z\in X$; here the transitivity condition is also formulated by some authors as $\al(x,y)\with\al(y,z)\leq\al(x,z)$ (see, e.g., \cite{Hoehle2015,Stubbe2014}), which in fact defines $\FQ^{\tau}$-preorders on $X$ in the sense of \eqref{Q-preorder-crisp}, with $\FQ^{\tau}$ being the \emph{conjugate} of the quantale $\FQ$ (see Remark~\ref{Q-valued-preorder}). $\FQ$-preordered sets have attracted wide attention in the fuzzy community; see \cite{Bvelohlavek2002,Bvelohlavek2004,Denniston2014,Hoehle2016,Hoehle1985,Lai2006,Lai2009,Shen2013,Yao2009} for instance.

While $\FQ$-preordered sets defined by \eqref{Q-preorder-crisp} are actually $\FQ$-preorders on \emph{crisp} sets, recently Lai and Zhang and their co-authors have established the theory of $\FQ$-preorders on \emph{fuzzy} sets especially when $\FQ$ is a \emph{divisible} quantale \cite{Li2017,Pu2012,Tao2014}; similar approaches have been adopted by H{\"o}hle and Kubiak for the construction of their quantale-valued preorders \cite{Hoehle2015,Hoehle2011a}. The key machinery involved in these works is that of categories enriched in a \emph{quantaloid} \cite{Rosenthal1996,Stubbe2005,Stubbe2006,Stubbe2014}, which is a special case of categories enriched in a bicategory \cite{Betti1982,Betti1983,Walters1981}. To be specific, each unital quantale $\FQ$ gives rise to a quantaloid $\DQ$ of \emph{diagonals} in $\FQ$ \cite{Hoehle2011a,Pu2012,Stubbe2014}, and a $\FQ$-subset (i.e., a $\FQ$-valued fuzzy set) equipped with a $\FQ$-preorder is exactly a category enriched in the quantaloid $\DQ$.

The purpose of this paper is to present the theory of preordered fuzzy sets, more precisely \emph{$\FQ$-preordered} {\protect\linebreak}\emph{$\FQ$-subsets}, in the most accessible terms for readers from the fuzzy community who may not be familiar with the arsenal of category theorists and, in particular, the theory of quantaloid-enriched categories. For the most generality we only assume $\FQ$ to be a unital quantale, not necessarily commutative, without imposing any divisibility condition as in \cite{Pu2012,Tao2014}, and our focus will be on \emph{$\FQ$-Galois connections} between $\FQ$-preordered $\FQ$-subsets. Although some of the results in this paper are generalizations of those in \cite{GutierrezGarcia2010} for $\FQ$-preordered (crisp) sets, the method developed here, as prepared in Section~\ref{Fuzzy_Relations}, allows for a much wider range of applicability.

To understand the structure of $\FQ$-preordered $\FQ$-subsets, let us first look at their crisp version; that is, when $\FQ=\boldsymbol{2}$, the two-element Boolean algebra. In this case, a $\boldsymbol{2}$-preordered $\boldsymbol{2}$-subset is a \emph{partially defined preordered set} given by a (crisp) set $X$, a (crisp) subset $A\subseteq X$ and a preorder ``$\leq$'' on $A$. Explicitly, for all $x,y,z\in X$:
\begin{enumerate}[label=(P\arabic*)]
\item (divisibility) $x\leq y$ only if $x,y\in A$;
\item (reflexivity) if $x\in A$, then $x\leq x$;
\item (transitivity) if there exists $y\in A$ such that $y\leq z$ and $x\leq y$, then $x\leq z$.
\end{enumerate}
Now, replacing $\boldsymbol{2}$ with a general unital quantale $\FQ$, a $\FQ$-subset consists of a (crisp) set $X$ and a map $|\text{-}|\colon X\to\FQ$, and a $\FQ$-preorder on $(X,|\text{-}|)$ is given by a map $\al\colon X\times X\to\FQ$ satisfying
\begin{enumerate}[label=($\FQ$P\arabic*),leftmargin=3.2em]
\item \label{Qd} (divisibility) $(\al(x,y)\ldd |x|)\with |x|=\al(x,y)=|y|\with(|y|\rdd\al(x,y))$,
\item \label{Qr} (reflexivity) $|x|\leq\al(x,x)$,
\item \label{Qt} (transitivity) $(\al(y,z)\ldd |y|)\with\al(x,y)=\al(y,z)\with(|y|\rdd\al(x,y))\leq\al(x,z)$
\end{enumerate}
for all $x,y,z\in X$, where $\ldd$, $\rdd$ stand for the left and the right implications in $\FQ$. Hence, while the notion of $\FQ$-preordered set defined by \eqref{Q-preorder-crisp} extend the notion of ``preordered set'', the notion of \emph{$\FQ$-preordered $\FQ$-subset} defined as above is actually a generalization of the notion of ``partially defined preordered set''.

However, instead of establishing the theory of $\FQ$-preordered $\FQ$-subsets upon the complicated pointwise definition \ref{Qd}--\ref{Qt} as in most of the literature for $\FQ$-preordered sets, we would rather introduce \emph{$\FQ$-relations} between {\protect\linebreak}$\FQ$-subsets (i.e., \emph{fuzzy relations} between fuzzy sets), in Section~\ref{Fuzzy_Relations}, as the cornerstone of our theory. Indeed, as it is well known that (crisp) preorders are reflexive and transitive (crisp) relations, an appropriate notion of $\FQ$-relation between $\FQ$-subsets (see Definition~\ref{Q-relation}) allows us to define $\FQ$-preorders on $\FQ$-subsets simply as reflexive and transitive $\FQ$-relations (see Definition~\ref{Q-preorder}). More importantly, the calculus of $\FQ$-relations based on the fact that $\FQ$-subsets and $\FQ$-relations constitute a quantaloid significantly simplifies the treatment of $\FQ$-preorders and also makes the related concepts much more elegant.

With necessary discussions of the basic concepts of $\FQ$-preordered $\FQ$-subsets in Section~\ref{Preordered-fuzzy-sets}, we put our emphasis on $\FQ$-Galois connections in Section~\ref{Fuzzy-Galois-connections}. The notion of Galois connections between preordered sets \cite{Birkhoff1967,Herrlich1990,Ore1944} has been extended to the fuzzy setting by B{\v e}lohl{\'a}vek since 1999 \cite{Bvelohlavek1999,Bvelohlavek2002,Bvelohlavek2004}, and in the subsequent works \cite{Georgescu2003,Georgescu2004,GutierrezGarcia2010} of other authors fuzzy Galois connections have been considered in a non-commutative world. More precisely, B{\v e}lohl{\'a}vek's fuzzy Galois connections are $\FQ$-Galois connections between $\FQ$-preordered (crisp) sets, whose prototypes are adjoint functors between quantale-enriched categories \cite{GutierrezGarcia2010,Kelly1982,Lai2007,Wagner1994}. In this paper, based on the notion of adjoint functor between \emph{quantaloid}-enriched categories \cite{Lai2017,Stubbe2005,Stubbe2014}, we extend the realm of fuzzy Galois connections further to {\protect\linebreak}$\FQ$-Galois connections between $\FQ$-preordered $\FQ$-subsets. Motivated by several key results from the theory of quantaloid-enriched categories \cite{Shen2014,Shen2013a,Stubbe2005,Stubbe2006}, we carefully exhibit the interactions of $\FQ$-Galois connections with
\begin{itemize}
\item the completeness of $\FQ$-preordered $\FQ$-subsets,
\item the preservation of suprema, infima and (co)tensors, and
\item $\FQ$-distributors between $\FQ$-preordered $\FQ$-subsets (i.e., $\FQ$-relations that are compatible with the $\FQ$-preorder structures).
\end{itemize}
In particular, we propose a conceptual definition of \emph{$\FQ$-polarities} and (dual) \emph{$\FQ$-axialities}, following the terminologies in \cite{Birkhoff1967,Erne1993a,GutierrezGarcia2010}, as $\FQ$-Galois connections between (dual) $\FQ$-powersets of $\FQ$-preordered $\FQ$-subsets, and their bijective correspondences with $\FQ$-distributors are established.

Without assuming any a-priori background by the readers on quantaloid-enriched categories, this paper is intended to develop all needed ingredients purely in order-theoretic languages, though implicitly under the guidance of enriched category theory and occasionally with remarks pointing out their pivotal links to the categorical concepts. As shall be seen, the implementation of the $\FQ$-relational calculus, our key method that was not usually adopted in the literature, not only presents the theory of $\FQ$-preordered $\FQ$-subsets in a succinct way, but also unveils the conceptual nature of the related notions.

\section{The calculus of fuzzy relations} \label{Fuzzy_Relations}

A \emph{unital quantale} is a triple $(\FQ,\with,e)$ consisting of a complete lattice $\FQ$, an element $e\in\FQ$ and a binary operation $\with$ on $\FQ$, such that
\begin{enumerate}[label={\rm(\arabic*)}]
\item $(\FQ,\with,e)$ is a monoid with $e$ being the unit;
\item $p\with(\bv\limits_{i\in I}q_i)=\bv\limits_{i\in I}(p\with q_i)$ and $(\bv\limits_{i\in I}p_i)\with q=\bv\limits_{i\in I}(p_i\with q)$ for all $p,p_i,q,q_i\in\FQ$ $(i\in I)$.
\end{enumerate}
The corresponding Galois connections
$\bfig
\morphism/@{->}@<4pt>/<500,0>[\FQ`\FQ;-\with q]
\morphism(500,0)|b|/@{->}@<4pt>/<-500,0>[\FQ`\FQ;-\ldd q]
\place(250,5)[\mbox{\footnotesize{$\bot$}}]
\efig$ and $\bfig
\morphism/@{->}@<4pt>/<500,0>[\FQ`\FQ;p\with -]
\morphism(500,0)|b|/@{->}@<4pt>/<-500,0>[\FQ`\FQ;p\rdd -]
\place(250,5)[\mbox{\footnotesize{$\bot$}}]
\efig$ induced by the monoid multiplications satisfy
$$p\with q\leq r\iff p\leq r\ldd q\iff q\leq p\rdd r$$
for all $p,q,r\in\FQ$, where the operations $\ldd$, $\rdd$ are called \emph{left} and \emph{right implications} in $\FQ$, respectively.

Throughout this paper, we let $\FQ=(\FQ,\with,e)$ be a \emph{non-trivial} unital quantale; that is, the bottom element $\bot<e$ in $\FQ$. We say that $\FQ$ is \emph{integral} if $e=\top$, the top element of $\FQ$. $\FQ$ is \emph{commutative} if $p\with q=q\with p$ for all $p,q\in\FQ$, in which case we write $p\ra q$ for $q\ldd p=p\rdd q$.

Taking $\FQ$ as the table of truth-values, a \emph{$\FQ$-subset} (or, \emph{fuzzy set}) is a pair $(X,|\text{-}|_X)$, where $X$ is a crisp set and
$$|\text{-}|_X\colon X\to\FQ$$
is a map, with the value $|x|_X$ interpreted as the \emph{membership degree} of each $x$ in $X$. For the simplicity of notations, in the following we just write $|\text{-}|$ for $|\text{-}|_X$ and $X$ for a $\FQ$-subset $(X,|\text{-}|)$ if no confusion arises, which is always assumed to be equipped with a membership map $|\text{-}|\colon X\to\FQ$. The slice category $\Set/\FQ$ has $\FQ$-subsets as objects, and \emph{membership-preserving} maps $f\colon X\to Y$ between $\FQ$-subsets, i.e., maps $f\colon X\to Y$ with
$$|x|=|fx|$$
for all $x\in X$, as morphisms.

Given an element $q\in\FQ$, following the terminologies in \cite{Hoehle2015}, we say that
\begin{enumerate}[label={\rm(\arabic*)}]
\item $u\in\FQ$ is \emph{left-divisible by $q$} if there exists $p\in\FQ$ such that $q\with p=u$ or, equivalently, if $q\with(q\rdd u)=u$;
\item $u\in\FQ$ is \emph{right-divisible by $q$} if there exists $p\in\FQ$ such that $p\with q=u$ or, equivalently, if $(u\ldd q)\with q=u$.
\end{enumerate}
For any $p,q\in\FQ$, we denote by
$$\DQ(p,q)=\{u\in\FQ\mid (u\ldd p)\with p=u=q\with(q\rdd u)\}$$
the set of elements in $\FQ$ that are simultaneously right-divisible by $p$ and left-divisible by $q$. The quantale $\FQ$ is \emph{divisible} if, whenever $u\le q$ in $\FQ$, $u$ is both left- and right-divisible by $q$, i.e., $q\with(q\rdd u)=u=(u\ldd q)\with q$. It is easy to observe the following facts:

\begin{lem} \label{DQ-elements}
Let $(\FQ,\with,e)$ be a unital quantale and $p,q\in\FQ$. Then{\textup:}
\begin{enumerate}[label={\rm(\arabic*)}]
\item \label{DQ-elements:bot-q}
    $\DQ(\bot,q)=\DQ(q,\bot)=\{\bot\}$.
\item \label{DQ-elements:e-e}
    $\DQ(e,e)=\FQ$.
\item \label{DQ-elements:q-q}
    $q\in\DQ(q,q)$.
\item \label{DQ-elements:bot-p-q}
    $\bot\in\DQ(p,q)$.
\item \label{DQ-elements:e-top}
   $e\in\DQ(\top,\top)$ if, and only if, $\FQ$ is integral.
\end{enumerate}
Moreover,
\begin{enumerate}[label={\rm(\arabic*)},start=6]
\item \label{DQ-elements:integral}
    $\FQ$ is integral if, and only if, $\DQ(p,q)\subseteq \{u\in\FQ\mid u\le p\wedge q\}$ for all $p,q\in\FQ$.
\item \label{DQ-elements:divisible}
    $\FQ$ is divisible if, and only if, $\DQ(p,q)= \{u\in\FQ\mid u\le p\wedge q\}$ for all $p,q\in\FQ$.
\end{enumerate}
\end{lem}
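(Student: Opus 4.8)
The plan is to exploit two elementary facts repeatedly. First, the Galois connections recorded in the paper give the counit inequalities $(u\ldd p)\with p\leq u$ and $q\with(q\rdd u)\leq u$ for all $p,q,u\in\FQ$; consequently membership $u\in\DQ(p,q)$ is equivalent to the two \emph{reverse} inequalities $u\leq(u\ldd p)\with p$ and $u\leq q\with(q\rdd u)$, and I will only ever need to verify these. Second, I will use the absorption laws $\bot\with x=\bot=x\with\bot$ (from distributivity over the empty join) and the unit laws $e\with x=x=x\with e$. The logical skeleton is that items \ref{DQ-elements:bot-q}--\ref{DQ-elements:bot-p-q} are direct, item \ref{DQ-elements:e-top} is the crux, and items \ref{DQ-elements:integral} and \ref{DQ-elements:divisible} are then assembled from \ref{DQ-elements:e-e} and \ref{DQ-elements:e-top}.

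For \ref{DQ-elements:bot-q}, right-divisibility by $\bot$ forces $u=(u\ldd\bot)\with\bot=\bot$ by absorption, and $\bot$ itself lies in every $\DQ(p,q)$, which simultaneously disposes of \ref{DQ-elements:bot-p-q}; the case $\DQ(q,\bot)$ is symmetric. For \ref{DQ-elements:e-e}, the computations $u\ldd e=u$ and $e\rdd u=u$ reduce both defining equations to the unit law, so $\DQ(e,e)=\FQ$. For \ref{DQ-elements:q-q}, the adjunctions give $e\leq q\ldd q$ and $e\leq q\rdd q$, whence $(q\ldd q)\with q\geq e\with q=q$ and $q\with(q\rdd q)\geq q$, supplying the two reverse inequalities.

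The heart of the lemma is \ref{DQ-elements:e-top}. If $\FQ$ is integral then $\top=e$ and the claim collapses to \ref{DQ-elements:e-e}. The converse is where the only real manoeuvre appears: from $e\in\DQ(\top,\top)$ I take the left-divisibility equation $\top\with(\top\rdd e)=e$, abbreviate $y=\top\rdd e$, and compute
$$\top=\top\with e=\top\with(\top\with y)=(\top\with\top)\with y\leq\top\with y=e,$$
using the unit law, associativity, and $\top\with\top\leq\top$; together with $e\leq\top$ this gives $e=\top$. I expect this short chain to be the main obstacle, in the sense that it is the one step that is not a mechanical unwinding of definitions and that genuinely detects integrality.

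Finally, I would bootstrap the last two items. For \ref{DQ-elements:integral}, if $\FQ$ is integral then every element is $\leq e=\top$, so $(u\ldd p)\with p\leq e\with p=p$ and dually $q\with(q\rdd u)\leq q$, forcing $u\leq p\wedge q$ for $u\in\DQ(p,q)$; for the converse I instantiate the hypothesised containment at $p=q=e$ and invoke $\DQ(e,e)=\FQ$ from \ref{DQ-elements:e-e}, which yields $\FQ\subseteq\{u\mid u\leq e\}$ and hence $\top\leq e$. For \ref{DQ-elements:divisible}, divisibility first forces integrality (since $e\leq\top$ makes $e$ both left- and right-divisible by $\top$, i.e.\ $e\in\DQ(\top,\top)$, so \ref{DQ-elements:e-top} applies), whence \ref{DQ-elements:integral} gives the inclusion $\subseteq$, while the opposite inclusion is exactly the content of divisibility applied to $u\leq p$ and $u\leq q$; conversely, assuming the stated equality and instantiating it at $p=q$ turns $\DQ(q,q)=\{u\mid u\leq q\}$ directly into the definition of divisibility.
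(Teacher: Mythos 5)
Your proof is correct; the paper states this lemma without proof (``It is easy to observe the following facts''), and your direct verification---reducing membership in $\DQ(p,q)$ to the two reverse counit inequalities, settling the nontrivial converse of \ref{DQ-elements:e-top} via $\top=\top\with e=(\top\with\top)\with(\top\rdd e)\leq\top\with(\top\rdd e)=e$, and then bootstrapping \ref{DQ-elements:integral} and \ref{DQ-elements:divisible} from \ref{DQ-elements:e-e} and \ref{DQ-elements:e-top} (including the observation that divisibility yields $e\in\DQ(\top,\top)$ and hence integrality, which matches the paper's remark that divisible quantales are integral)---is precisely the routine computation the authors intend. No step is missing or incorrect.
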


\begin{exmps} \label{Q-exmp} \
\begin{enumerate}[label=(\arabic*)]
\item \label{Q-exmp:c-d}
    (Commutative and divisible quantales) Every frame is a divisible, commutative and idempotent quantale, and vice versa; in particular, so is $\FQ=\boldsymbol{2}$, the two-element Boolean algebra.

    A binary operation $\with$ on the unit interval $[0,1]$ defines a continuous (resp. left-continuous) t-norm on $[0,1]$ if, and only if, $([0,1],\with,1)$ is a commutative and divisible (resp. integral) quantale. In particular, $[0,1]$ equipped with the minimum, the product, or the {\L}ukasiewicz t-norm is a commutative and divisible quantale.

    Lawvere's quantale \cite{Lawvere1973} $([0,\infty]^{\op},+,0)$, where $[0,\infty]^{\op}$ is the extended non-negative real line equipped with the order ``$\geq$'', is commutative and divisible, in which implications are given by
    $$p\ra q=\max\{0,q-p\}.$$
    Indeed, Lawvere's quantale is isomorphic to the quantale $[0,1]$ equipped with the product t-norm.
\item \label{Q-exmp:c-ni}
    (Commutative and non-integral quantales) On the three-chain $C_3=\{\bot,e,\top\}$ we have the commutative unital quantale $(C_3,\with,e)$, with
    $$\top\with\top=\top\ra\top=\top,\quad\top\ra\bot=\top\ra e=\bot$$
    and the other multiplications\,/\,implications being trivial. It follows immediately from Lemma~\ref{DQ-elements} that
    \begin{align*}
    &\CD C_3(\bot,\bot)=\CD C_3(\bot,e)=\CD C_3(\bot,\top)=\CD C_3(e,\bot)=\CD C_3(\top,\bot)=\{\bot\},\\
    &\CD C_3(\top,\top)=\CD C_3(e,\top)=\CD C_3(\top,e)=\{\bot,\top\}\quad\text{and}\quad\CD C_3(e,e)=\{\bot,e,\top\}.
    \end{align*}
    In fact, this quantale is universal among non-integral unital quantales in the sense that in any non-integral unital quantale $\FQ$, the elements $\{\bot,e,\top\}\subseteq\FQ$ form a subquantale that is isomorphic to $(C_3,\with,e)$. For instance, one may embed $(C_3,\with,e)$ into the quantale $\FQ=([0,\infty],\cdot,1)$, whose underlying complete lattice is $[0,\infty]$ equipped with the usual order ``$\leq$'', and whose multiplication is given by the multiplication ``$\cdot$'' of real numbers (under the assumption $0\cdot\infty=0$), with implications given by
    $$p\ra q=\frac{q}{p}.$$
    $([0,\infty],\cdot,1)$ is obviously a commutative and non-integral quantale, and it is not difficult to see that
    $$\DQ(p,q)=\begin{cases}
    \{0\}& \text{if}\ p\wedge q=0,\\
    [0,\infty]& \text{if}\ 0<p, q<\infty,\\
    \{0,\infty\}& \text{if}\ p\vee q=\infty.
    \end{cases}$$
\item \label{Q-exmp:nc-ni}
    (Non-commutative and non-integral quantales) For each complete lattice $L$ with at least two elements, the complete lattice $\Sup(L,L)$ of all $\sup$-preserving maps on $L$ carries a non-trivial and non-commutative quantale structure $(\Sup(L,L),\cdot,1_L)$, where $\cdot$ is the composition of maps, and $1_L$ is the identity map on $L$. Moreover, $(\Sup(L,L),\cdot,1_L)$ is non-integral if, and only if, $L$ contains at least three elements.

    Each non-empty set $X$ gives rise to a non-trivial and non-commutative quantale $(\Rel(X,X),\circ,\id_X)$, where $\Rel(X,X)={\bf 2}^{X\times X}$ is the complete lattice of all relations on $X$, $\circ$ is the composition of relations, and $\id_X=\{(x,x)\mid x\in X\}$ is the identity relation on $X$. Moreover, $\Rel(X,X)$ is non-integral if, and only if, $X$ contains at least two elements.

    Each monoid $M=(M,\with,e)$ induces a \emph{free quantale} $\FQ=({\bf 2}^M,\with,\{e\})$ with $A\with B=\{a\with b\mid a\in A,\,b\in B\}$ for all $A,B\subseteq M$. Clearly, $\FQ$ is non-commutative if and only if so is $M$, while $\FQ$ is non-integral if and only if $M$ contains at least two elements.
\item \label{Q-exmp:nc-nd-i}
    (Non-commutative and non-divisible integral quantales) For each complete chain $L$ with at least three elements, the subset
    $$\Sup(L,L)_{\leq 1_L}=\{f\in\Sup(L,L)\mid f\leq 1_L\}$$
    forms a subquantale of $(\Sup(L,L),\cdot,1_L)$ that is non-commutative, non-divisible and integral. Indeed, in order to see that $(\Sup(L,L)_{\leq 1_L},\cdot,1_L)$ is non-divisible, let $x_0\in L$ with $\bot<x_0<\top$, and let $f,g\in\Sup(L,L)_{\leq 1_L}$ be given by
    $$fx=\begin{cases}
    \bot & \text{if}\ x\leq x_0,\\
    x_0 & \text{if}\ x>x_0
    \end{cases}\quad\text{and}\quad gx=\begin{cases}
    \bot & \text{if}\ x\leq x_0,\\
    x & \text{if}\ x>x_0
    \end{cases}$$
    for all $x\in L$. Then $f<g$, but obviously there exists no $h\in\Sup(L,L)_{\leq 1_L}$ with $f=g\cdot h$.

    In particular, when $L=C_3=\{\bot,e,\top\}$ is the three-chain, it is not difficult to see that $\Sup(C_3,C_3)_{\leq 1_{C_3}}$ is the following complete lattice:
    $$\bfig
    \morphism/-/<0,300>[\circ`\circ;]
    \morphism(0,300)/-/<-300,200>[\circ`\circ;]
    \morphism(0,300)/-/<300,200>[\circ`\circ;]
    \morphism(-300,500)/-/<300,200>[\circ`\circ;]
    \morphism(300,500)/-/<-300,200>[\circ`\circ;]
    \place(0,-80)[\bot=(\bot,\bot,\bot)]
    \place(-300,250)[a=(\bot,\bot,e)]
    \place(-630,500)[b=(\bot,\bot,\top)]
    \place(630,500)[c=(\bot,e,e)]
    \place(0,780)[\top=(\bot,e,\top)]
    \efig$$
    In fact, the subquantale on the four-chain $C_4=\{\bot,a,b,\top\}\subseteq\Sup(C_3,C_3)_{\leq 1_{C_3}}$ is also non-commutative, non-divisible and integral, which is the simplest complete lattice that can be endowed with such quantale structures. Explicitly, multiplications in the quantale $(C_4,\cdot,\top)$ are given by
    $$a\cdot a=b\cdot a=\bot,\quad a\cdot b=a\quad\text{and}\quad b\cdot b=b,$$
    and $(C_4,\cdot,\top)$ is non-divisible since $a$ is not right-divisible by $b$, although $a<b$ holds.

\end{enumerate}
\end{exmps}

\begin{defn} \label{Q-relation}
A \emph{$\FQ$-relation} (or, \emph{fuzzy relation}) $\phi\colon X\rto Y$ between $\FQ$-subsets is a map $\phi\colon X\times Y\to\FQ$ such that $\phi(x,y)\in\DQ(|x|,|y|)$, i.e.,
\begin{equation} \label{QFRel-def}
(\phi(x,y)\ldd|x|)\with|x|=\phi(x,y)=|y|\with(|y|\rdd\phi(x,y)),
\end{equation}
for all $x\in X$ and $y\in Y$.
\end{defn}

Note that when $\FQ=\boldsymbol{2}$, the two-element Boolean algebra, $\phi\colon X\rto Y$ in Definition~\ref{Q-relation} reduces to a ``binary relation between (crisp) subsets $A\subseteq X$ and $B\subseteq Y$''. In other words, $\phi\colon X\rto Y$ is actually a \emph{partially defined relation} from $X$ to $Y$.

As for a general $\FQ$, Lemma~\ref{DQ-elements}\,\ref{DQ-elements:bot-q} forces $\phi(x,y)=\bot$ in Equation~\eqref{QFRel-def} whenever $|x|=\bot$ or $|y|=\bot$. Hence, with the value $\phi(x,y)$ of a $\FQ$-relation $\phi\colon X\rto Y$ interpreted as the degree of $x$ and $y$ being related, Equation \eqref{QFRel-def} can be understood as a many-valued reformulation of ``$x$ and $y$ are related only if $x$ is in the $\FQ$-subset $(X,|\text{-}|)$ of $X$ and $y$ is in the $\FQ$-subset $(Y,|\text{-}|)$ of $Y$''.

Recall that a $\FQ$-relation between (crisp) sets is nothing but a map $\phi\colon X\times Y\to\FQ$. Since every (crisp) set $X$ can be regarded as a $\FQ$-subset in which $|x|=e$ for all $x\in X$, the following diagram illustrates the chain of generalization from ``binary relations between (crisp) sets'' to ``$\FQ$-relations between $\FQ$-subsets'':
\tikzset{
box/.style={rectangle,
minimum width=30pt, minimum height=15pt, inner sep=6pt,draw}
}
\begin{equation} \label{fuzzy-relation-generalization}
\begin{tikzpicture}
\node[box] (R) at (0,0) {Binary relations between (crisp) sets};
\node[box] (RS) at (-4,1.5) {Binary relations between (crisp) subsets};
\node[box] (Q) at (4,1.5) {$\FQ$-relations between (crisp) sets};
\node[box] (QS) at (0,3) {$\FQ$-relations between (crisp) subsets};
\node[box] (RQS) at (0,4.5) {$\FQ$-relations between $\FQ$-subsets};
\draw[->] (R)--(RS);
\draw[->] (R)--(Q);
\draw[->] (RS)--(QS);
\draw[->] (Q)--(QS);
\draw[->] (QS)--(RQS);
\end{tikzpicture}
\end{equation}

Explicitly, for any $\FQ$-relation $\phi\colon X\rto Y$ between $\FQ$-subsets:
\begin{enumerate}[label=(\arabic*)]
\item If $|x|=|y|=e$ and $\phi(x,y)\in\{\bot,e\}$ for all $x\in X$ and $y\in Y$, then $\phi$ can be identified with the binary relation $R_{\phi}$ between (crisp) sets $X$ and $Y$, given by $x\mathrel{R_{\phi}}y\iff\phi(x,y)=e$ for all $x\in X$ and $y\in Y$.
\item If $|x|,|y|,\phi(x,y)\in\{\bot,e\}$ for all $x\in X$ and $y\in Y$, then $\phi$ can be identified with the binary relation $R_{\phi}$ between (crisp) subsets
    $$\supp X=\{x\in X\mid |x|\neq\bot\}\subseteq X\quad\text{and}\quad\supp Y=\{y\in Y\mid |y|\neq\bot\}\subseteq Y,$$
    given by $x\mathrel{R_{\phi}} y\iff \phi(x,y)=e$ for all $x\in\supp X$ and $y\in\supp Y$.
\item If $|x|=|y|=e$ for all $x\in X$ and $y\in Y$, then Lemma~\ref{DQ-elements}\,\ref{DQ-elements:e-e} shows that $\DQ(|x|,|y|)=\FQ$ for all $x\in X$ and $y\in Y$, and hence $\phi$ is just a map $\phi\colon X\times Y\to\FQ$; that is, a $\FQ$-relation between (crisp) sets $X$ and $Y$.
\item If $|x|,|y|\in\{\bot,e\}$ and $\phi(x,y)\in\FQ$ for all $x\in X$ and $y\in Y$, by Lemma~\ref{DQ-elements}\,\ref{DQ-elements:bot-q}\,\ref{DQ-elements:e-e} one sees that $\DQ(|x|,|y|)=\FQ$ if $x\in\supp X$, $y\in\supp Y$ and $\DQ(|x|,|y|)=\{\bot\}$ otherwise; hence, $\phi$ can be identified with a $\FQ$-relation between (crisp) subsets $\supp X\subseteq X$ and $\supp Y\subseteq Y$.
\end{enumerate}

\begin{exmps} \label{Q-relation-exmp} \
\begin{enumerate}[label=(\arabic*)]
\item \label{Q-relation-exmp:id}
    For any $\FQ$-subset $X$, the map $\id_X\colon X\times X\to\FQ$ with
    $$\id_X(x,x')=\begin{cases}
    |x| & \text{if}\ x=x',\\
    \bot & \text{else}
    \end{cases}$$
    defines a $\FQ$-relation $\id_X\colon X\rto X$ (by Lemma~\ref{DQ-elements}\,\ref{DQ-elements:q-q}\,\ref{DQ-elements:bot-p-q}), called the \emph{identity $\FQ$-relation} on $X$.
\item \label{Q-relation-exmp:DQ}
    For any $q\in\FQ$, let $\boldsymbol{1}_q$ denote the singleton $\FQ$-subset $\{*\}$ with $|*|=q$. Then for any $p,q\in\FQ$, each $u\in\DQ(p,q)$ can be regarded as a $\FQ$-relation $u\colon\boldsymbol{1}_p\rto\boldsymbol{1}_q$ with $u(*,*)=u$. In other words, there are as many $\FQ$-relations $\boldsymbol{1}_p\rto\boldsymbol{1}_q$ as elements in $\DQ(p,q)$.
\item \label{Q-relation-exmp:restriction}
    For any $\FQ$-relation $\phi\colon X\rto Y$, $x\in X$ and $y\in Y$, the maps $\phi(x,-)\colon\boldsymbol{1}_{|x|}\times Y\to\FQ$ and ${\phi(-,y)\colon X\times \boldsymbol{1}_{|y|}\to\FQ}$ given by
    $$\phi(x,-)(*,y)=\phi(x,y)\quad\text{and}\quad\phi(-,y)(x,*)=\phi(x,y)$$
    define $\FQ$-relations
    $$\phi(x,-)\colon\boldsymbol{1}_{|x|}\rto Y\quad\text{and}\quad\phi(-,y)\colon X\rto\boldsymbol{1}_{|y|}.$$
    In particular, $\phi(x,y)$ can be regarded as a $\FQ$-relation $\phi(x,y)\colon \boldsymbol{1}_{|x|}\rto\boldsymbol{1}_{|y|}$, which is a special case of \ref{Q-relation-exmp:DQ}.
\end{enumerate}
\end{exmps}

For $\FQ$-relations $\phi\colon X\rto Y$ and $\psi\colon Y\rto Z$, it is straightforward to check that the map $\psi\comp\phi\colon X\times Z\to\FQ$ with
\begin{align}
(\psi\comp\phi)(x,z)&=\bv_{y\in Y}(\psi(y,z)\ldd|y|)\with|y|\with(|y|\rdd\phi(x,y))\nonumber\\
&=\bv_{y\in Y}(\psi(y,z)\ldd|y|)\with\phi(x,y)\nonumber\\
&=\bv_{y\in Y}\psi(y,z)\with(|y|\rdd\phi(x,y)) \label{psi-circ-phi}
\end{align}
defines a $\FQ$-relation
$$\psi\comp\phi\colon X\rto Z,$$
called the \emph{composition} of $\psi$ and $\phi$. \eqref{psi-circ-phi} can be interpreted as a many-valued version of ``$x$ and $z$ are related if, and only if, there exists $y$ in the $\FQ$-subset $(Y,|\text{-}|)$ of $Y$ such that $x$ and $y$ are related, $y$ and $z$ are related''.

\begin{prop} \label{QFRel-comp}
Let $\phi,\phi',\phi_i\colon X\rto Y$, $\psi,\psi_i\colon Y\rto Z$, $\xi\colon Z\rto W$ $(i\in I)$ be $\FQ$-relations between $\FQ$-subsets.
\begin{enumerate}[label={\rm(\arabic*)}]
\item $\xi\comp(\psi\comp\phi)=(\xi\comp\psi)\comp\phi$.
\item $\id_Y\comp\phi=\phi=\phi\comp\id_X$.
\item \label{QFRel-comp:join}
    With the pointwise order inherited from $\FQ$, i.e.,
    $$\phi\leq\phi'\iff\forall x\in X,\ y\in Y\colon\phi(x,y)\leq\phi'(x,y),$$
    $\FQ$-relations from $X$ to $Y$ form a complete lattice $\QFRel(X,Y)$. Moreover, it holds that
    $$\psi\comp\Big(\bv\limits_{i\in I}\phi_i\Big)=\bv\limits_{i\in I}(\psi\comp\phi_i)\quad\text{and}\quad\Big(\bv\limits_{i\in I}\psi_i\Big)\comp\phi=\bv\limits_{i\in I}(\psi_i\comp\phi).$$
\end{enumerate}
\end{prop}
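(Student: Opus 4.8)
The three parts are all verifications, and the plan is to reduce each to the quantale axioms together with the divisibility conditions \eqref{QFRel-def}. The one genuine subtlety, which dictates the whole approach, is that the implications $\ldd$ and $\rdd$ are right adjoints and therefore do \emph{not} preserve joins in the arguments that matter here; one cannot freely pull a supremum out of an implication. The remedy is to exploit the three equivalent expressions for composition recorded in \eqref{psi-circ-phi}, choosing for each computation the form in which the relation being varied appears \emph{bare} (i.e.\ not underneath an implication), so that only the honest distributivity laws $p\with(\bv_i q_i)=\bv_i(p\with q_i)$ and $(\bv_i p_i)\with q=\bv_i(p_i\with q)$ are ever invoked.

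For associativity (1), I would show that both sides agree with the single symmetric double-supremum
$$\bv_{y,z}(\xi(z,w)\ldd|z|)\with(\psi(y,z)\ldd|y|)\with\phi(x,y).$$
Evaluating $\xi\comp(\psi\comp\phi)$ at $(x,w)$ via the middle form of \eqref{psi-circ-phi} puts the inner composite $\psi\comp\phi$ in the bare slot, so one substitutes its middle form and uses distributivity of $\with$ over $\bv$ to reach the displayed expression. For $(\xi\comp\psi)\comp\phi$ I would instead use the \emph{third} form of \eqref{psi-circ-phi}, which puts $\xi\comp\psi$ in the bare slot; substituting its middle form and then applying, termwise, the identity $(\psi(y,z)\ldd|y|)\with\phi(x,y)=\psi(y,z)\with(|y|\rdd\phi(x,y))$ (the very equivalence underlying \eqref{psi-circ-phi}) lands on the same double-supremum.

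For the identity law (2) the point is again to pick the form that both kills the off-diagonal terms and exposes a divisibility condition on the diagonal. Computing $\id_Y\comp\phi$ through the third form of \eqref{psi-circ-phi}, the value $\id_Y(y,y')$ sits in the bare slot, so every term with $y\neq y'$ carries a factor $\id_Y(y,y')=\bot$ and vanishes, while the surviving term $|y'|\with(|y'|\rdd\phi(x,y'))$ equals $\phi(x,y')$ by the left-divisibility half of \eqref{QFRel-def}. Symmetrically, computing $\phi\comp\id_X$ through the middle form leaves only $(\phi(x',y)\ldd|x'|)\with|x'|=\phi(x',y)$, now by right-divisibility.

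For (3), I would first observe that $\DQ(p,q)$ is closed under arbitrary joins in $\FQ$: the elements right-divisible by $p$ are exactly the image of the join-preserving map $-\with p$ and hence form a join-closed set, likewise the elements left-divisible by $q$ form the image of $q\with-$, and $\DQ(p,q)$ is their intersection. Thus the pointwise supremum of $\FQ$-relations is again a $\FQ$-relation, and since the constant map $\bot$ is a $\FQ$-relation by Lemma~\ref{DQ-elements}\,\ref{DQ-elements:bot-p-q}, $\QFRel(X,Y)$ is closed under arbitrary pointwise suprema and contains its least element, hence is a complete lattice. Finally, the two distributivity laws follow by the same form-selection trick: for $\psi\comp(\bv_i\phi_i)$ use the middle form, in which $\phi_i$ is bare, and for $(\bv_i\psi_i)\comp\phi$ use the third form, in which $\psi_i$ is bare; in each case only quantale distributivity and commutativity of suprema are needed. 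The main obstacle throughout is precisely the non-preservation of joins by the implications, and the whole proof is organized so that every supremum is extracted only through the genuine distributivity of $\with$.
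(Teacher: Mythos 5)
Your proof is correct; the paper states Proposition~\ref{QFRel-comp} without proof, treating it as a routine verification, and your argument is exactly the intended one: evaluate each composite via whichever of the three equal expressions in \eqref{psi-circ-phi} leaves the varying relation outside the implications $\ldd$, $\rdd$ (which, being right adjoints, do not preserve the relevant joins), so that only the join-distributivity of $\with$ is ever used. Your one substantive addition --- that $\DQ(p,q)$ is closed under arbitrary joins in $\FQ$, being the intersection of the images of the $\sup$-preserving maps $-\with p$ and $q\with-$ --- is precisely what makes the pointwise suprema in part (3) well defined, and you handle it correctly.
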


Proposition~\ref{QFRel-comp}\,\ref{QFRel-comp:join} induces Galois connections
$$\bfig
\morphism/@{->}@<4pt>/<1000,0>[\QFRel(Y,Z)`\QFRel(X,Z);-\comp\phi]
\morphism(1000,0)|b|/@{->}@<4pt>/<-1000,0>[\QFRel(X,Z)`\QFRel(Y,Z);-\lda\phi]
\place(500,5)[\mbox{\footnotesize{$\bot$}}]
\efig
\quad\text{and}\quad
\bfig
\morphism/@{->}@<4pt>/<1000,0>[\QFRel(X,Y)`\QFRel(X,Z);\psi\comp -]
\morphism(1000,0)|b|/@{->}@<4pt>/<-1000,0>[\QFRel(X,Z)`\QFRel(X,Y);\psi\rda -]
\place(500,5)[\mbox{\footnotesize{$\bot$}}]
\efig$$
for all $\FQ$-relations $\phi\colon X\rto Y$ and $\psi\colon Y\rto Z$, where the operations $\lda$, $\rda$ are called \emph{left} and \emph{right implications} of $\FQ$-relations, respectively. Explicitly, for any $\xi\colon X\rto Z$, the implications $\xi\lda\phi$ and $\psi\rda\xi$ are given by
\begin{equation} \label{QFRel-imp}
\xi\lda\phi=\bv\{\psi'\colon Y\rto Z\mid\psi'\comp\phi\leq\xi\}\quad\text{and}\quad\psi\rda\xi=\bv\{\phi'\colon X\rto Y\mid\psi\comp\phi'\leq\xi\}.
\end{equation}

\begin{rem} \label{implications-Q-QFRel}
Given $p,q,r\in\FQ$, $u\in\DQ(p,q)$, $v\in\DQ(q,r)$ and $w\in\DQ(p,r)$, since $u$, $v$ and $w$ are themselves elements in $\FQ$, one could compute the implications
$$w\ldd u\quad\text{and}\quad v\rdd w$$
in $\FQ$. On the other hand, if we consider $u\colon\boldsymbol{1}_p\rto\boldsymbol{1}_q$, $v\colon\boldsymbol{1}_q\rto\boldsymbol{1}_r$, $w\colon\boldsymbol{1}_p\rto\boldsymbol{1}_r$ as $\FQ$-relations (see Example~\ref{Q-relation-exmp}\,\ref{Q-relation-exmp:DQ}),
then it is also possible to calculate the implications
$$w\lda u\colon\boldsymbol{1}_q\rto\boldsymbol{1}_r\quad\text{and}\quad v\rda w\colon\boldsymbol{1}_p\rto\boldsymbol{1}_q$$
of $\FQ$-relations. It is not difficult to see that
$$w\lda u=\bv\{v'\in\DQ(q,r)\mid v'\leq w\ldd(q\rdd u)\}\quad\text{and}\quad v\rda w=\bv\{u'\in\DQ(p,q)\mid u'\leq (v\ldd q)\rdd w\};$$
hence, in general $w\lda u\neq w\ldd u$ and $v\rda w\neq v\rdd w$. That is why we distinguish implications in $\FQ$ and those of $\FQ$-relations with different symbols.
\end{rem}

It is straightforward to verify the following calculus of $\FQ$-relations:

\begin{prop} \label{QFRel-comp-imp}
For $\FQ$-relations $\phi\colon X\rto Y$, $\psi\colon Y\rto Z$ and $\xi\colon X\rto Z$, it holds that
$$\psi\comp\phi=\bv_{y\in Y}\psi(y,-)\comp\phi(-,y),\quad\xi\lda\phi=\bw_{x\in X}\xi(x,-)\lda\phi(x,-)\quad\text{and}\quad\psi\rda\xi=\bw_{z\in Z}\psi(-,z)\rda\xi(-,z).$$
\end{prop}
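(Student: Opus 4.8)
The plan is to derive the first identity directly from the composition formula \eqref{psi-circ-phi}, and then to bootstrap the two implication identities from it by invoking the Galois connections in \eqref{QFRel-imp}.

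For the first identity I would fix $x\in X$, $z\in Z$ and compute the composite $\psi(y,-)\comp\phi(-,y)\colon X\rto Z$ through the singleton $\boldsymbol{1}_{|y|}$. Since the mediating $\FQ$-subset is a one-point set $\{*\}$ with $|*|=|y|$, the join in \eqref{psi-circ-phi} collapses to a single summand, giving $(\psi(y,-)\comp\phi(-,y))(x,z)=(\psi(y,z)\ldd|y|)\with\phi(x,y)$, which is exactly the $y$-th term in \eqref{psi-circ-phi}. Because joins in the complete lattice $\QFRel(X,Z)$ are computed pointwise (Proposition~\ref{QFRel-comp}\,\ref{QFRel-comp:join}), taking the join over $y\in Y$ of these relations reproduces $(\psi\comp\phi)(x,z)$ at every $(x,z)$, proving $\psi\comp\phi=\bv_{y}\psi(y,-)\comp\phi(-,y)$.

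The engine for the remaining two identities is a \emph{row/column decomposition} of composites, itself an immediate reading of \eqref{psi-circ-phi}: for $\psi'\colon Y\rto Z$ one has $(\psi'\comp\phi)(x,-)=\psi'\comp\phi(x,-)$ as $\FQ$-relations $\boldsymbol{1}_{|x|}\rto Z$, since the value at $(x,z)$ depends on $\phi$ only through its $x$-row $\phi(x,-)$ (cf. Example~\ref{Q-relation-exmp}\,\ref{Q-relation-exmp:restriction}). For the second identity I would then run the following chain of equivalences for an arbitrary $\psi'\colon Y\rto Z$: $\psi'\comp\phi\le\xi$ iff $(\psi'\comp\phi)(x,-)\le\xi(x,-)$ for every $x$ (the order being rowwise), iff $\psi'\comp\phi(x,-)\le\xi(x,-)$ for every $x$ (row decomposition), iff $\psi'\le\xi(x,-)\lda\phi(x,-)$ for every $x$ (the Galois connection $-\comp\phi(x,-)\dashv-\lda\phi(x,-)$ from \eqref{QFRel-imp}), iff $\psi'\le\bw_{x}\xi(x,-)\lda\phi(x,-)$ (meet in the complete lattice $\QFRel(Y,Z)$). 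Comparing with $\psi'\comp\phi\le\xi\iff\psi'\le\xi\lda\phi$, both $\xi\lda\phi$ and $\bw_{x}\xi(x,-)\lda\phi(x,-)$ have the same principal down-set in $\QFRel(Y,Z)$, hence coincide. The third identity is proved symmetrically, using the column decomposition $(\psi\comp\phi')(-,z)=\psi(-,z)\comp\phi'$ together with the adjunction $\psi\comp-\dashv\psi\rda-$.

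None of the steps is genuinely hard; the argument is entirely formal once the composition formula is in hand. The main thing to watch is the bookkeeping of which $\FQ$-subset plays the role of the mediating object in each composite, and which slot each Galois connection acts on. In particular, the implications $\lda$, $\rda$ here are those of $\FQ$-relations and must be handled through the adjunctions of Proposition~\ref{QFRel-comp}\,\ref{QFRel-comp:join}, not via the pointwise implications $\ldd$, $\rdd$ of $\FQ$ (cf. Remark~\ref{implications-Q-QFRel}); one should also confirm that all the joins and meets invoked exist, which is guaranteed by the complete-lattice structure of $\QFRel(-,-)$.
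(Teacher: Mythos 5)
Your proof is correct and matches the verification the paper has in mind: the paper states Proposition~\ref{QFRel-comp-imp} without proof (``It is straightforward to verify\dots''), and your argument is precisely that routine check, done carefully. The pointwise computation of $\psi(y,-)\comp\phi(-,y)$ through the singleton $\boldsymbol{1}_{|y|}$, the row/column decompositions $(\psi'\comp\phi)(x,-)=\psi'\comp\phi(x,-)$ and $(\psi\comp\phi')(-,z)=\psi(-,z)\comp\phi'$, and the identification of $\xi\lda\phi$ with $\bw_{x\in X}\xi(x,-)\lda\phi(x,-)$ via equality of principal down-sets under the Galois connections of Proposition~\ref{QFRel-comp}\,\ref{QFRel-comp:join} are all sound, including your (appropriate) care to use the implications of $\FQ$-relations rather than the pointwise $\ldd$, $\rdd$ of $\FQ$ as warned in Remark~\ref{implications-Q-QFRel}.
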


\begin{prop} \label{QFRel-cal}
The following formulas hold for all $\FQ$-relations $\phi$, $\phi_i$, $\psi$, $\psi_i$, $\xi,\xi_i$ $(i\in I)$ between $\FQ$-subsets whenever the compositions and implications make sense{\textup:}
\begin{enumerate}[label={\rm(\arabic*)}]
\item $\psi\comp\phi\leq\xi\iff\psi\leq\xi\lda\phi\iff\phi\leq\psi\rda\xi$.
\item $(\bw\limits_{i\in I}\xi_i)\lda\phi=\bw\limits_{i\in I}(\xi_i\lda\phi)$ and $\psi\rda(\bw\limits_{i\in I}\xi_i)=\bw\limits_{i\in I}(\psi\rda\xi_i)$.
\item $h\lda(\bv\limits_{i\in I}\phi_i)=\bw\limits_{i\in I}(\xi\lda\phi_i)$ and $(\bv\limits_{i\in I}\psi_i)\rda\xi=\bw\limits_{i\in I}(\psi_i\rda\xi)$.
\item $(\xi\lda\psi)\comp(\psi\lda\phi)\leq\xi\lda\phi$ and $(\phi\rda\psi)\comp(\psi\rda\xi)\leq\phi\rda\xi$.
\item $(\xi\lda\phi)\lda\psi=h\lda(\psi\comp\phi)$ and $\phi\rda(\psi\rda\xi)=(\psi\comp\phi)\rda\xi$.
\item $(\psi\rda\xi)\lda\phi=\psi\rda(\xi\lda\phi)$.
\item $(\xi\lda\phi)\comp\phi\leq\xi$ and $\psi\comp(\psi\rda\xi)\leq\xi$.
\item $\xi\comp(\psi\lda\phi)\leq(\xi\comp\psi)\lda\phi$ and $(\psi\rda\xi)\comp\phi\leq\psi\rda(\xi\comp\phi)$.
\end{enumerate}
\end{prop}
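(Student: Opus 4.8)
The plan is to treat statement (1) as the backbone and deduce everything else by formal residuation arguments, exactly as one does in any quantaloid. Indeed, the two chains of equivalences in (1) are nothing but the two Galois connections exhibited just after Proposition~\ref{QFRel-comp}: the adjunction $-\comp\phi\dashv-\lda\phi$ gives $\psi\comp\phi\leq\xi\iff\psi\leq\xi\lda\phi$, and $\psi\comp-\dashv\psi\rda-$ gives $\psi\comp\phi\leq\xi\iff\phi\leq\psi\rda\xi$. These in turn are produced by the join-preservation of composition, Proposition~\ref{QFRel-comp}\,\ref{QFRel-comp:join}, via the usual adjoint functor theorem for complete lattices; alternatively one checks them directly against the explicit formulas \eqref{QFRel-imp}. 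So I would state (1) first and regard it as established.

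Once (1) is in hand, I would derive (7) immediately as the counit: applying the first equivalence of (1) to the trivial inequality $\xi\lda\phi\leq\xi\lda\phi$ yields $(\xi\lda\phi)\comp\phi\leq\xi$, and dually $\psi\comp(\psi\rda\xi)\leq\xi$ from the second. Next, (2), (3), (5) and (6) all follow by the order-theoretic principle that in a poset $a=b$ as soon as $c\leq a\iff c\leq b$ for every test element $c$. For (3), I would test an arbitrary $\psi$ against $\xi\lda(\bv_i\phi_i)$, using (1), join-preservation of composition, and (1) again:
\[
\psi\leq\xi\lda(\bv_i\phi_i)\iff\bv_i(\psi\comp\phi_i)\leq\xi\iff\forall i\colon\psi\leq\xi\lda\phi_i\iff\psi\leq\bw_i(\xi\lda\phi_i).
\]
Statement (2) is analogous, expressing that the right adjoint $-\lda\phi$ preserves meets. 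For (5) and (6) the same method is used together with associativity: for (5), $\zeta\leq(\xi\lda\phi)\lda\psi\iff(\zeta\comp\psi)\comp\phi\leq\xi\iff\zeta\comp(\psi\comp\phi)\leq\xi\iff\zeta\leq\xi\lda(\psi\comp\phi)$, and for (6) one interleaves the two adjunctions, $\zeta\leq(\psi\rda\xi)\lda\phi\iff\psi\comp(\zeta\comp\phi)\leq\xi\iff(\psi\comp\zeta)\comp\phi\leq\xi\iff\zeta\leq\psi\rda(\xi\lda\phi)$.

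Finally (4) and (8) are the only inequalities that also invoke the counit (7). For (4), I would use the first equivalence of (1) to reduce $(\xi\lda\psi)\comp(\psi\lda\phi)\leq\xi\lda\phi$ to $((\xi\lda\psi)\comp(\psi\lda\phi))\comp\phi\leq\xi$, then reassociate and apply (7) twice together with monotonicity of $\comp$, since $(\psi\lda\phi)\comp\phi\leq\psi$ and $(\xi\lda\psi)\comp\psi\leq\xi$. For (8), I would reduce $\xi\comp(\psi\lda\phi)\leq(\xi\comp\psi)\lda\phi$ via (1) to $\xi\comp((\psi\lda\phi)\comp\phi)\leq\xi\comp\psi$ and finish with (7) and monotonicity. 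The $\rda$-halves of (4), (5), (7), (8) are obtained by the symmetric argument using the second Galois connection.

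The main obstacle: there is essentially no analytic difficulty here, since the substantive facts --- associativity, the unit laws, and distributivity of $\comp$ over joins --- are already packaged in Proposition~\ref{QFRel-comp}, so the present statement is pure formal calculus. The only point demanding care is bookkeeping: keeping the domains and codomains of $\phi$, $\psi$, $\xi$ consistent so that every composite and every implication is well-typed, and, at each step, invoking the correct one of the two Galois connections (left implication paired with $-\comp\phi$, right implication paired with $\psi\comp-$). I expect no genuine obstacle beyond this type discipline.
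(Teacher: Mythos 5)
Your proposal is correct and is exactly the routine residuation argument the paper has in mind: the paper offers no proof at all, introducing the proposition with ``It is straightforward to verify'' and noting afterwards that these formulas hold for morphisms in any quantaloid, with all the substance already packaged in the Galois connections following Proposition~\ref{QFRel-comp}. Your derivations of (2)--(8) from (1) via testing, associativity, join-preservation and the counit are all sound (and you rightly read the stray ``$h$'' in items (3) and (5) as a typo for $\xi$), so nothing further is needed.
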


Recall that an \emph{ordered category} \cite{Hofmann2014}, as a special kind of a \emph{$2$-category} \cite{MacLane1998}, is a category $\CC$ whose hom-sets $\CC(X,Y)$ are equipped with a preorder ``$\leq$'', such that
$$v\leq v'\implies w\comp v\comp u\leq w\comp v'\comp u$$
holds for all morphisms $u\colon X\to Y$, $v,v'\colon Y\to Z$ and $w\colon Z\to W$ in $\CC$.

Proposition~\ref{QFRel-comp} in fact shows that $\FQ$-subsets and $\FQ$-relations constitute an ordered category $\QFRel$ which, moreover, is a \emph{quantaloid} \cite{Rosenthal1996}. Explicitly, a quantaloid is a category $\CC$ in which every hom-set $\CC(X,Y)$ is a complete lattice, with
$$v\comp\Big(\bv_{i\in I}u_i\Big)=\bv_{i\in I}(v\comp u_i)\quad\text{and}\quad\Big(\bv_{i\in I}v_i\Big)\comp u=\bv_{i\in I}(v_i\comp u)$$
holding for all morphisms $u,u_i\colon X\to Y$ and $v,v_i\colon Y\to Z$ $(i\in I)$ in $\CC$. The properties of $\FQ$-relations presented in Proposition~\ref{QFRel-cal} are valid for morphisms in any quantaloid $\CC$.

%

\section{Preordered fuzzy sets valued in a quantale} \label{Preordered-fuzzy-sets}

\subsection{$\FQ$-preordered $\FQ$-subsets}

Let $\al\colon X\rto X$ be a $\FQ$-relation on a $\FQ$-subset $X$. Then
\begin{itemize}
\item $\al$ is \emph{reflexive} if $\id_X\leq\al$;
\item $\al$ is \emph{transitive} if $\al\comp\al\leq\al$.
\end{itemize}

\begin{defn} \label{Q-preorder}
A \emph{$\FQ$-preorder} on a $\FQ$-subset $X$ is a reflexive and transitive $\FQ$-relation $\al\colon X\rto X$. The pair $(X,\al)$ is called a \emph{$\FQ$-preordered $\FQ$-subset}.
\end{defn}

In elementary words, a map $\al\colon X\times X\to\FQ$ defines a $\FQ$-preorder on a $\FQ$-subset $X$ if
\begin{enumerate}[label=($\FQ$P\arabic*),leftmargin=3.8em]
\item \label{QPd}
    $(\al(x,y)\ldd |x|)\with |x|=\al(x,y)=|y|\with(|y|\rdd\al(x,y))$,
\item \label{QPr}
    $|x|\leq\al(x,x)$,
\item \label{QPt}
    $(\al(y,z)\ldd |y|)\with\al(x,y)=\al(y,z)\with(|y|\rdd\al(x,y))\leq\al(x,z)$
\end{enumerate}
for all $x,y,z\in X$. These conditions can be intuitively interpreted as:
\begin{enumerate}[label=(\arabic*)]
\item $x\leq y$ only if $x$ and $y$ are both in the $\FQ$-subset $(X,|\text{-}|)$;
\item if $x$ is in the $\FQ$-subset $(X,|\text{-}|)$, then $x\leq x$;
\item if there exists $y$ in the $\FQ$-subset $(X,|\text{-}|)$ such that $y\leq z$ and $x\leq y$, then $x\leq z$.
\end{enumerate}

\begin{rem} \label{QFOrd-CT}
A unital quantale $\FQ$ gives rise to a quantaloid $\DQ$ \cite{Hoehle2011a,Pu2012,Stubbe2014} with the following data:
\begin{enumerate}[label=(\arabic*)]
\item objects in $\DQ$ are elements of $\FQ$;
\item a morphism $u\colon p\to q$ in $\DQ$ is an element in $\FQ$ right-divisible by $p$ and left-divisible by $q$, i.e., $u\in\DQ(p,q)$;
\item the composition of $u\colon p\to q$ and $v\colon q\to r$ in $\DQ$ is given by
    $$v\comp u=(v\ldd q)\with q\with (q\rdd u)=(v\ldd q)\with u=v\with(q\rdd u);$$
\item the identity morphism on $q$ in $\DQ$ is $q\colon q\to q$.
\end{enumerate}
 The structure of the quantaloid $\DQ$ is extremely clear when $\FQ$ is divisible, in which case each hom-set $\DQ(p,q)$ is exactly the principal lower set generated by $p\wedge q$ (see Lemma~\ref{DQ-elements}\,\ref{DQ-elements:divisible}). From the viewpoint of enriched category theory, a $\FQ$-preordered $\FQ$-subset is precisely a category enriched in the quantaloid $\DQ$; we refer to \cite{Heymans2010,Rosenthal1996,Shen2016b,Stubbe2005,Stubbe2014} for the theory of quantaloid-enriched categories.
\end{rem}

Note that the same map $\al\colon X\times X\to\FQ$ can define $\FQ$-preorders on different $\FQ$-subsets over the same (crisp) set $X$. In particular, we have the following:

\begin{prop} \label{Q-Preorder-different-Q-subset}
Let $\al$ be a $\FQ$-preorder on a $\FQ$-subset $(X,|\text{-}|)$.
\begin{enumerate}[label={\rm(\arabic*)}]
\item \label{Q-Preorder-different-Q-subset:interpolate}
    If $(X,|\text{-}|')$ is another $\FQ$-subset with
    \begin{equation} \label{x-prime-alxx}
    |x|\leq|x|'\leq\al(x,x)
    \end{equation}
    for all $x\in X$, then $\al$ is also a $\FQ$-preorder on $(X,|\text{-}|')$.
\item \label{Q-Preorder-different-Q-subset:integral}
    If $\FQ$ is integral, then $\al(x,x)=|x|$ for each $x\in X$.
\end{enumerate}
\end{prop}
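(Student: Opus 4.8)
The plan is to verify directly that the same map $\al\colon X\times X\to\FQ$, when regarded on the new $\FQ$-subset $(X,|\text{-}|')$, satisfies the three defining conditions \ref{QPd}--\ref{QPt} (equivalently, that $\al$ is a reflexive and transitive $\FQ$-relation on $(X,|\text{-}|')$ in the sense of Definition~\ref{Q-preorder}). For part~\ref{Q-Preorder-different-Q-subset:interpolate}, the crucial observation is that the original divisibility condition \ref{QPd} tells us that each value $\al(x,y)$ is already right-divisible by $|x|$ and left-divisible by $|y|$. The heart of the matter is a lemma in elementary quantale arithmetic: if $u$ is right-divisible by $p$ and $p\leq p'$ with $p'\leq$ (something controlled by $u$), then $u$ remains right-divisible by $p'$. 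Concretely, I would first establish that for the given $\al$, one has $|x|\leq\al(x,x)$ and the off-diagonal values are squeezed appropriately; then show that replacing $|x|$ by any $|x|'$ with $|x|\leq|x|'\leq\al(x,x)$ preserves $\al(x,y)\in\DQ(|x|',|y|')$.

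First I would handle \ref{QPd}. Since $\al(x,y)$ is right-divisible by $|x|$, we have $(\al(x,y)\ldd|x|)\with|x|=\al(x,y)$. I want the same with $|x|'$ in place of $|x|$, i.e.\ $(\al(x,y)\ldd|x|')\with|x|'=\al(x,y)$. Using $|x|\leq|x|'$ together with the adjunction $p\with q\leq r\iff p\leq r\ldd q$, the key point is monotonicity of the implication and of $\with$, combined with the upper bound $|x|'\leq\al(x,x)$. The cleanest route is to invoke the composition-and-implication calculus of Proposition~\ref{QFRel-cal}: the inequalities there govern exactly how divisibility behaves under the quantale operations, and the bound $|x|'\leq\al(x,x)$ guarantees that $|x|'$ cannot ``overshoot'' the divisibility witness already present in $\al(x,x)$. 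Reflexivity \ref{QPr} is then immediate from $|x|'\leq\al(x,x)$, which is the right-hand inequality in \eqref{x-prime-alxx}. For transitivity \ref{QPt}, the composite $\al\comp\al$ computed via \eqref{psi-circ-phi} only involves the middle-term membership values $|y|$; replacing $|y|$ by $|y|'$ with $|y|\leq|y|'\leq\al(y,y)$ and using reflexivity $|y|'\leq\al(y,y)$ together with transitivity of the original $\al$ should yield $\al\comp'\al\leq\al$, where $\comp'$ denotes composition relative to $|\text{-}|'$. The expected obstacle is precisely this transitivity check, since the composition formula is sensitive to the middle membership degree, and one must verify that enlarging $|y|$ to $|y|'\leq\al(y,y)$ does not increase the supremum in \eqref{psi-circ-phi} beyond $\al(x,z)$; this is where the upper bound by $\al(y,y)$ does the real work, via the factorization $\al(x,y)=\al(y,y)\with(\cdots)$ implied by divisibility at the diagonal.

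Part~\ref{Q-Preorder-different-Q-subset:integral} is a short corollary. When $\FQ$ is integral, $e=\top$ is the top element, so every element is below $e$; in particular the divisibility condition \ref{QPd} forces $\al(x,x)\leq|x|$ by Lemma~\ref{DQ-elements}\,\ref{DQ-elements:integral} (which gives $\DQ(p,q)\subseteq\{u\mid u\le p\wedge q\}$, whence $\al(x,x)\in\DQ(|x|,|x|)$ yields $\al(x,x)\leq|x|$). Combined with reflexivity \ref{QPr}, namely $|x|\leq\al(x,x)$, we obtain $\al(x,x)=|x|$. I expect this part to require essentially no new computation once Lemma~\ref{DQ-elements}\,\ref{DQ-elements:integral} is cited.
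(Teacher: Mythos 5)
Your overall skeleton (direct verification of \ref{QPd}--\ref{QPt}, with \ref{QPr} immediate from the right-hand inequality of \eqref{x-prime-alxx}) is the same as the paper's, and your part~\ref{Q-Preorder-different-Q-subset:integral} is complete and correct. But in part~\ref{Q-Preorder-different-Q-subset:interpolate} there is a genuine gap at the decisive step, the divisibility condition \ref{QPd}. You correctly sense that the upper bound $|x|'\leq\al(x,x)$ must prevent ``overshooting'', but the tool you name, Proposition~\ref{QFRel-cal}, cannot close the argument: that proposition records the calculus valid in \emph{any} quantaloid and knows nothing about $\al$; the inequality actually needed is a property of the $\FQ$-preorder itself, namely the instance of transitivity \ref{QPt} obtained by taking the middle variable to be $x$,
$$(\al(x,y)\ldd|x|)\with\al(x,x)\leq\al(x,y).$$
With this instance the paper squeezes
$$\al(x,y)=(\al(x,y)\ldd|x|)\with|x|\leq(\al(x,y)\ldd|x|)\with|x|'\leq(\al(x,y)\ldd|x|)\with\al(x,x)\leq\al(x,y),$$
so all terms are equal and $\al(x,y)$ is right-divisible by $|x|'$ (left-divisibility by $|y|'$ is dual). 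In your ``elementary quantale arithmetic lemma'' the unspecified hypothesis ``$p'\leq$ (something controlled by $u$)'' would have to be $p'\leq(u\ldd p)\rdd u$, and verifying $\al(x,x)\leq(\al(x,y)\ldd|x|)\rdd\al(x,y)$ is exactly the transitivity instance above --- it is not quantale arithmetic, and it is not contained in Proposition~\ref{QFRel-cal}. This one computation, which is what makes the statement true, is missing from your proposal.

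You also invert the roles of the hypotheses in the transitivity check, which is in fact the easy part: from $|y|\leq|y|'$ and antitonicity of $\rdd$ in its first argument one gets directly
$$\al(y,z)\with(|y|'\rdd\al(x,y))\leq\al(y,z)\with(|y|\rdd\al(x,y))\leq\al(x,z),$$
so only the \emph{lower} bound in \eqref{x-prime-alxx} is used there; the upper bound does no ``real work'' in \ref{QPt}, beyond having already secured \ref{QPd} for $|\text{-}|'$ (which automatically makes the two expressions in \ref{QPt} agree, by the general identity $(v\ldd q)\with u=v\with(q\rdd u)$ for $u\in\DQ(p,q)$, $v\in\DQ(q,r)$). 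Moreover, your claimed factorization $\al(x,y)=\al(y,y)\with(\cdots)$ is not ``implied by divisibility at the diagonal'': \ref{QPd} only gives divisibility by $|y|$, and the identity $\al(x,y)=\al(y,y)\with(|y|\rdd\al(x,y))$ requires both reflexivity \ref{QPr} and transitivity \ref{QPt}. So the plan points in the right direction and matches the paper's strategy in outline, but the divisibility step rests on a citation that cannot supply it, and the analysis of where each hypothesis enters needs to be corrected as above.
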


\begin{proof}
\ref{Q-Preorder-different-Q-subset:interpolate} $(X,|\text{-}|',\al)$ obviously satisfies \ref{QPr}. For \ref{QPd}, let $x,y\in X$. By applying \eqref{x-prime-alxx} and \ref{QPd}, \ref{QPt} for $(X,|\text{-}|,\al)$ one has
$$\al(x,y)=(\al(x,y)\ldd|x|)\with|x|\leq (\al(x,y)\ldd|x|)\with|x|'\leq(\al(x,y)\ldd |x|)\with\al(x,x)=\al(x,y),$$
which proves the right-divisibility of $\al(x,y)$ by $|x|'$, and its left-divisibility by $|y|'$ can be checked similarly. As for \ref{QPt}, just note that
$$\al(y,z)\with(|y|'\rdd\al(x,y))\leq\al(y,z)\with(|y|\rdd\al(x,y))\leq \al(x,z)$$
for all $x,y,z\in X$, by \eqref{x-prime-alxx} and \ref{QPt} for $(X,|\text{-}|,\al)$.

\ref{Q-Preorder-different-Q-subset:integral} If $\FQ$ is integral, then it follows from \ref{QPd} that $\al(x,x)\leq|x|$ for each $x\in X$, and thus $\al(x,x)=|x|$ by \ref{QPr}.
\end{proof}


\begin{rem} \label{Q-preorder-divisible}
As we remarked in \ref{QFOrd-CT}, in the case that $\FQ$ is a divisible quantale, Lemma~\ref{DQ-elements}\,\ref{DQ-elements:divisible} simplifies the condition \ref{QPd} for a $\FQ$-preorder $\al$ on a $\FQ$-subset $X$ to
$$\al(x,y)\leq|x|\wedge|y|$$
for all $x,y\in X$. Since divisible quantales are necessarily integral, with Proposition~\ref{Q-Preorder-different-Q-subset}\,\ref{Q-Preorder-different-Q-subset:integral} one deduces that a {\protect\linebreak}$\FQ$-preordered $\FQ$-subset is exactly a pair $(X,\al)$, where $X$ is a (crisp) set and $\al\colon X\times X\to\FQ$ is a map, such that
\begin{enumerate}[label=(DP\arabic*),leftmargin=3.2em]
\item \label{DPr}
    $\al(x,y)\leq\al(x,x)\wedge\al(y,y)$,
\item \label{DPt}
    $(\al(y,z)\ldd\al(y,y))\with\al(x,y)=\al(y,z)\with(\al(y,y)\rdd\al(x,y))\leq\al(x,z)$
\end{enumerate}
for all $x,y,z\in X$. When $\FQ$ is commutative and divisible, the conditions \ref{DPr} and \ref{DPt} were first presented by H{\"o}hle-Kubiak to formulate their \emph{pre-$\FQ$-sets} (see \cite[Example 3.4]{Hoehle2011a}) and by Pu-Zhang in the definition of their {\protect\linebreak}\emph{$\FQ$-valued preordered sets} (see \cite[Definition 3.4]{Pu2012}), which are both precisely $\FQ$-preordered $\FQ$-subsets in our sense.\footnote{Pre-$\FQ$-sets were defined in \cite[Definition 3.1]{Hoehle2011a} for a general quantale $\FQ$, but they may not be identified with our $\FQ$-preordered $\FQ$-subsets, unless $\FQ$ is commutative and divisible.}

However, if $\FQ$ is non-divisible, one may find pairs $(X,\al)$ satisfying \ref{DPr} and \ref{DPt} but fail to be $\FQ$-preordered $\FQ$-subsets. For example, let $\FQ=(C_4,\cdot,\top)$ be the non-divisible integral quantale introduced in Example~\ref{Q-exmp}\,\ref{Q-exmp:nc-nd-i}, and let $X=\{x,y\}$ with $\al\colon X\times X\to\FQ$ given by
$$\renewcommand\arraystretch{1.3}
\setlength\doublerulesep{0pt}
\begin{tabular}{c||c|c}
$\al(\text{-},\text{-})$ & $x$ & $y$\\
\hline\hline
$x$ & $b$ &$a$ \\
\hline
$y$ & $\bot$ & $b$
\end{tabular}$$
Then $(X,\al)$ satisfies \ref{DPr} and \ref{DPt}, but in order for $\al$ to become a $\FQ$-preorder on the $\FQ$-subset $(X,|\text{-}|)$ with $|x|=|y|=\al(x,x)=\al(y,y)=b$ (see Proposition~\ref{Q-Preorder-different-Q-subset}\,\ref{Q-Preorder-different-Q-subset:integral}), $\al(x,y)=a$ should belong to $\DQ(|x|,|y|)=\DQ(b,b)$, which cannot be true since $a$ is not right-divisible by $b$.
\end{rem}

Each $\FQ$-preordered $\FQ$-subset $(X,\al)$ admits a natural underlying preorder on $X$ given by
$$x\leq y\iff |x|=|y|\quad\text{and}\quad |x|\leq\al(x,y).$$
$(X,\al)$ is said to be \emph{separated} if $(X,\leq)$ is a partial order; that is, $x=y\iff x\leq y\ \text{and}\ y\leq x$.

\begin{rem} \label{underlying-preorder-bottom}
Given a $\FQ$-preordered $\FQ$-subset $(X,\al)$, it is easy to see that $x\leq y$ in the underlying preorder whenever $|x|=|y|=\bot$; that is, the underlying preorder of $(X,\al)$ always endows the set
$$X_{\bot}=\{x\in X\mid |x|=\bot\}$$
with the \emph{indiscrete} preorder. Consequently, if $(X,\al)$ is separated, then there is at most one element $x\in X$ with $|x|=\bot$.
\end{rem}

\begin{defn}
A membership-preserving map $f\colon (X,\al)\to(Y,\be)$ between $\FQ$-preordered $\FQ$-subsets is \emph{$\FQ$-order-preserving} if
$$\al(x,x')\leq\be(fx,fx')$$
for all $x,x'\in X$. A $\FQ$-order-preserving map $f\colon (X,\al)\to(Y,\be)$ is \emph{fully faithful} if
$$\al(x,x')=\be(fx,fx')$$
for all $x,x'\in X$.
\end{defn}

With the pointwise (pre)order of $\FQ$-order-preserving maps $f,g\colon(X,\al)\to(Y,\be)$ given by
$$f\leq g\iff\forall x\in X\colon fx\leq gx\iff\forall x\in X\colon |x|\leq\be(fx,gx),$$
$\FQ$-preordered $\FQ$-subsets and $\FQ$-order-preserving maps constitute an ordered category $\QFOrd$. Fully faithful and bijective $\FQ$-order-preserving maps are clearly isomorphisms in $\QFOrd$.

\begin{exmps} \label{QFOrd-exmp}  \
\begin{enumerate}[label=(\arabic*)]
\item \label{QFOrd-exmp:discrete}
    Each $\FQ$-subset $X$ is equipped with a \emph{discrete} $\FQ$-preorder $\id_X\colon X\rto X$. In particular, the singleton $\FQ$-subset $\boldsymbol{1}_q$ (see Example~\ref{Q-relation-exmp}\,\ref{Q-relation-exmp:DQ}) is always assumed to be equipped with the discrete $\FQ$-preorder.
\item \label{QFOrd-exmp:POrd}
    If $\FQ=\boldsymbol{2}$, then Definition~\ref{Q-preorder} gives \emph{partially defined preordered sets} (see \cite{Shen2015}); that is, (crisp) sets $X$ equipped with a preorder on a (crisp) subset $\supp X\subseteq X$. A morphism $f\colon X\to Y$ in
    $$\POrd:=\boldsymbol{2}\text{-}\FOrd$$
    is a map $f\colon X\to Y$ satisfying $\supp X=f^{-1}(\supp Y)$, whose restriction $f|_{\supp X}\colon\supp X\to\supp Y$ is order-preserving. It should be pointed out that $X\in\POrd$ is \emph{not} a preordered set as long as $\supp X\subsetneq X$ due to the failure of reflexivity for elements in $X\setminus\supp X$, but the underlying preorder of $X$ is indeed a preorder on $X$ by assigning $\supp X$ with its original order and $X\setminus\supp X$ with the indiscrete preorder (see Remark~\ref{underlying-preorder-bottom}).
\item \label{QFOrd-exmp:skew}
    Every frame $\Om=(\Om,\wedge,\top)$ is a commutative and divisible quantale. Moreover,
    $$(q\ra v)\wedge u=v\wedge(q\ra u)=v\wedge u$$
    whenever $u,v\leq q$ in $\Om$. Then by Remark~\ref{Q-preorder-divisible}, an $\Om$-preordered $\Om$-subset becomes a (crisp) set $X$ equipped with a map $\al\colon X\times X\to\Om$, such that
    $$\al(x,y)\leq\al(x,x)\wedge\al(y,y)\quad\text{and}\quad\al(y,z)\wedge\al(x,y)\leq\al(x,z)$$
    for all $x,y,z\in X$. Therefore, $\Om$-preordered $\Om$-subsets are precisely \emph{skew $\Om$-sets} in the sense of Borceux-Cruciani \cite{Borceux1998}. In particular, skew $\Om$-sets $(X,\al)$ satisfying $\al(x,y)=\al(y,x)$ for all $x,y\in X$ are exactly \emph{$\Om$-sets} originally defined by Fourman-Scott \cite{Fourman1979}.
\item \label{QFOrd-exmp:ParMet}
    Since Lawvere's quantale $\FQ=([0,\infty]^{\op},+,0)$ is divisible, it follows from Remark~\ref{Q-preorder-divisible} that a $\FQ$-preordered $\FQ$-subset is exactly a pair $(X,\al)$, where $X$ is a (crisp) set and $\al\colon X\times X\to[0,\infty]$ is a map, such that
    $$\al(x,x)\vee\al(y,y)\leq\al(x,y)\quad\text{and}\quad\al(x,z)\leq\al(y,z)-\al(y,y)+\al(x,y)$$
    for all $x,y,z\in X$; that is to say, $(X,\al)$ is a (\emph{generalized}) \emph{partial metric space} (see \cite{Bukatin2009,Hoehle2011a,Kuenzi2006a,Matthews1994,Pu2012}). Morphisms $f\colon(X,\al)\to(Y,\be)$ between partial metric spaces are non-expanding maps; that is, maps $f\colon X\to Y$ satisfying
    $$\al(x,x)=\be(fx,fx)\quad\text{and}\quad\al(x,x')\geq\be(fx,fx')$$
    for all $x,x'\in X$.
\item \label{QFOrd-exmp:Q}
    Considering $\FQ$ itself as a $\FQ$-subset with $|q|=e$ for all $q\in\FQ$, there is an intrinsic $\FQ$-preorder $\al\colon\FQ\rto\FQ$ on $(\FQ,|\text{-}|)$ with
    $$\al(p,q)=q\ldd p$$
    for all $p,q\in\FQ$, whose underlying preorder coincides with the given order on $\FQ$. From Proposition~\ref{Q-Preorder-different-Q-subset}\,\ref{Q-Preorder-different-Q-subset:interpolate} we know that for any map $|\text{-}|'\colon\FQ\to\FQ$ with $e\leq|q|'\leq q\ldd q$ for all $q\in\FQ$, $\al$ is a $\FQ$-preorder on $(\FQ,|\text{-}|')$. In particular, $\al$ is a $\FQ$-preorder on $(\FQ,|\text{-}|_{\al})$ with $|q|_{\al}=q\ldd q$ for all $q\in\FQ$, and the underlying preorder of $(\FQ,|\text{-}|_{\al},\al)$ is, in general, coarser than the given order on $\FQ$.

    As the $\FQ$-preordered $\FQ$-subsets $(\FQ,|\text{-}|,\al)$ and $(\FQ,|\text{-}|_{\al},\al)$ coincide if, and only if, the quantale $\FQ$ is integral, the simplest example in which they differ is the quantale $\FQ=(C_3,\with,e)$ introduced in Example~\ref{Q-exmp}\,\ref{Q-exmp:c-ni}. Indeed, there are precisely four $\FQ$-subsets of $C_3$ on which $\al$ is a $\FQ$-preorder:
    \begin{itemize}
    \item $|\bot|_1=|e|_1=|\top|_1=e$, whose underlying preorder is the given order on $C_3$;
    \item $|\bot|_2=|e|_2=e$ and $|\top|_2=\top$, whose underlying preorder on $C_3$ is given by $\bot\leq e$;
    \item $|\bot|_3=\top$ and $|e|_3=|\top|_3=e$, whose underlying preorder on $C_3$ is given by $e\leq\top$;
    \item $|\bot|_4=|\top|_4=\top$ and $|e|_4=e$, whose underlying preorder on $C_3$ is given by $\bot\leq\top$.
    \end{itemize}
\end{enumerate}
\end{exmps}

\subsection{Comparison: $\FQ$-preordered {\textup(}crisp{\textup)} sets}

As a counterpart of Diagram~\eqref{fuzzy-relation-generalization} for $\FQ$-relations between $\FQ$-subsets, the following diagram explains how one generalizes step by step from preordered sets to $\FQ$-preordered $\FQ$-subsets:
\begin{equation} \label{fuzzy-order-generalization}
\begin{tikzpicture}
\node[box] (Ord) at (0,0) {Preordered sets};
\node[box] (POrd) at (-3,1.5) {Partially defined preordered sets};
\node[box] (QOrd) at (3,1.5) {$\FQ$-preordered sets};
\node[box] (QPOrd) at (0,3) {Partially defined $\FQ$-preordered sets};
\node[box] (QFOrd) at (0,4.5) {$\FQ$-preordered $\FQ$-subsets};
\draw[->] (Ord)--(POrd);
\draw[->] (Ord)--(QOrd);
\draw[->] (POrd)--(QPOrd);
\draw[->] (QOrd)--(QPOrd);
\draw[->] (QPOrd)--(QFOrd);
\end{tikzpicture}
\end{equation}

Recall that a $\FQ$-preorder on a (crisp) set $X$ is given by a map $\al\colon X\times X\to\FQ$ satisfying
\begin{enumerate}[label=(\arabic*)]
\item $e\leq\al(x,x)$,
\item $\al(y,z)\with\al(x,y)\leq\al(x,z)$
\end{enumerate}
for all $x,y,z\in X$. Since a (crisp) set $X$ can be considered as a $\FQ$-subset with $|x|=e$ for all $x\in X$, for any $\FQ$-preordered $\FQ$-subset $(X,\al)$:
\begin{enumerate}[label=(\arabic*)]
\item If $|x|=e$ and $\al(x,y)\in\{\bot,e\}$ for all $x,y\in X$, then $\al$ can be identified with the underlying preorder on $X$ and, hence, $(X,\al)$ is identified with a \emph{preordered set}.
\item If $|x|,\al(x,y)\in\{\bot,e\}$ for all $x,y\in X$, then $\al$ can be identified with the underlying preorder defined on the (crisp) subset $\supp X\subseteq X$ and, hence, $(X,\al)$ is identified with a \emph{partially defined preordered set}.
\item If $|x|=e$ for all $x\in X$, then $\DQ(|x|,|y|)=\FQ$ for all $x,y\in X$; that is, $(X,\al)$ is just a \emph{$\FQ$-preordered set}.
\item If $|x|\in\{\bot,e\}$ and $\al(x,y)\in\FQ$ for all $x,y\in X$, then ${\DQ(|x|,|y|)=\FQ}$ if $x,y\in\supp X$ and $\DQ(|x|,|y|)=\{\bot\}$ otherwise; hence, $\al$ can be identified with a $\FQ$-preorder on the (crisp) subset $\supp X\subseteq X$, which turns $(X,\al)$ into a \emph{partially defined $\FQ$-preordered set}.
\end{enumerate}

Hence, one has the following full embeddings of ordered categories, where $\Ord$, $\POrd$, $\QOrd$, $\QPOrd$ are all full subcategories of $\QFOrd$, consisting of preordered sets, partially defined preordered sets, $\FQ$-preordered sets, partially defined $\FQ$-preordered sets, respectively:
\begin{equation} \label{QOrd-embed}
\bfig
\morphism(600,350)/_(->/<0,350>[\QPOrd`\QFOrd;]
\Atriangle/<-_)`<-^)`/<600,350>[\QPOrd`\POrd`\QOrd;``]
\Vtriangle(0,-350)/`<-^)`<-_)/<600,350>[\POrd`\QOrd`\Ord;``]
\efig
\end{equation}

Indeed, all the embeddings in (\ref{QOrd-embed}) are coreflective. To see this, one first observes an easy fact:

\begin{lem} \label{QSFOrd-QTFOrd-cor}
Let $S\subseteq\FQ$ and $\FQ_S\text{-}\FOrd$ denote the full subcategory of $\QFOrd$ consisting of $\FQ$-preordered $\FQ$-subsets $(X,\al)$ with $|x|\in S$ for all $x\in X$. If $S\subseteq T\subseteq\FQ$, then $\FQ_S\text{-}\FOrd$ is a coreflective subcategory of $\FQ_T\text{-}\FOrd$ with the coreflector sending each $(X,\al)\in\FQ_T\text{-}\FOrd$ to the set
$$X_S=\{x\in X\mid |x|\in S\}$$
equipped with the membership map and the $\FQ$-preorder inherited from $(X,\al)$.
\end{lem}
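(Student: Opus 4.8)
The plan is to realize the coreflector as a right adjoint to the inclusion functor $I\colon\FQ_S\text{-}\FOrd\hookrightarrow\FQ_T\text{-}\FOrd$, whose counit components are the evident subset inclusions, and then to note that the entire construction respects the order-enrichment, so that $\FQ_S\text{-}\FOrd$ is coreflective in the intended $2$-categorical sense.

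First I would confirm that $R(X,\al):=(X_S,\al|_{X_S\times X_S})$ really is an object of $\FQ_S\text{-}\FOrd$ for each $(X,\al)\in\FQ_T\text{-}\FOrd$. By the very definition of $X_S$ every surviving membership value lies in $S$, so it suffices to see that the restriction of $\al$ is still a $\FQ$-preorder on the inherited $\FQ$-subset $(X_S,|\text{-}|)$. But the defining conditions \ref{QPd}--\ref{QPt} are all universally quantified over the elements of $X$, and therefore hold \emph{a fortiori} when the quantifiers range only over $X_S$; equivalently, reflexivity $\id_{X_S}\leq\al|_{X_S}$ and transitivity $\al|_{X_S}\comp\al|_{X_S}\leq\al|_{X_S}$ are immediate, the latter because, by the supremum formula \eqref{psi-circ-phi}, the join defining the composite over $X_S$ is a sub-join of the one defining the composite over $X$. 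Thus $R$ is well defined on objects.

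Next I would verify the universal property of the counit. For each $(X,\al)$ the set-theoretic inclusion $\iota_X\colon X_S\hookrightarrow X$ preserves memberships and satisfies $\al(x,x')=\al|_{X_S}(x,x')$, so it is a (fully faithful) morphism of $\FQ_T\text{-}\FOrd$; these are the counit components. The essential observation is that for any $(A,\ga)\in\FQ_S\text{-}\FOrd$ and any morphism $f\colon I(A,\ga)\to(X,\al)$ in $\FQ_T\text{-}\FOrd$, membership-preservation forces $|fa|=|a|\in S$ for every $a\in A$, so $f$ automatically takes all its values in $X_S$. Hence $f$ corestricts to a map $\bar{f}\colon A\to X_S$, which remains membership-preserving and $\FQ$-order-preserving because these properties are phrased entirely in terms of values of $|\text{-}|$, $\ga$ and $\al$ that are untouched by the corestriction. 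Since $\iota_X$ is injective, $\bar{f}$ is the unique morphism satisfying $\iota_X\comp\bar{f}=f$, which is exactly the required universal property.

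Finally I would check compatibility with the ordered structure of $\QFOrd$, so that $I\dashv R$ is an adjunction of ordered categories. The bijection $f\mapsto\bar{f}$ between $\FQ_T\text{-}\FOrd(I(A,\ga),(X,\al))$ and $\FQ_S\text{-}\FOrd((A,\ga),R(X,\al))$ is the identity on underlying maps, and the pointwise preorder on $\FQ$-order-preserving maps, $f\leq g\iff\forall a\in A\colon|a|\leq\al(fa,ga)$, refers only to data preserved by corestriction; hence the bijection is an isomorphism of preordered hom-sets. I do not anticipate a genuine obstacle: the argument turns entirely on the single remark that a membership-preserving map out of an $\FQ_S\text{-}\FOrd$-object necessarily lands in $X_S$, after which every needed property is inherited by restriction. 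The only spot calling for a moment's attention is confirming that the restricted $\FQ$-relation is again a $\FQ$-preorder, and this follows at once from the pointwise (universally quantified) form of the axioms.
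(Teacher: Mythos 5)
Your proposal is correct and is exactly the argument the paper has in mind: the lemma is stated there without proof as ``an easy fact,'' and your verification---that membership-preservation forces any morphism out of an $\FQ_S\text{-}\FOrd$-object to land in $X_S$, so that corestriction against the inclusion counit gives the adjunction---is the standard route, including the one point genuinely needing care, namely that transitivity of the restricted relation holds because the join defining $\al|_{X_S}\comp\al|_{X_S}$ is a sub-join of the one over all of $X$. Nothing is missing; your order-enrichment check at the end is a welcome extra that the paper leaves implicit.
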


As a special case of Lemma~\ref{QSFOrd-QTFOrd-cor}, one has \ref{all-emb-coref:Ord-QPOrd} and \ref{all-emb-coref:QPOrd-QFOrd} of the following proposition, while \ref{all-emb-coref:POrd-QPOrd} can be checked easily:

\begin{thm} \label{all-emb-coref}
All the full embeddings of ordered categories in \eqref{QOrd-embed} are coreflective. To be specific{\textup:}
\begin{enumerate}[label={\rm (\arabic*)}]
\item \label{all-emb-coref:POrd-QPOrd}
    $\POrd$ is a coreflective subcategory of $\QPOrd$, with the coreflector sending each $(X,\al)\in\QPOrd$ to its underlying preorder. Similarly, $\Ord$ is a coreflective subcategory of $\QOrd$.
\item \label{all-emb-coref:Ord-QPOrd}
    $\QOrd$ is a coreflective subcategory of $\QPOrd$, with the coreflector sending each $(X,\al)\in\QPOrd$ to the set
    $$X_{\{e\}}=\{x\in X\mid |x|=e\}$$
    equipped with the $\FQ$-preorder inherited from $(X,\al)$. In particular, $\Ord$ is a coreflective subcategory of $\POrd$.
\item \label{all-emb-coref:QPOrd-QFOrd}
    $\QPOrd$ is a coreflective subcategory of $\QFOrd$, with the coreflector sending each $(X,\al)\in\QFOrd$ to the set
    $$X_{\{e,\bot\}}=\{x\in X\mid |x|=e\ \text{or}\ |x|=\bot\}$$
    equipped with the $\FQ$-preorder inherited from $(X,\al)$.
\end{enumerate}
\end{thm}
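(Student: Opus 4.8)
The plan is to dispatch parts \ref{all-emb-coref:Ord-QPOrd} and \ref{all-emb-coref:QPOrd-QFOrd} as immediate instances of Lemma~\ref{QSFOrd-QTFOrd-cor}, and to treat part \ref{all-emb-coref:POrd-QPOrd} separately, since its coreflector is the passage to the underlying preorder rather than a selection of objects by their membership values. For the first two, I would use the identifications $\QOrd=\FQ_{\{e\}}\text{-}\FOrd$, $\QPOrd=\FQ_{\{\bot,e\}}\text{-}\FOrd$ and $\QFOrd=\FQ_{\FQ}\text{-}\FOrd$ read off from the comparison above: part \ref{all-emb-coref:Ord-QPOrd} is then Lemma~\ref{QSFOrd-QTFOrd-cor} with $S=\{e\}\subseteq T=\{\bot,e\}$ (coreflector $X\mapsto X_{\{e\}}$), and part \ref{all-emb-coref:QPOrd-QFOrd} is the case $S=\{\bot,e\}\subseteq T=\FQ$ (coreflector $X\mapsto X_{\{\bot,e\}}$). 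The assertion that $\Ord$ is coreflective in $\POrd$ is the same lemma applied with $\FQ=\boldsymbol{2}$.

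For part \ref{all-emb-coref:POrd-QPOrd}, I would define the candidate coreflector $C\colon\QPOrd\to\POrd$ to send $(X,\al)$ to $(X,|\text{-}|,\widehat{\al})$, where $\widehat{\al}(x,y)=e$ precisely when $|x|=|y|=e$ and $e\leq\al(x,y)$, and $\widehat{\al}(x,y)=\bot$ otherwise; thus $\widehat{\al}$ records the underlying preorder of $(X,\al)$ on $\supp X$ and vanishes elsewhere, so that $C(X,\al)$ is exactly the underlying preorder of $(X,\al)$ viewed as a partially defined preordered set (indiscrete on $X_{\bot}$ by Remark~\ref{underlying-preorder-bottom}). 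First I would check that $\widehat{\al}$ really is a $\FQ$-preorder on $(X,|\text{-}|)$: divisibility holds because $\widehat{\al}(x,y)\in\{\bot,e\}\subseteq\DQ(|x|,|y|)$ by Lemma~\ref{DQ-elements}\,\ref{DQ-elements:e-e}\,\ref{DQ-elements:bot-p-q}, while reflexivity and transitivity reduce, thanks to the $\{\bot,e\}$-valuedness, to the Boolean computations $e\leq\al(x,x)$ and $e\with e\leq\al(x,z)$ coming from reflexivity and transitivity of $\al$ (using $\al(y,z)\ldd e=\al(y,z)$ when $|y|=e$). Since $\widehat{\al}\leq\al$ by construction, the identity map on the underlying set is a morphism $\varepsilon_{(X,\al)}\colon C(X,\al)\to(X,\al)$ in $\QFOrd$, which will serve as the counit.

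The adjunction $(\POrd\hookrightarrow\QPOrd)\dashv C$ then comes down to a single equivalence. Given $(Y,\be)\in\POrd$ and a membership-preserving map $f\colon Y\to X$, I would show that $f$ is $\FQ$-order-preserving as a map $(Y,\be)\to(X,\al)$ if and only if it is $\FQ$-order-preserving as a map $(Y,\be)\to C(X,\al)$. One direction is immediate from $\widehat{\al}\leq\al$. For the converse, the decisive point is that $\be$ is $\{\bot,e\}$-valued: the inequality $\be(y,y')\leq\widehat{\al}(fy,fy')$ is automatic when $\be(y,y')=\bot$, whereas $\be(y,y')=e$ forces $|y|=|y'|=e$ by the divisibility law in $\POrd$, hence $|fy|=|fy'|=e$ by membership preservation, and $e=\be(y,y')\leq\al(fy,fy')$ then yields $\widehat{\al}(fy,fy')=e$. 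Consequently composition with the counit $\varepsilon_{(X,\al)}$ is a bijection $\POrd((Y,\be),C(X,\al))\to\QPOrd((Y,\be),(X,\al))$, natural in both arguments because every map in sight is the identity on underlying sets; this is precisely the universal property of a coreflection. The same argument with all memberships equal to $e$ gives the parallel claim that $\Ord$ is coreflective in $\QOrd$.

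I expect the only genuine obstacle to be this last equivalence of order-preservation conditions, and in particular the legitimacy of the ``collapse'' to a Boolean condition in the non-commutative, non-integral setting: one must confirm that $\be(y,y')=e$ truly pins the endpoints into $\supp Y$ and that $\widehat{\al}$ stays transitive without any divisibility or commutativity hypothesis on $\FQ$. Both facts hinge only on $\be$ and $\widehat{\al}$ taking values in $\{\bot,e\}$ together with the identities $e\with e=e$ and $u\ldd e=u$, so no extra assumption on $\FQ$ is needed; everything else is bookkeeping.
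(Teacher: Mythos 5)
Your proposal is correct and follows essentially the same route as the paper: parts (2) and (3) are obtained, exactly as in the paper's proof, as instances of Lemma~\ref{QSFOrd-QTFOrd-cor} with $S=\{e\}\subseteq T=\{\bot,e\}$ and $S=\{\bot,e\}\subseteq T=\FQ$ respectively, while part (1) falls outside the lemma (since $\POrd$ also restricts the values of $\al$, not just the memberships) and is handled by direct verification, which is precisely what the paper dismisses with ``can be checked easily.'' Your explicit check of part (1) --- that $\widehat{\al}$ is a $\FQ$-preorder using only $e\ldd e=e$, $e\with e=e$ and $\DQ(\bot,q)=\{\bot\}$, and that composition with the identity-carried counit gives the natural hom-bijection --- correctly and fully supplies the details the paper leaves to the reader.
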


\begin{rem} \label{Q-valued-preorder}
The different notions of $\FQ$-preorders involved in Diagram~\eqref{fuzzy-order-generalization} must be carefully distinguished from \emph{$\FQ$-valued preordered sets}, as considered by H{\"o}hle (see \cite[Definition 3.2]{Hoehle2015}). Explicitly, given a (crisp) set $X$, a map $\al\colon X\times X\to\FQ$ is called a \emph{$\FQ$-valued preorder} on $X$, if
\begin{enumerate}[label=(\arabic*)]
\item (divisibility) $(\al(x,y)\ldd\al(y,y))\with\al(y,y)=\al(x,y)=\al(x,x)\with(\al(x,x)\rdd\al(x,y))$,
\item (transitivity) $\al(x,y)\with(\al(y,y)\rdd\al(y,z))=(\al(x,y)\ldd\al(y,y))\with\al(y,z)\leq\al(x,z)$
\end{enumerate}
for all $x,y,z\in X$. Indeed, let $\FQ^{\tau}$ denote the \emph{conjugate} of $\FQ$, i.e., the unital quantale whose underlying complete lattice is the same as $\FQ$ and whose multiplication $\with^{\tau}$ satisfies $p\mathrel{\with^{\tau}}q=q\with p$ for all $p,q\in\FQ$, then a $\FQ$-valued preorder $\al$ on $X$ is precisely a $\FQ^{\tau}$-preorder on the $\FQ^{\tau}$-subset $(X,|\text{-}|)$ given by
$$|x|=\al(x,x)$$
for all $x\in X$. Conversely, Proposition~\ref{Q-Preorder-different-Q-subset}\,\ref{Q-Preorder-different-Q-subset:interpolate} indicates that each $\FQ$-preorder $\al$ defined on a $\FQ$-subset $(X,|\text{-}|)$ determines a unique $\FQ^{\tau}$-valued preorder on the crisp set $X$.

In the case that $\FQ$ is integral, Proposition~\ref{Q-Preorder-different-Q-subset}\,\ref{Q-Preorder-different-Q-subset:integral} shows that $\FQ$-preorders defined on a $\FQ$-subset $(X,|\text{-}|)$ coincide with $\FQ^{\tau}$-valued preorders defined on the crisp set $X$. In particular, if $\FQ$ is divisible, $\FQ^{\tau}$-valued preordered sets as characterized by \cite[Proposition 3.3]{Hoehle2015} are exactly $\FQ$-preordered $\FQ$-subsets defined by \ref{DPr} and \ref{DPt} in Remark~\ref{Q-preorder-divisible}.

However, without the hypothesis of integrality on $\FQ$, for a given crisp set $X$ one may construct more $\FQ$-preordered $\FQ$-subsets $(X,|\text{-}|,\al)$ than $\FQ^{\tau}$-valued preorders on $X$, as Example~\ref{QFOrd-exmp}\,\ref{QFOrd-exmp:Q} shows.
\end{rem}

We end this subsection with an interesting comparison of the number of $\FQ$-preordered sets, $\FQ$-valued preordered sets (in the sense of H{\"o}hle) and $\FQ$-preordered $\FQ$-subsets that can be defined on a singleton set $\{*\}$:

\begin{rem} \label{singleton-comparison} \
\begin{enumerate}[label=(\arabic*)]
\item \label{singleton-comparison:QOrd}
    Each idempotent element $q\in\FQ$ greater than or equal to $e$ determines a $\FQ$-preordered set $(\{*\},\al_q)$ with $\al_q(*,*)=q$, and vice versa; that is, there are as many $\FQ$-preorders on $\{*\}$ as idempotent elements in $\FQ$ greater than or equal to $e$, among which there are at least $e$ and $\top$. Hence, $\FQ$ is integral if, and only if, there is precisely one $\FQ$-preorder on $\{*\}$.
\item \label{singleton-comparison:QvOrd}
    Each element $q\in\FQ$ obviously determines a unique $\FQ$-valued preorder on $\{*\}$ in the sense of H{\"o}hle, and vice versa; that is, the number of H{\"o}hle's $\FQ$-valued preorders on $\{*\}$ equals to the cardinality of $\FQ$.
\item \label{singleton-comparison:QFOrd}
    It follows immediately from Example~\ref{QFOrd-exmp}\,\ref{QFOrd-exmp:discrete} that each singleton $\FQ$-subset $\boldsymbol{1}_q$ $(q\in\FQ)$ can be equipped with at least one $\FQ$-preorder, i.e., the discrete one. In particular, since $\DQ(e,e)=\FQ$ by Lemma~\ref{DQ-elements}\,\ref{DQ-elements:e-e}, similar to \ref{singleton-comparison:QOrd} we see that there are as many $\FQ$-preorders on $\boldsymbol{1}_e$ as idempotent elements in $\FQ$ greater than or equal to $e$, among which there are at least $e$ and $\top$. Hence, the combination of Lemma~\ref{DQ-elements}\,\ref{DQ-elements:integral} and \ref{QPr} shows that $\FQ$ is integral if, and only if, there is precisely one $\FQ$-preorder on $\boldsymbol{1}_q$ for every $q\in\FQ$.
\end{enumerate}
\end{rem}

\subsection{Potential lower {\textup(}upper{\textup)} $\FQ$-subsets}

Each $\FQ$-order-preserving map $f\colon (X,\al)\to(Y,\be)$ induces two $\FQ$-relations
$$f_{\nat}\colon X\rto Y,\quad f_{\nat}(x,y)=\be(fx,y)\quad\text{and}\quad f^{\nat}\colon Y\rto X,\quad f^{\nat}(y,x)=\be(y,fx),$$
called respectively the \emph{graph} and \emph{cograph} of $f$. Obviously, for any $(X,\al)\in\QFOrd$, the identity map $1_X$ is $\FQ$-order-preserving, and
$$\al=(1_X)_{\nat}=1_X^{\nat}.$$
Hence, in order to simplify the notation, from now on we abbreviate a $\FQ$-preordered $\FQ$-subset $(X,\al)$ to $X$, and use
$$1_X^{\nat}\colon X\rto X$$
as the standard notation for the $\FQ$-preorder structure on $X$. In summary, whenever we say ``$X$ is a $\FQ$-preordered $\FQ$-subset'' or ``$X\in\QFOrd$'', it means that $X$ is equipped with
\begin{enumerate}[label=(\arabic*)]
\item a membership map $|\text{-}|\colon X\to\FQ$, and
\item a $\FQ$-relation $1_X^{\nat}\colon X\rto X$ as the $\FQ$-preorder on $X$.
\end{enumerate}

\begin{defn} \label{potential-lower-def}
Let $X$ be a $\FQ$-preordered $\FQ$-subset. A $\FQ$-relation $\mu\colon X\rto\boldsymbol{1}_q$ (resp.\ $\lam\colon\boldsymbol{1}_q\rto X$) is called a \emph{potential lower {\textup(}resp.\ upper{\textup)} $\FQ$-subset} of $X$ if
\begin{equation} \label{PX-def}
\mu\comp 1_X^{\nat}\leq\mu\quad(\text{resp.}\ 1_X^{\nat}\comp\lam\leq\lam).
\end{equation}
\end{defn}

Since the reverse inequality of \eqref{PX-def} is trivial, a potential lower (resp. upper) $\FQ$-subset $\mu\colon X\rto\boldsymbol{1}_q$ (resp.\ $\lam\colon\boldsymbol{1}_q\rto X$) necessarily satisfies
\begin{equation} \label{PX-def-eq}
\mu\comp 1_X^{\nat}=\mu\quad(\text{resp.}\ 1_X^{\nat}\comp\lam=\lam).
\end{equation}
In elementary words, a potential lower $\FQ$-subset of $X$ consists of a map $\mu\colon X\to\FQ$ and an element $q\in\FQ$, such that
\begin{enumerate}[label=(\arabic*)]
\item $\mu(x)=(\mu(x)\ldd |x|)\with |x|$,
\item $\mu(x)= q\with(q\rdd\mu(x))$, and
\item $(\mu(y)\ldd |y|)\with 1_X^{\nat}(x,y)=\mu(y)\with(|y|\rdd1_X^{\nat}(x,y))\leq\mu(x)$
\end{enumerate}
for all $x,y\in X$. If we consider $(X,\mu)$ as a $\FQ$-subset, then $q$ can be interpreted as the degree of $(X,\mu)$ being a lower $\FQ$-subset of $X$, and the above conditions can be translated as:
\begin{enumerate}[label=(\arabic*)]
\item $x$ is in $(X,\mu)$ only if $x$ is in $(X,|\text{-}|)$;
\item the degree of $x$ being in $(X,\mu)$ is less than or equal to $q$;
\item if $x\leq y$ and $y$ is in $(X,\mu)$, then $x$ is in $(X,\mu)$.
\end{enumerate}

Potential lower $\FQ$-subsets of $X$ constitute a $\FQ$-subset $\PX$, called the \emph{$\FQ$-powerset} of $X$, with the membership map sending each $\mu\colon X\rto\boldsymbol{1}_q$ to $|\mu|=q$. There is a natural $\FQ$-preorder on $\PX$ given by
$$1_{\PX}^{\nat}(\mu,\mu')=\mu'\lda\mu$$
for all $\mu,\mu'\in \PX$, which is intuitively the inclusion order of potential lower $\FQ$-subsets.

Dually, potential upper $\FQ$-subsets of $X$ constitute a $\FQ$-preordered $\FQ$-subset $\PdX$, called the \emph{dual $\FQ$-powerset} of $X$, with the membership map sending each $\lam\colon \boldsymbol{1}_q\rto X$ to $|\lam|=q$ and. The natural $\FQ$-preorder on $\PdX$ given by
$$1_{\PdX}^{\nat}(\lam,\lam')=\lam'\rda\lam$$
for all $\lam,\lam'\in \PdX$ is intuitively the \emph{reverse} inclusion order (see Remark~\ref{PdX-order} below) of potential upper $\FQ$-subsets.

\begin{rem} \label{PdX-order}
It is important to note that for any $X\in\QFOrd$, it follows from the definition that the underlying preorder on $\PdX$ is the \emph{reverse} local order of $\QFRel$, i.e.,
$$\lam\leq\lam'\ \text{in}\ \PdX\iff\lam'\leq\lam\ \text{in}\ \QFRel.$$
In order to get rid of the confusion about the symbol ``$\leq$'', we make the convention that ``$\leq$'' between $\FQ$-relations always stands for the local order in $\QFRel$ unless otherwise specified.
\end{rem}

\begin{rem}
If $X\in\QFOrd$ is regarded as a category enriched in the quantaloid $\DQ$ (see Remark~\ref{QFOrd-CT}), then $\mu\in\PX$ is precisely a \emph{presheaf} (also \emph{contravariant presheaf} \cite{Stubbe2005,Stubbe2014}) on $X$, while $\lam\in\PdX$ is exactly a \emph{copresheaf} (also \emph{covariant presheaf}) on $X$.
\end{rem}

\begin{exmps} \label{PX-exmp} \
\begin{enumerate}[label=(\arabic*)]
\item \label{PX-exmp:POrd}
Let $X$ be a partially defined preordered set (see Example~\ref{QFOrd-exmp}\,\ref{QFOrd-exmp:POrd}). Then $\PX$ (resp. $\PdX$) consists of pairs $(A,q)$ $(q=0,1)$, where $A$ is a lower (resp. upper) subset of $\supp X$ if $q=1$, and $A=\varnothing$ if $q=0$.
\item \label{PX-exmp:ParMet}Let $X=(X,\al)$ be a partial metric space (see Example~\ref{QFOrd-exmp}\,\ref{QFOrd-exmp:ParMet}). Then $\PX$ consists of pairs $(\mu,q)$, where ${\mu\colon X\lra[0,\infty]}$ is a map and $q\in[0,\infty]$, such that
    $$\al(x,x)\vee q\leq \mu(x)\leq\al(x,y)+\mu(y)-\al(y,y)$$
    for all $x,y\in X$; dually, such a pair $(\mu,q)\in\PdX$ if
    $$\al(x,x)\vee q\leq\mu(x)\leq\al(y,x)+\mu(y)-\al(y,y)$$
    for all $x,y\in X$.
\item \label{PX-exmp:singleton}
    Let $X=\boldsymbol{1}_p\in\QFOrd$ with $p\in \FQ$. Then it follows from Example~\ref{Q-relation-exmp}\,\ref{Q-relation-exmp:DQ} that any potential lower (resp. upper) $\FQ$-subset of $X$ can be regarded as a $u\in\DQ(p,q)$  (resp. $v\in\DQ(q,p)$), i.e.,
    $$\sP\boldsymbol{1}_p=\{u\in\DQ(p,q)\mid q\in\FQ\}\quad\text{and}\quad\sPd\boldsymbol{1}_p=\{v\in\DQ(q,p)\mid q\in\FQ\}.$$
 With Remark~\ref{implications-Q-QFRel} one may exhibit the natural $\FQ$-preorders on $\sP\boldsymbol{1}_p$ and $\sPd\boldsymbol{1}_p$ as
    \begin{align*}
    &1_{\sP\boldsymbol{1}_p}^{\nat}(u,u')=u'\lda u=\bv\{w\in\DQ(q,r)\mid w\leq u'\ldd(q\rdd u)\}\quad\text{and}\\
    &1_{\sPd\boldsymbol{1}_p}^{\nat}(v,v')=v'\rda v=\bv\{w\in\DQ(q,r)\mid w\leq (v'\ldd r)\rdd v\}
    \end{align*}
    for all $q,r\in\FQ$, $u\in\DQ(p,q)$, $u'\in\DQ(p,r)$, $v\in\DQ(q,p)$ and $v'\in\DQ(r,p)$.
\item \label{PX-exmp:powerset} (Fuzzy powerset of a fuzzy set)
    For each $\FQ$-subset $X\in\Set/\FQ$, the \emph{$\FQ$-powerset} of $X$ is defined as the $\FQ$-powerset of the discrete $\FQ$-preordered $\FQ$-subset $(X,\id_X)\in\QFOrd$, whose elements are \emph{potential $\FQ$-subsets} of $X$, i.e., $\FQ$-relations
    $$\mu\colon X\rto\boldsymbol{1}_q$$
    with $|\mu|=q$ interpreted as the degree of $\mu$ being a $\FQ$-subset of $X$. It should be reminded that the $\FQ$-preorder structure on $\PX$ is \emph{not} discrete, although $X$ is equipped with the discrete $\FQ$-preorder.

    We point out that even if $X$ is a crisp set, its $\FQ$-powerset $\PX$ is different from the crisp set $\FQ^X$ of maps ${X\to\FQ}$, which is also referred to as the $\FQ$-powerset (or fuzzy powerset) of $X$ in the literature:
    \begin{itemize}
    \item $\FQ^X$ is a \emph{crisp set} consisting of \emph{$\FQ$-subsets} of $X$;
    \item $\PX$ is a \emph{$\FQ$-subset} consisting of \emph{potential $\FQ$-subsets} of $X$.
    \end{itemize}
\end{enumerate}
\end{exmps}

Given $X\in\QFOrd$, each $x\in X$ gives rise to a \emph{principal potential lower $\FQ$-subset} (cf. Example~\ref{Q-relation-exmp}\,\ref{Q-relation-exmp:restriction})
$$\sy_X x:=1_X^{\nat}(-,x)\colon X\rto\boldsymbol{1}_{|x|}.$$
It is easy to check that the assignment $x\longmapsto\sy_X x$ defines a fully faithful $\FQ$-order-preserving map ${\sy_X\colon X\to\PX}$, called the \emph{Yoneda embedding}. Dually, the fully faithful \emph{co-Yoneda embedding} $\syd_X\colon X\to\PdX$ sends each $x\in X$ to the \emph{principal potential upper $\FQ$-subset}
$$\syd_X x:=1_X^{\nat}(x,-)\colon \boldsymbol{1}_{|x|}\rto X.$$

\begin{lem}[Yoneda] \label{Yoneda}
For any $X\in\QFOrd$, $\mu\in\PX$ and $\lam\in\PdX$, it holds that
$$\mu=(\sy_X)_{\nat}(-,\mu)=1_{\PX}^{\nat}(\sy_X-,\mu)\quad\text{and}\quad\lam=(\syd_X)^{\nat}(\lam,-)=1_{\PdX}^{\nat}(\lam,\syd_X-).$$
\end{lem}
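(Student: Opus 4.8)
The plan is to reduce both identities to their defining unfoldings and then settle them componentwise by a two-sided inequality. I treat the statement for $\mu$; the one for $\lam$ is entirely dual. First, the equality $(\sy_X)_\nat(-,\mu)=1_{\PX}^\nat(\sy_X-,\mu)$ is merely the definition of the graph of the $\FQ$-order-preserving map $\sy_X\colon X\to\PX$, since the $\FQ$-preorder carried by $\PX$ is $1_{\PX}^\nat$. So the real content is $\mu=1_{\PX}^\nat(\sy_X-,\mu)$, an equality of $\FQ$-relations $X\rto\boldsymbol{1}_q$ with $q=|\mu|$. Evaluating at $x\in X$ and inserting $1_{\PX}^\nat(\mu',\mu)=\mu\lda\mu'$ together with $\sy_X x=1_X^\nat(-,x)$, it suffices to prove
$$\mu\lda\sy_X x=\mu(x)\quad\text{in}\ \DQ(|x|,q)$$
for every $x\in X$, where $\mu(x)\in\DQ(|x|,q)$ is read as the $\FQ$-relation $\boldsymbol{1}_{|x|}\rto\boldsymbol{1}_q$.

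For the inequality $\mu(x)\le\mu\lda\sy_X x$, I would use that $\mu$ is a potential lower $\FQ$-subset, so $\mu\comp 1_X^\nat=\mu$ by \eqref{PX-def-eq}. Spelling out this composition via \eqref{psi-circ-phi}, the supremum over $y$ equals $\mu(x')$, so in particular its $y=x$ summand satisfies
$$(\mu(x)\ldd|x|)\with 1_X^\nat(x',x)\le\mu(x')$$
for all $x'\in X$; this is precisely $\mu(x)\comp\sy_X x\le\mu$. The equivalence $\psi\comp\phi\le\xi\iff\psi\le\xi\lda\phi$ of Proposition~\ref{QFRel-cal} (with $\psi=\mu(x)$, $\phi=\sy_X x$, $\xi=\mu$) then gives $\mu(x)\le\mu\lda\sy_X x$.

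For the reverse inequality, write $w=\mu\lda\sy_X x$. The counit inequality $(\xi\lda\phi)\comp\phi\le\xi$ of Proposition~\ref{QFRel-cal} (with $\xi=\mu$, $\phi=\sy_X x$) yields $w\comp\sy_X x\le\mu$, whose value at the index $x'=x$ reads $(w\ldd|x|)\with 1_X^\nat(x,x)\le\mu(x)$. Now reflexivity \ref{QPr} supplies $|x|\le 1_X^\nat(x,x)$, and $w\in\DQ(|x|,q)$ is right-divisible by $|x|$, i.e.\ $(w\ldd|x|)\with|x|=w$; hence
$$w=(w\ldd|x|)\with|x|\le(w\ldd|x|)\with 1_X^\nat(x,x)\le\mu(x),$$
which completes the equality $\mu\lda\sy_X x=\mu(x)$ and thereby the first assertion.

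The dual statement for $\lam\in\PdX$ follows by the mirror argument: replace the graph by the cograph $(\syd_X)^\nat$, the left implication $\lda$ by the right implication $\rda$, and right-divisibility by left-divisibility, using the copresheaf identity $1_X^\nat\comp\lam=\lam$ in place of the presheaf one. I do not expect a genuine obstacle; the only point demanding care is the bookkeeping with the singleton $\FQ$-subsets $\boldsymbol{1}_{|x|},\boldsymbol{1}_q$ and, crucially, invoking reflexivity on the correct side so that the divisibility identity for $w$ collapses the estimate exactly onto $\mu(x)$.
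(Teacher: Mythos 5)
Your proof is correct. Note that the paper itself states this lemma without any proof at all (it is imported as a standard fact of quantaloid-enriched category theory, cf.\ Stubbe's work cited in Remark~\ref{QFOrd-CT}), so there is no in-paper argument to compare against; what you supply is a complete elementary verification in the paper's own relational calculus, which is exactly the kind of unfolding the framework is designed for. Your reduction is sound: the second equality in each chain is definitional (graph/cograph), and the substantive claim $\mu\lda\sy_X x=\mu(x)$ is correctly split into $\mu(x)\comp\sy_X x\leq\mu$ (the $y=x$ summand of $\mu\comp 1_X^{\nat}=\mu$, i.e.\ Equation~\eqref{PX-def-eq} spelled out via \eqref{psi-circ-phi}, transposed by the adjunction of Proposition~\ref{QFRel-cal}) and the converse via the counit inequality evaluated at $x'=x$, where reflexivity \ref{QPr} together with right-divisibility of $w=\mu\lda\sy_X x$ by $|x|$ collapses $(w\ldd|x|)\with 1_X^{\nat}(x,x)$ down to $w$ from below. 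The one genuinely delicate point --- that $w$ lives in $\DQ(|x|,q)$ at all, so that $(w\ldd|x|)\with|x|=w$ is available --- you invoke correctly: it holds because the implication $\mu\lda\sy_X x$ is by construction a $\FQ$-relation $\boldsymbol{1}_{|x|}\rto\boldsymbol{1}_q$ (the supremum in \eqref{QFRel-imp} is taken in the complete lattice $\QFRel(\boldsymbol{1}_{|x|},\boldsymbol{1}_q)\cong\DQ(|x|,q)$), and you are right to keep $\lda$ distinct from $\ldd$ throughout, in line with the caution of Remark~\ref{implications-Q-QFRel}. The dual half for $\lam\in\PdX$ mirrors as you describe, with $1_X^{\nat}\comp\lam=\lam$, the implication $\rda$, and left-divisibility of $\syd_X x\rda\lam$ by $|x|$ doing the corresponding work.
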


\subsection{Complete $\FQ$-preordered $\FQ$-subsets}

\begin{defn} \label{sup-def}
Let $X\in\QFOrd$. The \emph{supremum} of a potential lower $\FQ$-subset $\mu\colon X\rto\boldsymbol{1}_q$, when it exists, is an element $\sup\mu\in X$ with $|\sup\mu|=q$, such that
$$1_X^{\nat}(\sup\mu,-)=1_X^{\nat}\lda\mu.$$
Dually, the \emph{infimum} of a potential upper $\FQ$-subset $\lam\colon\boldsymbol{1}_q\rto X$, when it exists, is an element $\inf\lam\in X$ with ${|\inf\lam|=q}$, such that
$$1_X^{\nat}(-,\inf\lam)=\lam\rda 1_X^{\nat}.$$
\end{defn}

To explain the above definition in order-theoretic terms, we note from Proposition~\ref{QFRel-comp-imp} that if $\mu\in\PX$, then $\sup\mu$ satisfies
\begin{equation} \label{sup-trans}
1_X^{\nat}(\sup\mu,x)=\bw_{x'\in X}1_X^{\nat}(x',x)\lda\mu(x')
\end{equation}
for all $x\in X$, where $\mu(x')\colon\boldsymbol{1}_{|x'|}\rto\boldsymbol{1}_q$ and $1_X^{\nat}(x',x)\colon\boldsymbol{1}_{|x'|}\rto\boldsymbol{1}_{|x|}$ are considered as $\FQ$-relations between singleton $\FQ$-subsets. Thus \eqref{sup-trans} illustrates the many-valued version of ``$\sup\mu\leq x$ if, and only if, every $x'$ in $(X,\mu)$ satisfies $x'\leq x$'', and $|\sup\mu|=q=|\mu|$ indicates that the degree of $(X,\mu)$ being a lower $\FQ$-subset of $X$ equals to the degree of its supremum in $(X,|\text{-}|)$, whenever it exists.

It is clear that the supremum of $\mu\in\PX$, when it exists, is unique up to isomorphism; that is, if $s,s'\in X$ are both suprema of $\mu$, then $s\cong s'$ in the underlying preorder of $X$. If $X$ is separated, then each $\mu\in\PX$ has at most one supremum. The same facts hold for the infimum of $\lam\in\PdX$.

\begin{prop} {\rm(See \cite{Stubbe2005}.)} \label{sup-inf}
For any $X\in\QFOrd$, each potential lower $\FQ$-subset of $X$ has a supremum if, and only if, each potential upper $\FQ$-subset of $X$ has an infimum.
\end{prop}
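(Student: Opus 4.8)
The plan is to prove only the nontrivial implication and to obtain its converse by the evident self-duality of the calculus of Section~\ref{Fuzzy_Relations}: passing to the conjugate quantale $\FQ^\tau$ and to the opposite object $X^{\op}$ interchanges $\PX$ with $\PdX$, interchanges the implications $\lda$ and $\rda$, and interchanges suprema with infima. Hence it suffices to show that if every $\mu\in\PX$ has a supremum, then every $\lam\in\PdX$ has an infimum; the reverse implication is then literally the same argument with the roles of $\lda$ and $\rda$ exchanged, constructing $\sup\mu$ as $\inf(1_X^\nat\lda\mu)$. The underlying idea is the classical one that an infimum is the supremum of the lower bounds.

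So fix $\lam\colon\boldsymbol{1}_q\rto X$ in $\PdX$ and set $\mu_\lam:=\lam\rda 1_X^\nat\colon X\rto\boldsymbol{1}_q$, the many-valued ``set of lower bounds'' of $\lam$. First I would check that $\mu_\lam\in\PX$: using part (8) of Proposition~\ref{QFRel-cal} together with the idempotence $1_X^\nat\comp 1_X^\nat=1_X^\nat$ (immediate from reflexivity and transitivity) one gets $\mu_\lam\comp 1_X^\nat=(\lam\rda 1_X^\nat)\comp 1_X^\nat\leq\lam\rda(1_X^\nat\comp 1_X^\nat)=\mu_\lam$, and $|\mu_\lam|=q$ by construction. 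By hypothesis $s:=\sup\mu_\lam$ exists, with $|s|=q$ and $1_X^\nat(s,-)=1_X^\nat\lda\mu_\lam$. Since $|s|=q=|\lam|$, the goal reduces to the identity of $\FQ$-relations $1_X^\nat(-,s)=\lam\rda 1_X^\nat$, i.e.\ $\sy_X s=\mu_\lam$, which is exactly the assertion $s=\inf\lam$.

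I would prove this identity by two inequalities. For $1_X^\nat(-,s)\leq\mu_\lam$ (``$s$ is a lower bound of $\lam$''), part (7) of Proposition~\ref{QFRel-cal} gives $\lam\comp\mu_\lam=\lam\comp(\lam\rda 1_X^\nat)\leq 1_X^\nat$, whence $\lam\leq 1_X^\nat\lda\mu_\lam=1_X^\nat(s,-)$ by part (1); composing on the right with $1_X^\nat(-,s)$ and invoking the counit $1_X^\nat(s,-)\comp 1_X^\nat(-,s)\leq 1_X^\nat$ (which is nothing but transitivity \ref{QPt} read through the composition formula \eqref{psi-circ-phi}) yields $\lam\comp 1_X^\nat(-,s)\leq 1_X^\nat$, and part (1) again gives $1_X^\nat(-,s)\leq\lam\rda 1_X^\nat=\mu_\lam$. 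For the reverse inequality $\mu_\lam\leq 1_X^\nat(-,s)$ (``$s$ is the greatest lower bound''), the defining equation $1_X^\nat(s,-)=1_X^\nat\lda\mu_\lam$ and part (7) give $1_X^\nat(s,-)\comp\mu_\lam\leq 1_X^\nat$, so that $\mu_\lam\leq 1_X^\nat(s,-)\rda 1_X^\nat$ by part (1).

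The main obstacle is closing this reverse inequality, for it requires the nonobvious identity
$$1_X^\nat(s,-)\rda 1_X^\nat=1_X^\nat(-,s),$$
that is, the cograph $\sy_X s$ of the point $s$ is the right implication of its graph $\syd_X s$ into $1_X^\nat$. I would isolate this as a small lemma valid for any point of any $X\in\QFOrd$: the inclusion ``$\geq$'' is the counit above via part (1), while ``$\leq$'' uses the unit $\id_{\boldsymbol{1}_{|s|}}\leq 1_X^\nat(-,s)\comp 1_X^\nat(s,-)$ (reflexivity of $s$) together with the presheaf identity $1_X^\nat(-,s)\comp 1_X^\nat=1_X^\nat(-,s)$ (i.e.\ $\sy_X s\in\PX$): any $\phi$ with $1_X^\nat(s,-)\comp\phi\leq 1_X^\nat$ satisfies $\phi=\id\comp\phi\leq 1_X^\nat(-,s)\comp 1_X^\nat(s,-)\comp\phi\leq 1_X^\nat(-,s)\comp 1_X^\nat=1_X^\nat(-,s)$. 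Granting this identity, the reverse inequality reads $\mu_\lam\leq 1_X^\nat(-,s)$, so $\sy_X s=\mu_\lam$ and $s=\inf\lam$; the converse implication then follows by the duality noted at the outset.
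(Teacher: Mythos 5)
Your proof is correct and takes essentially the same route as the paper: the paper defines $\ub\mu:=1_X^{\nat}\lda\mu\in\PdX$, notes $\sup\mu=\inf\ub\mu$, and obtains the other direction ``dually,'' while you work out precisely that dual half, $\lb\lam:=\lam\rda 1_X^{\nat}\in\PX$ with $\inf\lam=\sup\lb\lam$, in full detail. Your auxiliary identity $1_X^{\nat}(s,-)\rda 1_X^{\nat}=1_X^{\nat}(-,s)$ is a sound elaboration of what the paper calls ``straightforward to check''; it could alternatively be cited from the adjunction $(x_s)_{\nat}\dv(x_s)^{\nat}$ for the point map $x_s\colon\boldsymbol{1}_{|s|}\to X$ together with Proposition~\ref{adj-dist-cal}\,\ref{adj-dist-cal:comp}.
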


\begin{proof}
For any $\mu\in\PX$, it is straightforward to check that $\ub\mu:=1_X^{\nat}\lda\mu\in\PdX$, and $\sup\mu=\inf\ub\mu$ whenever it exists. This proves the ``if'' part, and the ``only if'' part is obtained dually.
\end{proof}

\begin{defn}
A $\FQ$-preordered $\FQ$-subset $X$ is \emph{complete} if each potential lower $\FQ$-subset has a supremum; or equivalently, if each potential upper $\FQ$-subset has an infimum.
\end{defn}

If $X$ is a complete $\FQ$-preordered $\FQ$-subset, then
$$1_{\PX}^{\nat}(\mu,\mu')=\mu'\lda\mu\leq(1_X^{\nat}\lda\mu')\rda(1_X^{\nat}\lda\mu)=1_X^{\nat}(\sup\mu',-)\rda 1_X^{\nat}(\sup\mu,-)=1_X^{\nat}(\sup\mu,\sup\mu')$$
for all $\mu,\mu'\in\PX$, where the last equality follows from the Yoneda lemma; that is to say,
$$\sup\colon\PX\to X$$
is a $\FQ$-order-preserving map. Similarly, it is straightforward to check that so is $\inf\colon\PdX\to X$.

Each $\FQ$-order-preserving map $f\colon X\to Y$ induces $\FQ$-order-preserving maps
$$f^{\ra}\colon\PX\to\PY\quad\text{and}\quad f^{\nra}\colon\PdX\to\PdY$$
with
\begin{equation} \label{fra-def}
f^{\ra}\mu=\mu\comp f^{\nat}\quad\text{and}\quad f^{\nra}\lam=f_{\nat}\comp\lam
\end{equation}
for all $\mu\in\PX$ and $\lam\in\PdX$. $f$ is said to be \emph{$\sup$-preserving} (resp.\ \emph{$\inf$-preserving}) if
$$f{\sup}_X\mu={\sup}_Y f^{\ra}\mu\quad(\text{resp.}\ f{\inf}_X\lam)={\inf}_Y f^{\nra}\lam$$
whenever $\sup_X\mu$ (resp.\ $\inf_X\lam$) exists in $X$.

Separated complete $\FQ$-preordered $\FQ$-subsets and $\sup$-preserving $\FQ$-order-preserving maps constitute an ordered category $\QFSup$, which is a subcategory of $\QFOrd$ and, moreover, is a quantaloid (see the last paragraph of Section~\ref{Fuzzy_Relations}).

Recall that, an object $Z$ in a category $\CC$ is \emph{$\CM$-injective} \cite{Adamek1990,Hofmann2014} w.r.t. a class $\CM$ of morphisms in $\CC$ if, for any morphisms $m\colon X\to Y$ in $\CM$ and $f\colon X\to Z$, there exists a morphism $g\colon Y\to Z$ extending $f$, i.e., making the diagram
$$\bfig
\qtriangle<600,400>[X`Y`Z;m`f`g]
\efig$$
commutative. The following theorem shows that objects in $\QFSup$, i.e., separated complete $\FQ$-preordered $\FQ$-subsets, are characterized as injective objects in $\QFOrd$:

\begin{thm} {\rm(See \cite{Shen2016,Stubbe2017}.)}
A separated $\FQ$-preordered $\FQ$-subset is complete if, and only if, it is injective w.r.t. fully faithful $\FQ$-order-preserving maps.
\end{thm}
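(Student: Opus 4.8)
The plan is to prove both directions using the injectivity of the $\FQ$-powerset together with the Yoneda embedding. The key observation is that for any $X\in\QFOrd$, the Yoneda embedding $\sy_X\colon X\to\PX$ is fully faithful, and $\PX$ is always complete (suprema in $\PX$ are computed by a formula involving $\lda$, which exists because $\QFRel(X,\boldsymbol{1}_q)$ is a complete lattice by Proposition~\ref{QFRel-comp}). So $\PX$ serves as a universal injective object into which every $X$ embeds fully faithfully.

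For the ``only if'' direction, I would suppose $X$ is separated and complete and show it is injective with respect to fully faithful $\FQ$-order-preserving maps. Given a fully faithful $m\colon A\to B$ and an arbitrary $f\colon A\to X$, I want to construct an extension $g\colon B\to X$. The natural candidate is a ``Kan extension'' formula: for each $b\in B$, set $gb=\sup$ of the potential lower $\FQ$-subset of $X$ obtained by transporting the principal lower $\FQ$-subset $\sy_B b$ along $f$ and $m$. Concretely, one uses the graphs and cographs to form $f^{\ra}(m^{\nat}\comp(\sy_B b))$ or an equivalent composite in $\QFRel$, then takes its supremum in $X$, which exists by completeness. The verification splits into three checks: that $g$ is membership-preserving, that $g$ is $\FQ$-order-preserving, and that $g\comp m=f$. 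The last equality is where full faithfulness of $m$ is essential, since it guarantees that restricting the transported $\FQ$-subset back along $m$ recovers the principal lower $\FQ$-subset at $a$, whose supremum is $fa$ by the Yoneda lemma (Lemma~\ref{Yoneda}) and separatedness of $X$.

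For the ``if'' direction, I would assume $X$ is separated and injective with respect to fully faithful $\FQ$-order-preserving maps, and deduce completeness. Since $\sy_X\colon X\to\PX$ is fully faithful, injectivity applied to the identity $1_X\colon X\to X$ along $\sy_X$ yields a $\FQ$-order-preserving retraction $r\colon\PX\to X$ with $r\comp\sy_X=1_X$. The claim is then that for any $\mu\in\PX$, the element $r\mu$ is a supremum of $\mu$. To verify this I would unwind the defining condition $1_X^{\nat}(\sup\mu,-)=1_X^{\nat}\lda\mu$ using the Yoneda lemma to rewrite $1_X^{\nat}\lda\mu$ as $1_{\PX}^{\nat}(\mu,\sy_X-)$, then use that $r$ preserves the $\FQ$-preorder together with the retraction identity to match this against $1_X^{\nat}(r\mu,-)$.

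The main obstacle I anticipate is the ``only if'' direction—specifically, verifying that the Kan-extension formula for $g$ both is well-defined as a morphism in $\QFOrd$ (membership-preserving and $\FQ$-order-preserving) and genuinely extends $f$. The delicate point is that $m$ need not be injective on underlying sets, so one must argue carefully that the supremum defining $gb$ depends only on the image data and that $g\comp m=f$ holds up to equality (not merely isomorphism), which forces essential use of separatedness of $X$. I would handle this by systematically translating everything into compositions and implications of $\FQ$-relations and applying the calculus of Proposition~\ref{QFRel-cal}, so that the identities reduce to formal quantaloid manipulations rather than pointwise computations.
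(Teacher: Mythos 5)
Your plan is correct and follows essentially the same route as the paper: in the ``only if'' direction your Kan-extension candidate is exactly the paper's composite $g=\sup_Z\comp f^{\ra}\comp\widetilde{m_{\nat}}$ with $\widetilde{m_{\nat}}\,b=m_{\nat}(-,b)=(\sy_B b)\comp m_{\nat}$ (only note that your expression $m^{\nat}\comp(\sy_B b)$ does not typecheck---the restriction along $m$ uses the graph $m_{\nat}$, not the cograph $m^{\nat}$), and in the ``if'' direction the paper likewise applies injectivity to $1_X$ along the fully faithful Yoneda embedding and lets the resulting retraction serve as $\sup$. The verifications you outline (full faithfulness of $m$ giving $\widetilde{m_{\nat}}(ma)=\sy_A a$, separatedness upgrading $g\comp m\cong f$ to equality, and $\FQ$-order-preservation of the retraction $r$ together with $r\sy_X=1_X$ and the Yoneda lemma yielding $1_X^{\nat}(r\mu,-)=1_X^{\nat}\lda\mu$) are precisely the steps implicit in the paper's proof.
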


\begin{proof}
Let $Z$ be a separated $\FQ$-preordered $\FQ$-subset. If $Z$ is complete, for morphisms $m\colon X\to Y$ and $f\colon X\to Z$ in $\QFOrd$ with $m$ fully faithful,
$$g:=(Y\to^{\widetilde{m_{\nat}}}\PX\to^{f^{\ra}}\PZ\to^{\sup_Z}Z)$$
defines the required extension of $f$, where $\widetilde{m_{\nat}}\colon Y\to\PX$ is given by $\widetilde{m_{\nat}}y=m_{\nat}(-,y)\in\PX$ for all $y\in Y$.

Conversely, one applies the injectivity of $Z$ to the fully faithful Yoneda embedding $\sy_Z\colon Z\to\PZ$, and the resulting extension of $1_Z\colon Z\to Z$ along $\sy_Z$ gives the required $\sup\colon\PZ\to Z$.
\end{proof}

\begin{rem} \label{complete-Q-lattice}
An injective object in $\QOrd$ w.r.t. fully faithful $\FQ$-order-preserving maps is known as a \emph{complete} {\protect\linebreak}\emph{$\FQ$-lattice} \cite{Shen2013}; that is, a (crisp) set $X$ equipped with a separated and complete $\FQ$-preorder. Explicitly, for any ${X\in\QOrd}$ (i.e., $X\in\QFOrd$ with $|x|=e$ for all $x\in X$), a \emph{lower} (resp.\ \emph{upper}) \emph{$\FQ$-subset} of $X$ is precisely a potential lower (resp.\ upper) $\FQ$-subset
$$\mu\colon X\rto\boldsymbol{1}_e\quad(\text{resp}.\ \lam\colon\boldsymbol{1}_e\rto X).$$
$X$ is called a \emph{complete $\FQ$-lattice} if every lower $\FQ$-subset of $X$ admits a supremum, or equivalently, every upper $\FQ$-subset of $X$ admits an infimum, with suprema and infima defined in the same way as in $\QFOrd$.

Although $\QOrd$ is a coreflective subcategory of $\QFOrd$ and the coreflector sends each separated complete {\protect\linebreak}$\FQ$-preordered $\FQ$-subset to a complete $\FQ$-lattice, it is important to notice that a complete $\FQ$-lattice can never be complete as a $\FQ$-preordered $\FQ$-subset as long as $\FQ$ is non-trivial; we will explain it later in Remark~\ref{QFSup-QSup}.
\end{rem}

\subsection{{\textup(}Co{\textup)}tensored $\FQ$-preordered $\FQ$-subsets}

We now introduce tensors and cotensors as a useful tool to characterize complete $\FQ$-preordered $\FQ$-subsets. Recall from Example~\ref{Q-relation-exmp}\,\ref{Q-relation-exmp:DQ} that for any $p,q\in\FQ$, an element $u\in\DQ(p,q)$ may be identified with a $\FQ$-relation ${u\colon\boldsymbol{1}_p\rto\boldsymbol{1}_q}$. Thus we have the following definition:

\begin{defn} \label{tensor-def}
Let $X$ be a $\FQ$-preordered $\FQ$-subset, $x\in X$ and $q\in\FQ$. For any $u\in\DQ(|x|,q)$, the \emph{tensor} of $u$ and $x$, when it exists, is an element $u\otimes x\in X$ with $|u\otimes x|=q$ and
$$1_X^{\nat}(u\otimes x,-)=1_X^{\nat}(x,-)\lda u.$$
Dually, for any $v\in\DQ(|x|,q)$, the \emph{cotensor} of $v$ and $x$, when it exists, is an element $v\rat x\in X$ with $|v\rat x|=q$ and
$$1_X^{\nat}(-,v\rat x)=v\rda 1_X^{\nat}(-,x).$$
$X$ is said to be \emph{tensored} if $u\otimes x$ exists for all $x\in X$, $q\in\FQ$ and $u\in\DQ(|x|,q)$. Dually, $X$ is said to be \emph{cotensored} if $v\rat x$ exists for all $x\in X$, $q\in\FQ$ and $v\in\DQ(q,|x|)$.
\end{defn}

A $\FQ$-preordered $\FQ$-subset $X$ is \emph{order-complete} if, for any $q\in\FQ$, the (crisp) subset
$$X_q=\{x\in X\mid |x|=q\}$$
of $X$ admits all joins (or equivalently, all meets) in the underlying preorder of $X$.

\begin{thm} {\rm(See \cite{Stubbe2006}.)} \label{complete-tensor-order}
A $\FQ$-preordered $\FQ$-subset is complete if, and only if, it is tensored, cotensored, and order-complete.
\end{thm}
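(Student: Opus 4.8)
The plan is to prove both implications by reducing completeness to the existence of suprema and leaning on the $\FQ$-relational calculus, the crucial auxiliary fact being the identity $1_X^\nat\lda\sy_X x=1_X^\nat(x,-)$ (equivalently $\sup\sy_X x=x$), which holds in \emph{every} $X\in\QFOrd$ irrespective of completeness. It is a pointwise form of the Yoneda lemma (Lemma~\ref{Yoneda}): the inequality $1_X^\nat(x,-)\le 1_X^\nat\lda\sy_X x$ holds because $1_X^\nat(x,-)\comp\sy_X x$ is one join-summand of $1_X^\nat\comp 1_X^\nat=1_X^\nat$, while the reverse follows from Proposition~\ref{QFRel-comp-imp} together with $(\xi\lda\phi)\comp\phi\le\xi$ of Proposition~\ref{QFRel-cal}. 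I also use throughout that $\sup\colon\PX\to X$ is $\FQ$-order-preserving and that $\sy_X,\syd_X$ are fully faithful.

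For the ``only if'' part, assume $X$ is complete. Given $x\in X$, $q\in\FQ$ and $u\in\DQ(|x|,q)$, I form $\mu:=u\comp\sy_X x\colon X\rto\boldsymbol 1_q$, which lies in $\PX$ (since $\sy_X x\comp 1_X^\nat=\sy_X x$) with $|\mu|=q$. Completeness yields $\sup\mu$, and applying the identity $1_X^\nat\lda(u\comp\sy_X x)=(1_X^\nat\lda\sy_X x)\lda u$ of Proposition~\ref{QFRel-cal} followed by $1_X^\nat\lda\sy_X x=1_X^\nat(x,-)$ gives $1_X^\nat(\sup\mu,-)=1_X^\nat(x,-)\lda u$; comparison with Definition~\ref{tensor-def} shows $u\otimes x=\sup\mu$ exists, so $X$ is tensored. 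Dually, infima exist by Proposition~\ref{sup-inf}, and the symmetric computation produces all cotensors. For order-completeness I fix $q$ and a family $(x_i)_{i\in I}$ in $X_q$ and set $\mu:=\bv_i\sy_X x_i\in\PX$ (a potential lower $\FQ$-subset by Proposition~\ref{QFRel-comp}\,\ref{QFRel-comp:join}, with $|\mu|=q$); since $\sup$ is order-preserving, the relations $\sy_X x_i\le\mu\le\sy_X y$ in $\PX$ together with $\sup\sy_X(-)=\mathrm{id}$ identify $\sup\mu$ as the join $\bv_i x_i$ in the underlying order of $X_q$.

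For the ``if'' part, assume $X$ is tensored, cotensored and order-complete, and let $\mu\in\PX$ with $|\mu|=q$. Each $\mu(x)\in\DQ(|x|,q)$ yields a tensor $t_x:=\mu(x)\otimes x\in X_q$, and I put $s:=\bv_x t_x$, the join supplied by order-completeness; the claim is $s=\sup\mu$. By Proposition~\ref{QFRel-comp-imp} and the tensor identity, $1_X^\nat\lda\mu=\bw_x\big(1_X^\nat(x,-)\lda\mu(x)\big)=\bw_x 1_X^\nat(t_x,-)$ in $\QFRel(\boldsymbol 1_q,X)$, and the same computation gives $1_X^\nat\lda\mu=1_X^\nat\lda\big(\bv_x\sy_X t_x\big)$; hence $\mu$ and $\nu:=\bv_x\sy_X t_x$ share the same $\ub$ and therefore the same supremum, reducing the problem to the \emph{conical} statement that a join of representables $\nu=\bv_i\sy_X z_i$ ($z_i\in X_q$) has $\sup\nu=\bv_i z_i=s$. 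Here the inequality $1_X^\nat(s,-)\le 1_X^\nat\lda\nu=\bw_i 1_X^\nat(z_i,-)$ is the easy half: since $z_i\le s$ in $X_q$, transitivity and reflexivity make $z\mapsto 1_X^\nat(z,y)$ order-reversing on $X_q$, so $1_X^\nat(s,-)$ is a lower bound of the $1_X^\nat(z_i,-)$ and lies below their meet.

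The reverse inequality $1_X^\nat\lda\nu\le 1_X^\nat(s,-)$ is the main obstacle: it asserts that the purely order-theoretic join $s$ genuinely realizes the weighted supremum, equivalently that $1_X^\nat(s,-)$ is the \emph{greatest} lower bound of $\{1_X^\nat(z_i,-)\}$ in $\QFRel(\boldsymbol 1_q,X)$. The real difficulty is that meets in $\QFRel$ are not computed pointwise (the sets $\DQ(q,|y|)$ need not be closed under meets in $\FQ$ unless $\FQ$ is divisible), so one cannot just evaluate coordinatewise. This is precisely where tensoredness is indispensable: it allows an arbitrary lower bound $\rho\colon\boldsymbol 1_q\rto X$ of the $1_X^\nat(z_i,-)$ to be converted into comparisons with the elements $z_i$, and hence with their join $s$, so that ``colimits compose'' and the conical join of tensors is again a supremum. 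I would carry this out through the adjunction identities of Proposition~\ref{QFRel-cal}, or dually by constructing $\inf\ub\mu$ from cotensors and invoking $\sup\mu=\inf\ub\mu$ (Proposition~\ref{sup-inf}); this symmetric role of tensors and cotensors is exactly why both hypotheses are required.
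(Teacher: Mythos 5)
Your ``only if'' half is correct and is essentially the paper's own argument: like the paper (which follows \cite{Stubbe2006}), you obtain tensors as $u\otimes x=\sup(u\comp\sy_X x)$, cotensors dually via Proposition~\ref{sup-inf}, and underlying joins as $\bv_i x_i=\sup\bigl(\bv_i\sy_X x_i\bigr)$; your pointwise Yoneda identity $1_X^{\nat}\lda\sy_X x=1_X^{\nat}(x,-)$ is a sound way to verify these. Your reduction of the ``if'' half to the paper's formula $\sup\mu=\bv_x\mu(x)\otimes x$, including $1_X^{\nat}\lda\mu=\bw_x 1_X^{\nat}(t_x,-)$ and the easy inequality $1_X^{\nat}(s,-)\le 1_X^{\nat}\lda\mu$, is also on the paper's route. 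But the decisive reverse inequality is never proved: you name it ``the main obstacle'' and then write ``I would carry this out\ldots''. That is a genuine gap, and your heuristic for closing it points at the wrong hypothesis. At the stage you have reached, tensoredness is already spent (in forming $t_x=\mu(x)\otimes x$); what remains is precisely the claim that the underlying-order join $s=\bv_x t_x$ is a \emph{conical} supremum, i.e., that every $\lam\colon\boldsymbol{1}_q\rto X$ with $\lam\le 1_X^{\nat}(t_x,-)$ for all $x$ satisfies $\lam\le 1_X^{\nat}(s,-)$, and it is \emph{cotensoredness}, not tensoredness, that converts such a lower bound into comparisons with the $t_x$. Your fallback --- construct $\inf\ub\mu$ from cotensors and invoke $\sup\mu=\inf\ub\mu$ --- does not evade the issue: it meets the mirror-image obstruction (a meet of cotensors must be shown to be a conical infimum, and there tensoredness is the operative hypothesis).

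For the record, the missing step is short once the right tool is named. Given such a $\lam$ and any $y\in X$, one has $\lam(y)\le 1_X^{\nat}(t_x,y)$, so by transitivity $\lam(y)\comp 1_X^{\nat}(-,t_x)\le 1_X^{\nat}(t_x,y)\comp 1_X^{\nat}(-,t_x)\le 1_X^{\nat}(-,y)$; by the adjunction in Proposition~\ref{QFRel-cal} and Definition~\ref{tensor-def} this gives $1_X^{\nat}(-,t_x)\le\lam(y)\rda 1_X^{\nat}(-,y)=1_X^{\nat}\bigl(-,\lam(y)\rat y\bigr)$, and evaluating at $t_x$ with reflexivity yields $t_x\le\lam(y)\rat y$ in the underlying order of $X_q$. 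Hence $s\le\lam(y)\rat y$, i.e., $q\le 1_X^{\nat}\bigl(s,\lam(y)\rat y\bigr)$, and then $\lam(y)=\lam(y)\with e\le\lam(y)\with\bigl(q\rdd 1_X^{\nat}(s,\lam(y)\rat y)\bigr)\le 1_X^{\nat}(s,y)$, the last inequality being the counit $\psi\comp(\psi\rda\xi)\le\xi$ of Proposition~\ref{QFRel-cal} applied to the cotensor. Without this (or an equivalent) mechanism, your proposal establishes the ``only if'' direction and only half of the ``if'' direction; in particular your closing assertion that the symmetry of tensors and cotensors ``is exactly why both hypotheses are required'' is correct, but your sketch does not substantiate it, and as written it attributes the remaining work to the hypothesis that has already been used.
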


\begin{proof}
Let $X$ be a $\FQ$-preordered $\FQ$-subset. For the ``only if'' part, note that for all $x\in X$, $q\in\FQ$, $u\in\DQ(|x|,q)$ and $v\in\DQ(q,|x|)$, the compositions
$$\bfig
\morphism<400,0>[X`\boldsymbol{1}_{|x|};\sy x]
\morphism(400,0)<400,0>[\boldsymbol{1}_{|x|}`\boldsymbol{1}_q;u]
\place(200,0)[\mapstochar] \place(600,0)[\mapstochar]
\efig
\quad\text{and}\quad
\bfig
\morphism(1600,0)<400,0>[\boldsymbol{1}_q`\boldsymbol{1}_{|x|};v]
\morphism(2000,0)<400,0>[\boldsymbol{1}_{|x|}`X;\syd x]
\place(1800,0)[\mapstochar] \place(2200,0)[\mapstochar]
\efig$$
are respectively in $\PX$ and $\PdX$, with
$$u\otimes x=\sup(u\comp\sy x)\quad\text{and}\quad v\rat x=\inf(\syd x\comp v).$$
Similarly, for all $\{x_i\}_{i\in I}\subseteq X_q$, one has $\bv\limits_{i\in I}\sy x_i\in\PX$ and $\bv\limits_{i\in I}x_i=\sup(\bv\limits_{i\in I}\sy x_i)$. Thus $X$ is tensored, cotensored, and order-complete provided that $X$ is complete.

Conversely, the ``if'' part holds since for all $\mu\in\PX$ and $\lam\in\PdX$, one has
\[\sup\mu=\bv\limits_{x\in X}\mu(x)\otimes x\quad\text{and}\quad \inf\lam=\bw\limits_{x\in X}\lam(x)\rat x.\qedhere\]
\end{proof}

\begin{rem} \label{QFSup-QSup}
As an immediate consequence of Theorem~\ref{complete-tensor-order}, one sees that a complete $\FQ$-preordered $\FQ$-subset $X$ must contain at least one element of membership degree $q$ for each $q\in\FQ$, i.e., the bottom element in the underlying preorder of each $X_q$ ($q\in\FQ$) as the join of the empty set. Therefore, provided that $\FQ$ is a non-trivial quantale, a complete $\FQ$-lattice (see Remark~\ref{complete-Q-lattice}) $X$ can never be an object of $\QFSup$ since $|x|=e$ for all $x\in X$.

In the particular case of $\FQ=\boldsymbol{2}$, a complete lattice is \emph{not} complete as a partially defined preordered set (see Example~\ref{QFOrd-exmp}\,\ref{QFOrd-exmp:POrd}). Indeed, a partially defined preordered set $X$ is complete if, and only if, $X\setminus\supp X\neq\varnothing$ and $\supp X$ admits all joins (or equivalently, all meets).
\end{rem}

\begin{exmps} (See \cite{Stubbe2006}.) \label{PX-tensor}
For each $X\in\QFOrd$, $\PX$ and $\PdX$ are both separated, tensored, cotensored and complete $\FQ$-preordered $\FQ$-subsets:
\begin{enumerate}[label=(\arabic*)]
\item Tensors and cotensors in $\PX$ are given by
    $$u\otimes\mu=u\comp\mu\quad\text{and}\quad v\rat\mu=v\rda\mu$$
    for all $\mu\in\PX$, $q\in\FQ$, $u\in\DQ(|\mu|,q)$ and $v\in\DQ(q,|\mu|)$, and consequently the Yoneda lemma implies
    \begin{align*}
    \sup\Theta&=\bv_{\mu\in\PX}\Theta(\mu)\comp\mu=\bv_{\mu\in\PX}\Theta(\mu)\comp(\sy_X)_{\nat}(-,\mu)=\Theta\comp(\sy_X)_{\nat} \quad\text{and}\\
    \inf\Lam&=\bw_{\mu\in\PX}\Lam(\mu)\rda\mu=\bw_{\mu\in\PX}\Lam(\mu)\rda(\sy_X)_{\nat}(-,\mu)=\Lam\rda(\sy_X)_{\nat}
    \end{align*}
    for all $\Theta\in\sP(\PX)$ and $\Lam\in\sPd(\PX)$.
\item Tensors and cotensors in $\PdX$ are given by
    $$u\otimes\lam=\lam\lda u,\quad v\rat\lam=\lam\comp v$$
    for all $\lam\in\PdX$, $q\in\FQ$, $u\in\DQ(|\lam|,q)$ and $v\in\DQ(q,|\lam|)$, and consequently the Yoneda lemma implies
    \begin{align*}
    \sup\Theta&=\bw_{\lam\in\PdX}\lam\lda\Theta(\lam)=\bw_{\lam\in\PdX}(\syd_X)^{\nat}(\lam,-)\lda\Theta(\lam)=(\syd_X)^{\nat}\lda\Theta \quad\text{and}\\
    \inf\Lam&=\bv_{\lam\in\PdX}\lam\comp\Lam(\lam)=\bv_{\lam\in\PdX}(\syd_X)^{\nat}(\lam,-)\comp\Lam(\lam)=(\syd_X)^{\nat}\comp\Lam
    \end{align*}
    for all $\Theta\in\sP(\PdX)$ and $\Lam\in\sPd(\PdX)$.
\end{enumerate}
\end{exmps}

\section{Fuzzy Galois connections on fuzzy sets} \label{Fuzzy-Galois-connections}

\subsection{$\FQ$-distributors}

While dealing with $\FQ$-relations between $\FQ$-preordered $\FQ$-subsets, it is natural to consider those $\phi\colon X\rto Y$ which are compatible with the $\FQ$-preorder structures on $X$ and $Y$; such $\FQ$-relations are called $\FQ$-distributors:

\begin{defn}
A \emph{$\FQ$-distributor} $\phi\colon X\oto Y$ between $\FQ$-preordered $\FQ$-subsets is a $\FQ$-relation $\phi\colon X\rto Y$ satisfying
\begin{equation} \label{QDist-def}
1_Y^{\nat}\comp\phi\comp 1_X^{\nat}\leq\phi.
\end{equation}
\end{defn}

Since the reverse inequality of \eqref{QDist-def} is trivial, a $\FQ$-distributor $\phi\colon X\oto Y$ necessarily satisfies
$$1_Y^{\nat}\comp\phi\comp 1_X^{\nat}=\phi.$$
In fact, there are more equivalent ways of describing the ``compatibility'' of a $\FQ$-relation with the $\FQ$-preorder on its domain and codomain:

\begin{prop}
For a $\FQ$-relation $\phi\colon X\rto Y$ between $\FQ$-preordered $\FQ$-subsets, the following statements are equivalent{\textup:}
\begin{enumerate}[label={\rm(\roman*)}]
\item $\phi\colon X\oto Y$ is a $\FQ$-distributor.
\item $(1_Y^{\nat}(y,y')\ldd|y|)\with\phi(x,y)\with(|x|\rdd 1_X^{\nat}(x',x))\leq\phi(x',y')$ for all $x,x'\in X$ and $y,y'\in Y$.
\item $\phi\comp 1_X^{\nat}\leq\phi$ and $1_Y^{\nat}\comp\phi\leq\phi$.
\item $\phi(x,-)\in\PdY$ and $\phi(-,y)\in\PX$ for all $x\in X$ and $y\in Y$.
\item $1_X^{\nat}\leq\phi\rda\phi$ and $1_Y^{\nat}\leq\phi\lda\phi$.
\end{enumerate}
\end{prop}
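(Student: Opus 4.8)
The plan is to route every equivalence through condition (iii), which simply splits the single distributor inequality of (i) into its left- and right-hand halves $\phi\comp 1_X^{\nat}\leq\phi$ and $1_Y^{\nat}\comp\phi\leq\phi$; each of the remaining conditions then attaches to one of these halves by a single structural fact about the quantaloid $\QFRel$. Concretely, I would prove (i)$\Leftrightarrow$(iii), (iii)$\Leftrightarrow$(v), (iii)$\Leftrightarrow$(iv), and (i)$\Leftrightarrow$(ii) independently.

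First, for (i)$\Leftrightarrow$(iii): the forward direction uses reflexivity of the $\FQ$-preorders, $\id_X\leq 1_X^{\nat}$ and $\id_Y\leq 1_Y^{\nat}$ (Definition~\ref{Q-preorder}), together with monotonicity of composition (Proposition~\ref{QFRel-comp}\,\ref{QFRel-comp:join}); inserting the reflexive identities gives $\phi\comp 1_X^{\nat}=\id_Y\comp\phi\comp 1_X^{\nat}\leq 1_Y^{\nat}\comp\phi\comp 1_X^{\nat}\leq\phi$, and symmetrically $1_Y^{\nat}\comp\phi\leq\phi$. Conversely, composing the first half of (iii) on the left by $1_Y^{\nat}$ and then applying the second half yields $1_Y^{\nat}\comp\phi\comp 1_X^{\nat}\leq 1_Y^{\nat}\comp\phi\leq\phi$. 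Next, (iii)$\Leftrightarrow$(v) is immediate from the adjunction of Proposition~\ref{QFRel-cal}(1): taking $\xi=\phi$ there gives $\phi\comp 1_X^{\nat}\leq\phi\iff 1_X^{\nat}\leq\phi\rda\phi$ and $1_Y^{\nat}\comp\phi\leq\phi\iff 1_Y^{\nat}\leq\phi\lda\phi$.

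For (iii)$\Leftrightarrow$(iv), I would decompose each inequality pointwise in the free variable. For fixed $y$, the restriction $\phi(-,y)\colon X\rto\boldsymbol{1}_{|y|}$ satisfies $(\phi(-,y)\comp 1_X^{\nat})(x,\ast)=(\phi\comp 1_X^{\nat})(x,y)$, so the pointwise inequality $\phi\comp 1_X^{\nat}\leq\phi$ holds precisely when $\phi(-,y)\comp 1_X^{\nat}\leq\phi(-,y)$ for every $y$, i.e.\ when $\phi(-,y)\in\PX$ for all $y$ (Definition~\ref{potential-lower-def}); dually, $1_Y^{\nat}\comp\phi\leq\phi$ amounts to $\phi(x,-)\in\PdY$ for all $x$. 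Finally, for (i)$\Leftrightarrow$(ii) I would expand $1_Y^{\nat}\comp\phi\comp 1_X^{\nat}$ by applying the composition formula \eqref{psi-circ-phi} twice, obtaining at the entry $(x',y')$ the supremum $\bv_{x,y}(1_Y^{\nat}(y,y')\ldd|y|)\with\phi(x,y)\with(|x|\rdd 1_X^{\nat}(x',x))$; since a join lies below $\phi(x',y')$ exactly when each of its summands does, the pointwise form of (i) is literally (ii).

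The only step that demands care is (i)$\Leftrightarrow$(ii): the composition formula \eqref{psi-circ-phi} offers three equal forms with left and right implications, and in the non-commutative setting one must select the forms that leave the middle factor as $\phi(x,y)$ and use associativity of $\with$ to group the three factors unambiguously. Everything else is a direct application of reflexivity, monotonicity, the adjunction, and the pointwise nature of the order on $\QFRel$, so the sole real risk is a transcription slip in the two implications.
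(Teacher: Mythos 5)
Your proof is correct, and since the paper states this proposition without any proof (treating it as a routine consequence of the definitions), your argument is precisely the standard verification the authors leave to the reader: splitting (i) into the two halves of (iii) via reflexivity and monotonicity of composition, reading (v) off the adjunction in Proposition~\ref{QFRel-cal}(1), decomposing (iii) pointwise in the free variable to get (iv), and expanding (ii) from the composition formula~\eqref{psi-circ-phi} with the correct choice of left and right implications so that $\phi(x,y)$ stays as the middle factor. All four equivalences check out, including the one non-commutative subtlety you flag, so there is nothing to compare against or correct.
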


\begin{exmps} \
\begin{enumerate}[label=(\arabic*)]
\item For any $\FQ$-order-preserving map $f\colon X\to Y$, its graph $f_{\nat}\colon X\oto Y$ and cograph $f^{\nat}\colon Y\oto X$ are both {\protect\linebreak}$\FQ$-distributors.
\item Potential lower (resp.\ upper) $\FQ$-subsets of $X$ are precisely $\FQ$-distributors
$$\mu\colon X\oto\boldsymbol{1}_q\quad(\text{resp}.\ \lam\colon\boldsymbol{1}_q\oto X).$$
\end{enumerate}
\end{exmps}

$\FQ$-preordered $\FQ$-subsets and $\FQ$-distributors constitute a quantaloid $\QFDist$, which contains $\QFRel$ as a full subquantaloid; indeed, every $\FQ$-relation $\phi\colon X\rto Y$ between $\FQ$-subsets can be regarded as a $\FQ$-distributor between $X$ and $Y$ equipped with the discrete $\FQ$-preorder. Compositions and implications of $\FQ$-distributors are calculated in the same way as those of $\FQ$-relations (see Equations~\eqref{psi-circ-phi} and Proposition~\ref{QFRel-comp-imp}), while the identity $\FQ$-distributor on $X\in\QFOrd$ is given by its $\FQ$-preorder $1_X^{\nat}\colon X\oto X$.

\begin{defn}
A pair of $\FQ$-distributors $\phi\colon X\oto Y$ and $\psi\colon Y\oto X$ forms an \emph{adjunction} in $\QFDist$, written as $\phi\dv\psi$, if
$$1_X^{\nat}\leq\psi\comp\phi\quad\text{and}\quad \phi\comp\psi\leq 1_Y^{\nat}.$$
In this case, one says that $\phi$ is a \emph{left adjoint} of $\psi$, and $\psi$ is a \emph{right adjoint} of $\phi$.
\end{defn}

Using the language of category theory, adjoint $\FQ$-distributors are in fact \emph{internal adjunctions} in the ordered category $\QFDist$.

\begin{exmps} \label{adjoint-dist-exmp} \
\begin{enumerate}[label=(\arabic*)]
\item \label{adjoint-dist-exmp:graph}
    Every $\FQ$-order-preserving map $f\colon X\to Y$ induces an adjunction $f_{\nat}\dv f^{\nat}$ in $\QFDist$.
\item \label{adjoint-dist-exmp:POrd}
 Let $\phi\colon X\oto Y$ and $\psi\colon Y\oto X$ be a pair of distributors between partially defined preordered sets (see Example~\ref{QFOrd-exmp}\,\ref{QFOrd-exmp:POrd}). Then $\phi\dv\psi$ if, and only if, $\phi=f_{\nat}$ and $\psi=f^{\nat}$ for some order-preserving map ${f\colon\supp X\to\supp Y}$.
\item \label{adjoint-dist-exmp:skew}
    For a frame $\Om$, in Example~\ref{QFOrd-exmp}\,\ref{QFOrd-exmp:skew} we have seen that $\Om$-preordered $\Om$-subsets are skew $\Om$-sets. The category
    $$\Om_{\sf sk}\text{-}\Set$$ of skew $\Om$-sets and their morphisms given in \cite{Borceux1998} is precisely a subcategory of $\Om\text{-}\FDist$: its objects are also $\Om$-preordered $\Om$-subsets, while its morphisms are left adjoint $\Om$-distributors. In other words, $\phi\colon X\oto Y$ is a morphism in $\Om_{\sf sk}\text{-}\Set$ if there exists $\psi\colon Y\oto X$ such that $\phi\dv\psi$ in $\Om\text{-}\FDist$.
\item \label{adjoint-dist-exmp:Cauchy}
    Let $\FQ=([0,\infty]^{\op},+,0)$ and $(X,\al)$ be a partial metric space (see Example~\ref{QFOrd-exmp}\,\ref{QFOrd-exmp:ParMet}). A sequence $\{x_n\}_{n=1}^{\infty}\subseteq X$ is \emph{Cauchy} if the limit $\lim\limits_{n,m\ra\infty}\al(x_n,x_m)$ exists in $[0,\infty]$, and a Cauchy sequence $\{x_n\}_{n=1}^{\infty}\subseteq X$ \emph{converges} to $x\in X$ \cite{Kuenzi2006a} if
    $$\al(x,x)=\lim\limits_{n\ra\infty}\al(x,x_n)=\lim\limits_{n\ra\infty}\al(x_n,x)=\lim\limits_{n,m\ra\infty}\al(x_n,x_m).$$
    Then every Cauchy sequence $\{x_n\}_{n=1}^{\infty}\subseteq X$ induces an adjunction $\lam\dv\mu$ in $\QFDist$ with
    \begin{align*}
    &\lam\colon\boldsymbol{1}_q\oto X,\quad\lam(y)=\lim\limits_{n\ra\infty}\al(x_n,y)\quad\text{and}\quad\mu\colon X\oto\boldsymbol{1}_q,\quad\mu(y)=\lim\limits_{n\ra\infty}\al(y,x_n)
    \end{align*}
    for all $y\in X$, where $q=\lim\limits_{n,m\ra\infty}\al(x_n,x_m)$. It can be shown that $\{x_n\}_{n=1}^{\infty}$ converges to $x\in X$ if, and only if, $\lam(y)=\al(x,y)$ and $\mu(y)=\al(y,x)$ for all $y\in X$ (see \cite[Proposition~4.10]{Pu2012}); that is, $\lam$ and $\mu$ are respectively the graph and cograph of the (necessarily non-expanding) map
    $$f\colon\boldsymbol{1}_q\to X,\quad *\longmapsto x.$$
\end{enumerate}
\end{exmps}

The identities presented below are quite useful when being applied to the adjunction $f_{\nat}\dv f^{\nat}$:

\begin{prop} {\rm(See \cite{Heymans2010}.)} \label{adj-dist-cal}
If $\phi\dv\psi$ in $\QFDist$, then the following identities hold for all $\FQ$-distributors $\xi$ and $\xi'$ whenever the compositions and implications make sense{\textup:}
\begin{enumerate}[label={\rm (\arabic*)}]
\item \label{adj-dist-cal:comp}
    $\xi\comp\phi=\xi\lda\psi$ and $\psi\comp\xi=\phi\rda\xi$.
\item \label{adj-dist-cal:comp-imp}
    $(\phi\comp\xi)\rda\xi'=\xi\rda(\psi\comp\xi')$ and $(\xi'\comp\phi)\lda\xi=\xi'\lda(\xi\comp\psi)$.
\item \label{adj-dist-cal:imp-comp}
    $(\xi\rda\xi')\comp\phi=\xi\rda(\xi'\comp\phi)$ and $\psi\comp(\xi'\lda\xi)=(\psi\comp\xi')\lda\xi$.
\item \label{adj-dist-cal:order}
    $\psi\comp(\xi\rda\xi')=(\xi\comp\phi)\rda\xi'$ and $(\xi'\lda\xi)\comp\phi=\xi'\lda(\psi\comp\xi)$.
\end{enumerate}
\end{prop}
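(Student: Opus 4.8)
The plan is to prove the two identities in part~\ref{adj-dist-cal:comp} first, since they say precisely that postcomposing with $\phi$ coincides with left division by $\psi$, and precomposing with $\psi$ coincides with right division by $\phi$; once this is available, the statements \ref{adj-dist-cal:comp-imp}--\ref{adj-dist-cal:order} reduce to mechanical rewrites using the general quantaloidal calculus of Proposition~\ref{QFRel-cal}, which is valid in $\QFDist$ as it is a quantaloid. Throughout I would keep in mind that the identity morphisms of $\QFDist$ are the $\FQ$-preorders $1_X^{\nat}$, so that absorbing $1_X^{\nat}$ or $1_Y^{\nat}$ against a distributor is exactly composition with an identity.

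For the first identity $\xi\comp\phi=\xi\lda\psi$ I would prove the two inequalities separately. For ``$\leq$'', by the adjunction characterizing $\lda$ (Proposition~\ref{QFRel-cal}(1)) it suffices to check $(\xi\comp\phi)\comp\psi\leq\xi$; reassociating as $\xi\comp(\phi\comp\psi)$ and using the counit inequality $\phi\comp\psi\leq 1_Y^{\nat}$ together with monotonicity of composition gives $\xi\comp(\phi\comp\psi)\leq\xi\comp 1_Y^{\nat}=\xi$. For ``$\geq$'' I would start from $\xi\lda\psi=(\xi\lda\psi)\comp 1_X^{\nat}$, bound $1_X^{\nat}\leq\psi\comp\phi$ to obtain $(\xi\lda\psi)\comp(\psi\comp\phi)$, reassociate as $\bigl((\xi\lda\psi)\comp\psi\bigr)\comp\phi$, and apply $(\xi\lda\psi)\comp\psi\leq\xi$ from Proposition~\ref{QFRel-cal}(7). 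The companion identity $\psi\comp\xi=\phi\rda\xi$ is entirely dual, using instead the adjunction for $\rda$, the inequalities $\phi\comp\psi\leq 1_Y^{\nat}$ and $1_X^{\nat}\leq\psi\comp\phi$, and $\phi\comp(\phi\rda\xi)\leq\xi$.

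With part~\ref{adj-dist-cal:comp} established, the remaining identities follow by substitution. For part~\ref{adj-dist-cal:imp-comp} I rewrite $(\xi\rda\xi')\comp\phi=(\xi\rda\xi')\lda\psi$, pull $\lda\psi$ through the division by applying the exchange law $(\psi\rda\xi)\lda\phi=\psi\rda(\xi\lda\phi)$ of Proposition~\ref{QFRel-cal}(6) in its correctly-typed instance, and finally rewrite $\xi'\lda\psi=\xi'\comp\phi$; the second identity of \ref{adj-dist-cal:imp-comp} is obtained dually via $\psi\comp-=\phi\rda-$. Parts~\ref{adj-dist-cal:comp-imp} and~\ref{adj-dist-cal:order} combine part~\ref{adj-dist-cal:comp} with the composition--implication laws of Proposition~\ref{QFRel-cal}(5): for instance $(\phi\comp\xi)\rda\xi'=\xi\rda(\phi\rda\xi')=\xi\rda(\psi\comp\xi')$ for \ref{adj-dist-cal:comp-imp}, and $\psi\comp(\xi\rda\xi')=\phi\rda(\xi\rda\xi')=(\xi\comp\phi)\rda\xi'$ for \ref{adj-dist-cal:order}, with the left-handed versions handled symmetrically.

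The only genuine subtlety, and hence the step to watch, is the bookkeeping of domains and codomains: one must track the types of $\xi$ and $\xi'$ carefully so that every composition and implication is legal and each invocation of Proposition~\ref{QFRel-cal} is the correctly relabelled instance. The inequalities themselves are routine once part~\ref{adj-dist-cal:comp} is in place, so I expect no real obstacle beyond this type discipline.
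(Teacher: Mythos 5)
Your proposal is correct. Note, however, that the paper itself offers no proof of this proposition at all --- it is stated with a pointer to \cite{Heymans2010} --- so there is no in-paper argument to compare against; your derivation is the standard quantaloid-internal one and fills the gap faithfully. Part~\ref{adj-dist-cal:comp} by double inequality is exactly right: $\xi\comp\phi\leq\xi\lda\psi$ from $\xi\comp(\phi\comp\psi)\leq\xi\comp 1_Y^{\nat}=\xi$ and the adjunction of Proposition~\ref{QFRel-cal}(1), and the reverse from $\xi\lda\psi=(\xi\lda\psi)\comp 1_X^{\nat}\leq\bigl((\xi\lda\psi)\comp\psi\bigr)\comp\phi\leq\xi\comp\phi$ using the unit $1_X^{\nat}\leq\psi\comp\phi$ and Proposition~\ref{QFRel-cal}(7); the reductions of \ref{adj-dist-cal:comp-imp}--\ref{adj-dist-cal:order} to Proposition~\ref{QFRel-cal}(5) and (6) via the substitutions $-\comp\phi=-\lda\psi$ and $\psi\comp-=\phi\rda-$ all type-check (I verified each relabelled instance). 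Two small points worth making explicit in a write-up: absorbing the identities ($\xi\comp 1_Y^{\nat}=\xi$, and $(\xi\lda\psi)\comp 1_X^{\nat}=\xi\lda\psi$) uses that $\xi$, and implications of distributors, are themselves $\FQ$-distributors rather than mere $\FQ$-relations --- the paper guarantees this since $\QFDist$ is a quantaloid whose implications are computed as in $\QFRel$; and your argument never uses anything beyond the quantaloid axioms, so it proves the identities for internal adjunctions in an arbitrary quantaloid, which is precisely the generality in which \cite{Heymans2010} states them.
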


\subsection{$\FQ$-Galois connections}

Parallel to the definition of adjoint $\FQ$-distributors, internal adjunctions in the ordered category $\QFOrd$ give the definition of $\FQ$-Galois connections between $\FQ$-preordered $\FQ$-subsets:

\begin{defn}
A pair of $\FQ$-order-preserving maps $f\colon X\to Y$ and $g\colon Y\to X$ forms a \emph{$\FQ$-Galois connection} (or, a \emph{$\FQ$-adjunction}), written as $f\dv g$, if
$$1_X\leq gf\quad\text{and}\quad fg\leq 1_Y.$$
In this case, one says that $f$ is a \emph{left adjoint} of $g$, and $g$ is a \emph{right adjoint} of $f$.
\end{defn}

\begin{rem}
If one considers $\FQ$-preordered $\FQ$-subsets as categories enriched in the quantaloid $\DQ$ (see Remark~\ref{QFOrd-CT}), then $\FQ$-Galois connections are precisely \emph{adjoint $\DQ$-functors} between $\DQ$-enriched categories.
\end{rem}

It is useful to characterize $\FQ$-Galois connections in the following ways:

\begin{prop} {\rm(See \cite{Stubbe2005}.)} \label{adjoint-graph}
Let $f\colon X\to Y$ and $g\colon Y\to X$ be a pair of $\FQ$-order-preserving maps. Then the following statements are equivalent{\textup:}
\begin{enumerate}[label={\rm(\roman*)}]
\item $f\dv g$ in $\QFOrd$.
\item \label{adjoint-graph:fnat=gnat}
    $f_{\nat}=g^{\nat}${\textup;} that is, $1_Y^{\nat}(fx,y)=1_X^{\nat}(x,gy)$ for all $x\in X$ and $y\in Y$.
\item $g_{\nat}\dv f_{\nat}$ in $\QFDist$.
\item $g^{\nat}\dv f^{\nat}$ in $\QFDist$.
\end{enumerate}
\end{prop}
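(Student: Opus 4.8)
The plan is to verify (i)$\iff$(ii) by a direct relational computation, and then to deduce (ii)$\iff$(iii) and (ii)$\iff$(iv) formally from the uniqueness of adjoints in the quantaloid $\QFDist$ together with the always-valid adjunctions $f_{\nat}\dv f^{\nat}$ and $g_{\nat}\dv g^{\nat}$ of Example~\ref{adjoint-dist-exmp}\,\ref{adjoint-dist-exmp:graph}. The substantive content is the equivalence (i)$\iff$(ii), which unwinds to the pointwise identity $1_Y^{\nat}(fx,y)=1_X^{\nat}(x,gy)$, i.e.\ $f_{\nat}=g^{\nat}$.

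For (i)$\Rightarrow$(ii) I would prove the two inequalities separately. To get $1_Y^{\nat}(fx,y)\leq 1_X^{\nat}(x,gy)$, apply the $\FQ$-order-preservation of $g$ to obtain $1_Y^{\nat}(fx,y)\leq 1_X^{\nat}(gfx,gy)$, then feed this into transitivity $1_X^{\nat}\comp 1_X^{\nat}\leq 1_X^{\nat}$ evaluated at the intermediate point $gfx$: the factor $1_X^{\nat}(x,gfx)$ that appears is bounded below by $|x|$ precisely because $1_X\leq gf$, and divisibility \ref{QPd} (using $|gfx|=|x|$) gives $(1_X^{\nat}(gfx,gy)\ldd|x|)\with|x|=1_X^{\nat}(gfx,gy)$, so the transitivity term is at least $1_X^{\nat}(gfx,gy)\geq 1_Y^{\nat}(fx,y)$. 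The reverse inequality $1_X^{\nat}(x,gy)\leq 1_Y^{\nat}(fx,y)$ is dual: order-preservation of $f$ gives $1_X^{\nat}(x,gy)\leq 1_Y^{\nat}(fx,fgy)$, and transitivity in $Y$ at the point $fgy$ together with $fg\leq 1_Y$ (which forces the relevant implication factor to be $\geq e$) yields the bound. For (ii)$\Rightarrow$(i) I would simply specialize: putting $y=fx$ gives $1_X^{\nat}(x,gfx)=1_Y^{\nat}(fx,fx)\geq|x|$ by reflexivity \ref{QPr}, i.e.\ $1_X\leq gf$, while putting $x=gy$ gives $1_Y^{\nat}(fgy,y)=1_X^{\nat}(gy,gy)\geq|y|$, i.e.\ $fg\leq 1_Y$.

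With (i)$\iff$(ii) established, the remaining equivalences are purely formal. In any quantaloid adjoints are unique: if $\phi\dv\psi$ and $\phi\dv\psi'$, then the unit/counit inequalities give $\psi\leq\psi'\comp\phi\comp\psi\leq\psi'$ and dually $\psi'\leq\psi$, so $\psi=\psi'$ (and symmetrically for left adjoints). Since $g_{\nat}\dv g^{\nat}$ always holds, statement (iii), namely $g_{\nat}\dv f_{\nat}$, holds if and only if $f_{\nat}$ is the unique right adjoint of $g_{\nat}$, that is $f_{\nat}=g^{\nat}$, which is (ii); dually, since $f_{\nat}\dv f^{\nat}$ always holds, (iv), namely $g^{\nat}\dv f^{\nat}$, holds if and only if $g^{\nat}$ is the unique left adjoint of $f^{\nat}$, that is $g^{\nat}=f_{\nat}$, again (ii).

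The step I expect to be the main obstacle is the transitivity computation inside (i)$\Rightarrow$(ii): over a non-integral, non-divisible $\FQ$ one cannot treat $\with$ as a meet and simply cancel membership degrees, so the bookkeeping of the $\DQ$-valued factors $(\,\cdot\,\ldd|x|)\with|x|$ and $(\,\cdot\,\ldd|y|)\with|y|$ must be handled carefully through divisibility \ref{QPd} and the membership identities $|gfx|=|x|$ and $|fgy|=|y|$, rather than by the naive order-theoretic manipulation that would suffice in the crisp case.
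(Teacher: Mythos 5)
Your proof is correct, and there is in fact nothing in the paper to compare it against: Proposition~\ref{adjoint-graph} is stated with a pointer to \cite{Stubbe2005} and no in-text proof, the equivalence being a known fact about adjoint functors between quantaloid-enriched categories. Your self-contained pointwise verification is exactly the kind of order-theoretic unpacking the paper otherwise advocates, and all steps check out: in (i)$\Rightarrow$(ii) the bound $(1_X^{\nat}(gfx,gy)\ldd|x|)\with|x|\leq(1_X^{\nat}(gfx,gy)\ldd|gfx|)\with 1_X^{\nat}(x,gfx)\leq 1_X^{\nat}(x,gy)$ is legitimate because $1_X^{\nat}(x,gfx)\geq|x|$ and $1_X^{\nat}(gfx,gy)\in\DQ(|gfx|,|gy|)$ with $|gfx|=|x|$, so \ref{QPd} indeed collapses the left-hand side to $1_X^{\nat}(gfx,gy)$; the specializations $y=fx$ and $x=gy$ in (ii)$\Rightarrow$(i) together with \ref{QPr} are right; and the reduction of (iii) and (iv) to (ii) via uniqueness of adjoints in $\QFDist$ is sound, since the hom-sets of $\QFDist$ are posets (complete lattices, by Proposition~\ref{QFRel-comp}\,\ref{QFRel-comp:join}) and $f_{\nat}\dv f^{\nat}$, $g_{\nat}\dv g^{\nat}$ always hold by Example~\ref{adjoint-dist-exmp}\,\ref{adjoint-dist-exmp:graph}. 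One small simplification worth noting: the divisibility bookkeeping you flag as the main obstacle can be avoided entirely. Using the other form of the composition in \eqref{psi-circ-phi}, transitivity at $gfx$ reads $1_X^{\nat}(gfx,gy)\with\bigl(|gfx|\rdd 1_X^{\nat}(x,gfx)\bigr)\leq 1_X^{\nat}(x,gy)$, and $|gfx|=|x|\leq 1_X^{\nat}(x,gfx)$ makes the implication factor $\geq e$ --- exactly the trick you already use in the reverse direction --- so the two directions are perfectly symmetric and neither needs \ref{QPd}. Your route through \ref{QPd} is nevertheless valid, precisely because the values of a $\FQ$-preorder lie in the appropriate $\DQ$-hom-sets by Definition~\ref{Q-relation}.
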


Condition~\ref{adjoint-graph:fnat=gnat} in the above proposition is in fact strong enough to determine a $\FQ$-Galois connection even without the premise that $f$ and $g$ are $\FQ$-order-preserving:

\begin{prop} {\rm(See \cite{Shen2014}.)}
If $f\colon X\to Y$ and $g\colon Y\to X$ are a pair of membership-preserving maps between $\FQ$-preordered $\FQ$-subsets (need not be $\FQ$-order-preserving maps), then the following statements are equivalent{\textup:}
\begin{enumerate}[label={\rm(\roman*)}]
\item $f$ and $g$ are $\FQ$-order-preserving, and $f\dv g$ in $\QFOrd$.
\item $1_Y^{\nat}(fx,y)=1_X^{\nat}(x,gy)$ for all $x\in X$ and $y\in Y$.
\end{enumerate}
\end{prop}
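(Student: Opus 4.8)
The plan is to dispatch $(\mathrm{i})\Rightarrow(\mathrm{ii})$ by appeal to the preceding proposition, and to prove the substantial direction $(\mathrm{ii})\Rightarrow(\mathrm{i})$ directly from the defining axioms of a $\FQ$-preorder, \emph{bootstrapping} the $\FQ$-order-preservation of $f$ and $g$ out of the single identity in (ii). For $(\mathrm{i})\Rightarrow(\mathrm{ii})$, observe that the identity $1_Y^{\nat}(fx,y)=1_X^{\nat}(x,gy)$ is precisely the equality $f_{\nat}=g^{\nat}$ of graph and cograph; hence this implication is exactly the implication from condition (i) to condition~\ref{adjoint-graph:fnat=gnat} already recorded in Proposition~\ref{adjoint-graph}. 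All the work therefore lies in the converse, where I may \emph{not} presuppose that $f$ and $g$ preserve the $\FQ$-order.

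For $(\mathrm{ii})\Rightarrow(\mathrm{i})$ I would first extract the two adjunction inequalities using reflexivity alone. Putting $y=fx$ in (ii) gives $1_X^{\nat}(x,gfx)=1_Y^{\nat}(fx,fx)\geq|fx|=|x|$, which is exactly $1_X\leq gf$; dually, putting $x=gy$ gives $1_Y^{\nat}(fgy,y)=1_X^{\nat}(gy,gy)\geq|gy|=|y|$, which is $fg\leq 1_Y$. Here I use reflexivity \ref{QPr} of $1_X^{\nat}$ and $1_Y^{\nat}$ together with the membership-preserving hypotheses $|fx|=|x|$ and $|gy|=|y|$.

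The heart of the argument is to upgrade these to $\FQ$-order-preservation, and this is where transitivity \ref{QPt} and divisibility \ref{QPd} enter. To show $f$ is $\FQ$-order-preserving I rewrite the target $1_Y^{\nat}(fx,fx')$ as $1_X^{\nat}(x,gfx')$ via (ii) and then chain $x\to x'\to gfx'$ through the intermediate point $x'$: transitivity yields $1_X^{\nat}(x',gfx')\with(|x'|\rdd 1_X^{\nat}(x,x'))\leq 1_X^{\nat}(x,gfx')$. Since $1_X^{\nat}(x',gfx')\geq|x'|$ by the inequality $1_X\leq gf$ just obtained, monotonicity of $\with$ together with the left-divisibility $|x'|\with(|x'|\rdd 1_X^{\nat}(x,x'))=1_X^{\nat}(x,x')$ from \ref{QPd} collapses the left-hand side to at least $1_X^{\nat}(x,x')$, giving $1_X^{\nat}(x,x')\leq 1_Y^{\nat}(fx,fx')$. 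The treatment of $g$ is the mirror image: using (ii) I rewrite $1_X^{\nat}(gy,gy')=1_Y^{\nat}(fgy,y')$, chain $fgy\to y\to y'$ through $y$ to get $(1_Y^{\nat}(y,y')\ldd|y|)\with 1_Y^{\nat}(fgy,y)\leq 1_Y^{\nat}(fgy,y')$, and invoke $1_Y^{\nat}(fgy,y)\geq|y|$ (from $fg\leq 1_Y$) with the right-divisibility $(1_Y^{\nat}(y,y')\ldd|y|)\with|y|=1_Y^{\nat}(y,y')$ to conclude $1_Y^{\nat}(y,y')\leq 1_X^{\nat}(gy,gy')$. Once $f$ and $g$ are known to be $\FQ$-order-preserving, the two inequalities of the previous paragraph already state $f\dv g$, so (i) holds.

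The one point to watch — and the only genuine obstacle — is exactly this bootstrapping: because $\FQ$-order-preservation is not available at the outset, every manipulation of $1_X^{\nat}(x,x')$ and $1_Y^{\nat}(y,y')$ must be justified purely from reflexivity, transitivity and divisibility. What makes the two chains close up is the coordinated choice of intermediate point ($x'$ for $f$, $y$ for $g$) and of the matching side of the divisibility law \ref{QPd}: left-divisibility paired with the $\rdd$-form of transitivity for $f$, and right-divisibility paired with the $\ldd$-form for $g$.
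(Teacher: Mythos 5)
Your proposal is correct: the easy direction follows from Proposition~\ref{adjoint-graph} as you say, and your bootstrapping of $\FQ$-order-preservation from (ii) --- reflexivity giving $1_X^{\nat}(x',gfx')\geq|x'|$ and $1_Y^{\nat}(fgy,y)\geq|y|$, then transitivity \ref{QPt} in its $\rdd$-form combined with left-divisibility of $1_X^{\nat}(x,x')$ by $|x'|$ for $f$, and in its $\ldd$-form combined with right-divisibility of $1_Y^{\nat}(y,y')$ by $|y|$ for $g$ --- is exactly the elementary unwinding of the composition in the quantaloid $\DQ$ that drives the standard enriched-category argument. The paper itself prints no proof for this proposition, deferring to \cite{Shen2014}, and your argument is essentially that proof rendered in the paper's order-theoretic language \ref{QPd}--\ref{QPt}, so there is nothing to correct.
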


From the above characterizations of $\FQ$-Galois connections one easily sees that the right adjoint (or the left adjoint) of $f\colon X\to Y$ in $\QFOrd$, when it exists, is unique up to isomorphism; that is, if $g,g'\colon Y\to X$ are both right adjoints (or both left adjoints) of $f$, then $g\cong g'$ in the ordered hom-set $\QFOrd(Y,X)$, and one has $g=g'$ when $X$ is separated.

\begin{exmps} \label{Q-Galois-exmp} \
\begin{enumerate}[label={\rm (\arabic*)}]
\item \label{Q-Galois-exmp:ub-lb}
    Each $X\in\QFOrd$ induces a $\FQ$-Galois connection
    $\bfig
    \morphism/@{->}@<4pt>/<500,0>[\PX`\PdX;\ub]
    \morphism(500,0)|b|/@{->}@<4pt>/<-500,0>[\PdX`\PX;\lb]
    \place(240,5)[\mbox{\footnotesize{$\bot$}}]
    \efig$,
    where $\ub\colon\PX\to\PdX$ (see the proof of Proposition~\ref{sup-inf}) and $\lb\colon\PdX\to\PX$ are respectively given by
    $$\ub\mu=1_X^{\nat}\lda\mu\quad\text{and}\quad\lb\lam=\lam\rda 1_X^{\nat}$$
    for all $\mu\in\PX$ and $\lam\in\PdX$.
 \item \label{Q-Galois-exmp:image}
    Every $\FQ$-order-preserving map $f\colon X\to Y$ gives rise to two $\FQ$-Galois connections
    $$\bfig
    \morphism/@{->}@<4pt>/<500,0>[\PX`\PY;f^{\ra}]
    \morphism(500,0)|b|/@{->}@<4pt>/<-500,0>[\PY`\PX;f^{\la}]
    \place(250,5)[\mbox{\footnotesize{$\bot$}}]
    \efig
    \quad\text{and}\quad
    \bfig
    \morphism/@{->}@<4pt>/<500,0>[\PdY`\PdX,;f^{\nla}]
    \morphism(500,0)|b|/@{->}@<4pt>/<-500,0>[\PdX,`\PdY;f^{\nra}]
    \place(250,5)[\mbox{\footnotesize{$\bot$}}]
    \efig$$
    where $f^{\ra}$ and $f^{\nra}$ are defined as in \eqref{fra-def}, while
    $$f^{\la}\mu=\mu\comp f_{\nat}\quad\text{and}\quad f^{\nla}\lam=f^{\nat}\comp\lam$$
    for all $\mu\in\PY$ and $\lam\in\PdY$.
\item \label{Q-Galois-exmp:complete}
    For any $X\in\FOrd$, the following statements are equivalent:
    \begin{enumerate}[label=(\roman*)]
    \item $X$ is complete;
    \item the Yoneda embedding $\sy\colon X\to\PX$ has a left adjoint in $\QFOrd$, given by $\sup\colon\PX\to X$;
    \item the co-Yoneda embedding $\syd\colon X\to\PdX$ has a right adjoint in $\QFOrd$, given by $\inf\colon\PdX\to X$.
    \end{enumerate}
\item \label{Q-Galois-exmp:tensor}
    For any $p\in\FQ$, from Example~\ref{PX-exmp}\,\ref{PX-exmp:singleton} we see that
    $$\sP\boldsymbol{1}_p=\{u\in\DQ(p,q)\mid q\in\FQ\}\quad\text{and}\quad\sPd\boldsymbol{1}_p=\{v\in\DQ(q,p)\mid q\in\FQ\}.$$
    Let $X\in\QFOrd$. Then for any $x\in X$,
    $$1_X^{\nat}(x,-)\colon X\to\sP\boldsymbol{1}_{|x|},\quad y\longmapsto 1_X^{\nat}(x,y)\quad\text{and}\quad 1_X^{\nat}(-,x)\colon X\to\sPd\boldsymbol{1}_{|x|},\quad y\longmapsto 1_X^{\nat}(y,x)$$
    are both $\FQ$-order-preserving maps. It follows soon from Definition~\ref{tensor-def} that
    \begin{itemize}
    \item $X$ is tensored if, and only if, for every $x\in X$, $1_X^{\nat}(x,-)\colon X\to\sP\boldsymbol{1}_{|x|}$ has a left adjoint in $\QFOrd$, given by $-\otimes x\colon\sP\boldsymbol{1}_{|x|}\to X$;
    \item $X$ is cotensored if, and only if, for every $x\in X$, $1_X^{\nat}(-,x)\colon X\to\sPd\boldsymbol{1}_{|x|}$ has a right adjoint in $\QFOrd$, given by $-\rat x\colon\sPd\boldsymbol{1}_{|x|}\to X$.
    \end{itemize}
\item \label{Q-Galois-exmp:Cauchy}
    A partial metric space $X=(X,\al)$ is \emph{Cauchy complete} \cite{Pu2012} if every Cauchy sequence in $X$ converges (see Example~\ref{adjoint-dist-exmp}\,\ref{adjoint-dist-exmp:Cauchy}) and there exists $x\in X$ with $\al(x,x)=\infty$. Considering $X$ as a $\FQ$-preordered $\FQ$-subset, where $\FQ=([0,\infty]^{\op},+,0)$, it makes sense to define
    \begin{equation} \label{PXc-def}
    (\PX)_{\sf c}:=\{\mu\colon X\oto\boldsymbol{1}_q\mid\mu\ \text{is a right adjoint in}\ \QFDist,\ q\in\FQ\},
    \end{equation}
    which becomes a $\FQ$-preordered $\FQ$-subset with structures inherited from $\PX$. Then the Cauchy completeness of partial metric spaces can be characterized by $\FQ$-Galois connections: a partial metric space $X$ is Cauchy complete if, and only if, the restriction of the Yoneda embedding $\sy\colon X\to(\PX)_{\sf c}$ has a left adjoint in $\QFOrd$, given by $\sup\colon(\PX)_{\sf c}\to X$.

  For a general $\FQ$, one may define the Cauchy completeness of $X\in\QFOrd$ in the same way: $X$ is \emph{Cauchy complete} if $\sy\colon X\to(\PX)_{\sf c}$ has a left adjoint, where $(\PX)_{\sf c}$ is given by \eqref{PXc-def}. When $\FQ=\Om$ is a frame, Cauchy complete $\Om$-preordered $\Om$-subsets are exactly \emph{complete skew $\Om$-sets} in the sense of Borceux-Cruciani \cite{Borceux1998}.
\end{enumerate}
\end{exmps}

Moreover, left adjoint maps between complete $\FQ$-preordered $\FQ$-subsets are precisely $\sup$-preserving maps:

\begin{thm} {\rm(See \cite{Stubbe2005,Stubbe2006}.)} \label{la-condition}
For any $\FQ$-order-preserving map $f\colon X\to Y$, with $X$ complete, the following statements are equivalent{\textup:}
\begin{enumerate}[label={\rm (\roman*)}]
\item \label{la-condition:la}
    $f$ is a left {\textup(}resp.\ right{\textup)} adjoint in $\QFOrd$.
\item \label{la-condition:sup}
    $f$ is $\sup$-preserving {\textup(}resp.\ $\inf$-preserving{\textup)}.
\item \label{la-condition:tensor}
    $f$ is a left {\textup(}resp.\ right{\textup)} adjoint between the underlying preordered sets of $X$ and $Y$, and preserves tensors {\textup(}resp.\ cotensors{\textup)} in the sense that $f(u\otimes_X x)= u\otimes_Y fx$ {\textup(}resp.\ $f(v\rat_X x)=v\rat_Y fx${\textup)} for all $x\in X$, $q\in\FQ$ and $u\in\DQ(|x|,q)$ {\textup(}resp.\ $v\in\DQ(q,|x|)${\textup)}.
\end{enumerate}
\end{thm}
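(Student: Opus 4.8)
The plan is to prove the statements for left adjoints, suprema and tensors; the parenthetical version for right adjoints, infima and cotensors then follows by passing to the opposite $\FQ$-preorders, which exchanges $\sup$ with $\inf$, tensors with cotensors, and left with right adjoints. Throughout I would use that $X$ is complete, so that by Theorem~\ref{complete-tensor-order} every supremum, tensor and conical join exists and is assembled from the others. I would establish the cycle $\text{(i)}\Rightarrow\text{(ii)}\Rightarrow\text{(iii)}\Rightarrow\text{(i)}$.

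For $\text{(i)}\Rightarrow\text{(ii)}$, suppose $f\dv g$, so that $f_{\nat}=g^{\nat}$ and $g_{\nat}\dv f_{\nat}$ in $\QFDist$ by Proposition~\ref{adjoint-graph}. Fix $\mu\in\PX$ and put $s=\sup\mu$, so $\syd_X s=1_X^{\nat}\lda\mu$ by Definition~\ref{sup-def}. One first checks $f^{\nra}\syd_X=\syd_Y f$ and $f^{\ra}\sy_X=\sy_Y f$, both of which reduce to the distributor identities $f_{\nat}\comp 1_X^{\nat}=f_{\nat}$ and $1_X^{\nat}\comp f^{\nat}=f^{\nat}$. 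Hence $\syd_Y(fs)=f^{\nra}\syd_X s=f_{\nat}\comp(1_X^{\nat}\lda\mu)$. On the other side, using the implication laws of Proposition~\ref{QFRel-cal} together with the identities of Proposition~\ref{adj-dist-cal} for $f_{\nat}\dv f^{\nat}$ and $g_{\nat}\dv f_{\nat}$, both $1_Y^{\nat}\lda f^{\ra}\mu$ and $f_{\nat}\comp(1_X^{\nat}\lda\mu)$ collapse to $f_{\nat}\lda\mu$. Since also $|fs|=|\mu|=|\sup_Y f^{\ra}\mu|$, the element $fs$ satisfies the defining property $1_Y^{\nat}(fs,-)=1_Y^{\nat}\lda f^{\ra}\mu$ of $\sup_Y f^{\ra}\mu$, so $f$ is $\sup$-preserving.

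For $\text{(ii)}\Rightarrow\text{(iii)}$, I would recall from the proof of Theorem~\ref{complete-tensor-order} that tensors and conical joins are particular suprema, namely $u\otimes x=\sup(u\comp\sy x)$ and $\bv_{i}x_i=\sup(\bv_i\sy x_i)$ for $\{x_i\}\subseteq X_q$. Applying $f^{\ra}\sy_X=\sy_Y f$ and $f^{\ra}(u\comp\nu)=u\comp f^{\ra}\nu$, $\sup$-preservation gives $f(u\otimes_X x)=\sup_Y(u\comp\sy_Y(fx))=u\otimes_Y fx$, so $f$ preserves tensors. Moreover the underlying preorder of $X$ decomposes as the disjoint union of the fibres $X_q$ (since $x\le y$ forces $|x|=|y|$), each of which is a complete lattice by order-completeness; as $f$ is membership-preserving and preserves the joins inside each $X_q$, the restrictions $f|_{X_q}\colon X_q\to Y_q$ are join-preserving maps of complete lattices, hence left adjoints, and assembling their right adjoints exhibits $f$ as a left adjoint between the underlying preordered sets.

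For $\text{(iii)}\Rightarrow\text{(i)}$, let $g$ be the map of underlying preorders assembled from the fibrewise right adjoints, so $g$ is membership-preserving with $f|_{X_q}\dv g|_{Y_q}$. By Example~\ref{Q-Galois-exmp} the tensor map is left adjoint to $1_X^{\nat}(x,-)$, which at the level of underlying orders (via Lemma~\ref{DQ-elements} and the description of the $\FQ$-preorder on $\sP\boldsymbol{1}_{|x|}$) reads $u\otimes x\le z\iff u\le 1_X^{\nat}(x,z)$ for $u\in\DQ(|x|,|z|)$. Then for all $x\in X$, $y\in Y$ and $u\in\DQ(|x|,|y|)$ one chains
\[
u\le 1_Y^{\nat}(fx,y)\iff u\otimes_Y fx\le y\iff f(u\otimes_X x)\le y\iff u\otimes_X x\le gy\iff u\le 1_X^{\nat}(x,gy),
\]
using the tensor adjunction in $Y$, tensor-preservation, the fibrewise adjunction $f\dv g$, and the tensor adjunction in $X$. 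As this holds for every such $u$ and both $1_Y^{\nat}(fx,y),1_X^{\nat}(x,gy)\in\DQ(|x|,|y|)$, we conclude $1_Y^{\nat}(fx,y)=1_X^{\nat}(x,gy)$, i.e.\ $f_{\nat}=g^{\nat}$, whence $f\dv g$ in $\QFOrd$ by Proposition~\ref{adjoint-graph}. The main obstacle is precisely this step: the hypotheses of (iii) concern only the underlying crisp orders together with tensors, while the conclusion concerns the full $\FQ$-valued hom $1_Y^{\nat}(fx,y)$, which can take any value in $\DQ(|x|,|y|)$ rather than only in $\{\bot,e\}$; the tensor Galois connection $-\otimes x\dv 1_X^{\nat}(x,-)$ is exactly the device that lets one probe this hom elementwise and reconstruct it from the ordinary adjunction, and checking that it restricts correctly to underlying orders is the one point needing genuine care.
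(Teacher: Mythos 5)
Your proposal is correct and follows essentially the same route as the paper's proof: the cycle (i)$\Rightarrow$(ii)$\Rightarrow$(iii)$\Rightarrow$(i), with (i)$\Rightarrow$(ii) via the distributor calculus for $f_{\nat}\dv f^{\nat}$ (and $f_{\nat}=g^{\nat}$) collapsing both sides to $f_{\nat}\lda\mu$, (ii)$\Rightarrow$(iii) by realizing tensors and fibrewise joins as suprema, and (iii)$\Rightarrow$(i) by establishing $1_Y^{\nat}(fx,y)=1_X^{\nat}(x,gy)$ through the tensor adjunction --- your ``for all $u\in\DQ(|x|,|y|)$'' chain is precisely the paper's two specializations $u=1_X^{\nat}(x,gy)$ and $u=1_Y^{\nat}(fx,y)$, and your appeal to the adjunction criterion for maps not yet known to be $\FQ$-order-preserving is covered by the unnumbered proposition following Proposition~\ref{adjoint-graph}. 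The one wrinkle is your dualization step: for non-commutative $\FQ$ the opposite of a $\FQ$-preordered $\FQ$-subset is a $\FQ^{\tau}$-preordered $\FQ^{\tau}$-subset (cf.\ Remark~\ref{Q-valued-preorder}), so the right-adjoint/infimum/cotensor version follows from the left-adjoint version over the conjugate quantale rather than over $\FQ$ itself --- legitimate since the theorem is proved for all unital quantales at once, and a point the paper glosses just as silently.
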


\begin{proof}
\ref{la-condition:la}$\implies$\ref{la-condition:sup}: If $g\colon Y\to X$ is a right adjoint of $f$, then for all $\mu\in\PX$,
$$1_Y^{\nat}(f{\sup}_X\mu,-)=1_X^{\nat}({\sup}_X\mu,g-)=1_X^{\nat}(-,g-)\lda\mu=f_{\nat}\lda\mu=1_Y^{\nat}\lda(\mu\comp f^{\nat})=1_Y^{\nat}\lda f^{\ra}\mu,$$
where the penultimate equality follows from Proposition~\ref{adj-dist-cal}\,\ref{adj-dist-cal:comp-imp} and the fact that $f_{\nat}\dv f^{\nat}$. Thus $f\sup_X\mu=\sup_Y f^{\ra}\mu$.

\ref{la-condition:sup}$\implies$\ref{la-condition:tensor}: Since tensors and underlying joins are both suprema (see the proof of Theorem~\ref{complete-tensor-order}), the conclusion soon follows.

\ref{la-condition:tensor}$\implies$\ref{la-condition:la}: Since $f$ has a right adjoint $g\colon Y\to X$ in the underlying preorder, it suffices to show that $f\dv g$ in $\QFOrd$. To this end, we show that $1_Y^{\nat}(fx,y)=1_X^{\nat}(x,gy)$ for all $x\in X$ and $y\in Y$. On one hand, ${1_X^{\nat}(x,gy)\leq 1_Y^{\nat}(fx,y)}$ since
$$|y|\leq 1_X^{\nat}(x,gy)\lda 1_X^{\nat}(x,gy)=1_X^{\nat}\bigl(\bigl(1_X^{\nat}(x,gy)\mathrel{\otimes_X}x\bigr),gy\bigr)$$
implies
$$|y|\leq 1_Y^{\nat}\bigl(f\bigl(1_X^{\nat}(x,gy)\mathrel{\otimes_X} x\bigl),y\bigr)=1_Y^{\nat}\bigl(\bigl(1_X^{\nat}(x,gy)\mathrel{\otimes_Y} fx\bigr),y\bigr)=1_Y^{\nat}(fx,y)\lda 1_X^{\nat}(x,gy),$$
following the facts that $f\dv g$ in $\Ord$ and $f$ preserves tensors. On the other hand, $1_Y^{\nat}(fx,y)\leq 1_X^{\nat}(x,gy)$ can be checked similarly, which completes the proof.
\end{proof}

\subsection{$\FQ$-polarities and {\textup(}dual{\textup)} $\FQ$-axialities}

In this subsection we are concerned with $\FQ$-Galois connections between (dual) $\FQ$-powersets of $\FQ$-preordered {\protect\linebreak}$\FQ$-subsets. With a slight modification of the terminologies in \cite{GutierrezGarcia2010}, which originated from \cite{Birkhoff1967,Erne1993a}, we have the following definition:

\begin{defn}
Let $X$ and $Y$ be $\FQ$-preordered $\FQ$-subsets.
\begin{enumerate}[itemsep=-0.7em,label=(\arabic*)]
\item A \emph{$\FQ$-polarity} from $X$ to $Y$ is a $\FQ$-Galois connection
    $\bfig
    \morphism/@{->}@<4pt>/<500,0>[\PX`\PdY;f]
    \morphism(500,0)|b|/@{->}@<4pt>/<-500,0>[\PdY`\PX;g]
    \place(240,5)[\mbox{\footnotesize{$\bot$}}]
    \efig$.
\item A \emph{$\FQ$-axiality} from $X$ to $Y$ is a $\FQ$-Galois connection
    $\bfig
    \morphism/@{->}@<4pt>/<500,0>[\PX`\PY;f]
    \morphism(500,0)|b|/@{->}@<4pt>/<-500,0>[\PY`\PX;g]
    \place(250,5)[\mbox{\footnotesize{$\bot$}}]
    \efig$.
\item A \emph{dual $\FQ$-axiality} from $X$ to $Y$ is a $\FQ$-Galois connection
    $\bfig
    \morphism/@{->}@<4pt>/<500,0>[\PdX`\PdY;f]
    \morphism(500,0)|b|/@{->}@<4pt>/<-500,0>[\PdY`\PdX;g]
    \place(250,5)[\mbox{\footnotesize{$\bot$}}]
    \efig$.
\end{enumerate}
\end{defn}

\begin{rem}
If we extend directly the terminologies of \cite{GutierrezGarcia2010} into the setting of $\FQ$-preordered $\FQ$-subsets, then it is not difficult to observe that
\begin{itemize}
\item a \emph{type I $\FQ$-polarity} from $X$ to $Y$ is a $\FQ$-polarity from $X$ to $Y$;
\item a \emph{type II $\FQ$-polarity} from $X$ to $Y$ is a $\FQ$-polarity from $Y$ to $X$, i.e., a $\FQ$-Galois connection
    $\bfig
    \morphism/@{->}@<4pt>/<500,0>[\PY`\PdX;f]
    \morphism(500,0)|b|/@{->}@<4pt>/<-500,0>[\PdX`\PY;g]
    \place(240,5)[\mbox{\footnotesize{$\bot$}}]
    \efig$;
\item a \emph{type I $\FQ$-axiality} from $X$ to $Y$ is a $\FQ$-axiality from $X$ to $Y$;
\item a \emph{type II $\FQ$-axiality} from $X$ to $Y$ is a dual $\FQ$-axiality from $Y$ to $X$.
\end{itemize}
So, as is already mentioned in \cite[Remark~3.2]{GutierrezGarcia2010} for the case of $\FQ$-preordered sets, type I $\FQ$-polarities from $X$ to $Y$ are precisely type II $\FQ$-polarities from $Y$ to $X$; that is why we combine these two concepts into ``$\FQ$-polarities''. However, the two types of $\FQ$-axialities are essentially different as they cannot be switched to each other simply by swapping the positions of $X$ and $Y$.
\end{rem}


As a special case of Theorem~\ref{la-condition}, we have the following characterizations of (dual) $\FQ$-axialities:

\begin{prop} \label{la-condition-PX}
Let $X,Y\in\QFOrd$ and $f\colon\PX\to\PY$ be a $\FQ$-order-preserving map. Then the following statements are equivalent{\textup:}
\begin{enumerate}[label={\rm(\roman*)}]
\item \label{la-condition-PX:la}
    $f$ is a left adjoint in $\QFOrd${\textup;} that is, there exists $g\colon\PY\to\PX$ such that $f\dv g$ forms a $\FQ$-axiality from $X$ to $Y$.
\item \label{la-condition-PX:tensor}
    $f$ is a left adjoint between the underlying preordered sets of $\PX$ and $\PY$, and $f(u\comp\mu)= u\comp f\mu$ for all $\mu\in\PX$, $q\in\FQ$ and $u\in\DQ(|\mu|,q)$.
\item \label{la-condition-PX:Yoneda}
    $f$ is a left adjoint between the underlying preordered sets of $\PX$ and $\PY$, and $f(u\comp\sy_X x)= u\comp f\sy_X x$ for all $x\in X$, $q\in\FQ$ and $u\in\DQ(|x|,q)$.
\end{enumerate}
\end{prop}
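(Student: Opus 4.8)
The plan is to deduce the whole statement from Theorem~\ref{la-condition} applied to $f\colon\PX\to\PY$, exploiting that both $\PX$ and $\PY$ are complete and that their tensors are just compositions, $u\otimes\mu=u\comp\mu$ (Examples~\ref{PX-tensor}). Under this identification, condition \ref{la-condition:tensor} of Theorem~\ref{la-condition} for the map $f\colon\PX\to\PY$ becomes verbatim our \ref{la-condition-PX:tensor}, while \ref{la-condition:la} is our \ref{la-condition-PX:la}; so \ref{la-condition-PX:la}$\Leftrightarrow$\ref{la-condition-PX:tensor} is immediate. The only genuinely new content is the equivalence \ref{la-condition-PX:tensor}$\Leftrightarrow$\ref{la-condition-PX:Yoneda}, and this is where I would spend the effort.

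The direction \ref{la-condition-PX:tensor}$\Rightarrow$\ref{la-condition-PX:Yoneda} is a triviality: I would simply specialize $\mu$ to the principal potential lower $\FQ$-subset $\sy_X x$, observing that $|\sy_X x|=|x|$ so that the index sets $\DQ(|\mu|,q)$ and $\DQ(|x|,q)$ match.

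For \ref{la-condition-PX:Yoneda}$\Rightarrow$\ref{la-condition-PX:tensor} my main tool would be a \emph{Yoneda decomposition} of an arbitrary $\mu\in\PX$ into tensored principals. Since $\mu\comp 1_X^{\nat}=\mu$ for $\mu\in\PX$, Proposition~\ref{QFRel-comp-imp} gives
$$\mu=\mu\comp 1_X^{\nat}=\bv_{x\in X}\mu(x)\comp\sy_X x,$$
each summand being the tensor of $\mu(x)\in\DQ(|x|,|\mu|)$ with $\sy_X x$. I would first note that on the fibre $\{\nu\in\PX\mid|\nu|=|\mu|\}$ the underlying preorder of $\PX$ agrees with the pointwise order of $\QFRel$, because $|\mu|\leq\nu'\lda\nu\Leftrightarrow|\mu|\comp\nu\leq\nu'\Leftrightarrow\nu\leq\nu'$ by Proposition~\ref{QFRel-cal}\,(1); hence the display above is also a join in the underlying preorder of $\PX$. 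As $f$ is a left adjoint between the underlying preordered sets, it preserves that join, so \ref{la-condition-PX:Yoneda} (taken with $u=\mu(x)$) yields $f\mu=\bv_{x\in X}\mu(x)\comp f\sy_X x$.

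Finally, for $u\in\DQ(|\mu|,q)$, distributing composition over joins (Proposition~\ref{QFRel-comp}\,\ref{QFRel-comp:join}) rewrites $u\comp\mu=\bv_{x\in X}(u\comp\mu(x))\comp\sy_X x$; applying $f$, using join-preservation and \ref{la-condition-PX:Yoneda} again for $u\comp\mu(x)\in\DQ(|x|,q)$, and distributing once more, I would arrive at
$$f(u\comp\mu)=\bv_{x\in X}(u\comp\mu(x))\comp f\sy_X x=u\comp\bv_{x\in X}\mu(x)\comp f\sy_X x=u\comp f\mu,$$
which is exactly \ref{la-condition-PX:tensor}. The hard part is precisely the Yoneda-decomposition step together with the fibrewise identification of underlying-preorder joins with pointwise joins, since that is what licenses invoking the join-preservation of the left adjoint $f$; everything else is a routine exercise in the $\FQ$-relational calculus.
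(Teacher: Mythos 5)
Your proposal is correct and takes essentially the same route as the paper: \ref{la-condition-PX:la}$\iff$\ref{la-condition-PX:tensor} via Theorem~\ref{la-condition} and Example~\ref{PX-tensor}, the trivial specialization for \ref{la-condition-PX:tensor}$\implies$\ref{la-condition-PX:Yoneda}, and for \ref{la-condition-PX:Yoneda}$\implies$\ref{la-condition-PX:tensor} the very same Yoneda decomposition $\mu=\mu\comp 1_X^{\nat}=\bv_{x\in X}\mu(x)\comp\sy_X x$ with $f$ pulled through the join by underlying left-adjointness and condition \ref{la-condition-PX:Yoneda} (the paper does this in a single chain of equalities, you in two stages, which is only cosmetic). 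Your explicit verification that within a fibre the underlying preorder of $\PX$ coincides with the pointwise order of $\QFRel$ --- which licenses treating the pointwise join as a join in the underlying preordered set --- is a point the paper's proof uses silently, so this is a welcome refinement rather than a different method.
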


\begin{proof}
\ref{la-condition-PX:la}$\iff$\ref{la-condition-PX:tensor} is an immediate consequence of Theorem~\ref{la-condition} and Example~\ref{PX-tensor}, and \ref{la-condition-PX:tensor}$\implies$\ref{la-condition-PX:Yoneda} is trivial. For \ref{la-condition-PX:Yoneda}$\implies$\ref{la-condition-PX:tensor}, note that for any $\mu\in\PX$ and $u\in\DQ(|\mu|,q)$,
\begin{align*}
f(u\comp\mu)&=f(u\comp\mu\comp 1_X^{\nat})&(\text{Equation~\eqref{PX-def-eq}})\\
&=f\Big(\bv_{x\in X}u\comp\mu(x)\comp 1_X^{\nat}(-,x)\Big)&(\text{Proposition~\ref{QFRel-comp-imp}})\\
&=f\Big(\bv_{x\in X}u\comp\mu(x)\comp\sy_X x\Big)\\
&=\bv_{x\in X}u\comp\mu(x)\comp f\sy_X x&\text{(Condition~\ref{la-condition-PX:Yoneda})}\\
&=u\comp\bv_{x\in X}\mu(x)\comp f\sy_X x&(\text{Proposition~\ref{QFRel-comp}\,\ref{QFRel-comp:join}})\\
&=u\comp f\Big(\bv_{x\in X}\mu(x)\comp\sy_X x\Big)&\text{(Condition~\ref{la-condition-PX:Yoneda})}\\
&=u\comp f\mu,&(\text{Equation~\eqref{PX-def-eq} and Proposition~\ref{QFRel-comp-imp}})
\end{align*}
which completes the proof.
\end{proof}

\begin{prop} \label{la-condition-PdX}
Let $X,Y\in\QFOrd$ and $g\colon\PdY\to\PdX$ be a $\FQ$-order-preserving map. Then the following statements are equivalent{\textup:}
\begin{enumerate}[label={\rm(\roman*)}]
\item \label{la-condition-PdX:la}
    $g$ is a right adjoint in $\QFOrd${\textup;} that is, there exists $f\colon\PdX\to\PdY$ such that $f\dv g$ forms a dual $\FQ$-axiality from $X$ to $Y$.
\item \label{la-condition-PdX:tensor}
    $g$ is a right adjoint between the underlying preordered sets of $\PdY$ and $\PdX$, and $g(\lam\comp v)=g\lam\comp v$ for all $\lam\in\PdY$, $q\in\FQ$ and $v\in\DQ(q,|\lam|)$.
\item \label{la-condition-PdX:Yoneda}
    $g$ is a right adjoint between the underlying preordered sets of $\PdY$ and $\PdX$, and $g(\syd_Y y\comp v)= g\syd_Y y\comp v$ for all $y\in Y$, $q\in\FQ$ and $v\in\DQ(q,|y|)$.
\end{enumerate}
\end{prop}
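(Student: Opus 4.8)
The statement is the order-dual of Proposition~\ref{la-condition-PX}, so the plan is to mirror that argument, systematically trading suprema, tensors and joins for infima, cotensors and meets. Concretely, I would prove \ref{la-condition-PdX:la}$\iff$\ref{la-condition-PdX:tensor} conceptually from the general machinery already available, dismiss \ref{la-condition-PdX:tensor}$\implies$\ref{la-condition-PdX:Yoneda} as a special case, and then carry out \ref{la-condition-PdX:Yoneda}$\implies$\ref{la-condition-PdX:tensor} by a direct calculation built on the co-Yoneda decomposition of a copresheaf.

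For \ref{la-condition-PdX:la}$\iff$\ref{la-condition-PdX:tensor}, the key observation is that $\PdY$ is complete (Example~\ref{PX-tensor}), so Theorem~\ref{la-condition} applies to $g\colon\PdY\to\PdX$ in its ``right adjoint/$\inf$-preserving/cotensor-preserving'' form. Thus $g$ is a right adjoint in $\QFOrd$ if and only if $g$ is a right adjoint between the underlying preordered sets and preserves cotensors. Since Example~\ref{PX-tensor} identifies the cotensors of $\PdY$ as $v\rat\lam=\lam\comp v$ for $v\in\DQ(q,|\lam|)$, preservation of cotensors reads precisely as $g(\lam\comp v)=g\lam\comp v$, which is the extra clause of \ref{la-condition-PdX:tensor}. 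The implication \ref{la-condition-PdX:tensor}$\implies$\ref{la-condition-PdX:Yoneda} is then immediate, as each $\syd_Y y$ is a particular copresheaf in $\PdY$.

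The substance is \ref{la-condition-PdX:Yoneda}$\implies$\ref{la-condition-PdX:tensor}. Here I would decompose an arbitrary $\lam\in\PdY$ through its principal copresheaves: by Equation~\eqref{PX-def-eq} and Proposition~\ref{QFRel-comp-imp} one has $\lam=1_Y^{\nat}\comp\lam=\bv_{y\in Y}\syd_Y y\comp\lam(-,y)$, where $\lam(-,y)\in\DQ(|\lam|,|y|)$, so $\lam(-,y)\comp v\in\DQ(q,|y|)$ lies in the range of the Yoneda clause. The computation then runs as the exact mirror of Proposition~\ref{la-condition-PX}:
\begin{align*}
g(\lam\comp v)&=g(1_Y^{\nat}\comp\lam\comp v) & (\text{Equation~\eqref{PX-def-eq}})\\
&=g\Big(\bv_{y\in Y}\syd_Y y\comp\lam(-,y)\comp v\Big) & (\text{Proposition~\ref{QFRel-comp-imp}})\\
&=\bv_{y\in Y}g\syd_Y y\comp\lam(-,y)\comp v & (\text{Condition~\ref{la-condition-PdX:Yoneda}})\\
&=\Big(\bv_{y\in Y}g\syd_Y y\comp\lam(-,y)\Big)\comp v & (\text{Proposition~\ref{QFRel-comp}\,\ref{QFRel-comp:join}})\\
&=g\Big(\bv_{y\in Y}\syd_Y y\comp\lam(-,y)\Big)\comp v & (\text{Condition~\ref{la-condition-PdX:Yoneda}})\\
&=g(1_Y^{\nat}\comp\lam)\comp v & (\text{Proposition~\ref{QFRel-comp-imp}})\\
&=g\lam\comp v. & (\text{Equation~\eqref{PX-def-eq}})
\end{align*}

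The step I expect to need the most care is pushing $g$ through the $\QFRel$-joins in the third and fifth lines, and this is exactly where the duality is \emph{not} purely formal. In Proposition~\ref{la-condition-PX} the map $f$ is a \emph{left} adjoint and the underlying order of $\PX$ agrees with the $\QFRel$-order, so $f$ directly preserves the $\QFRel$-joins. Here $g$ is only a \emph{right} adjoint between underlying preorders, hence preserves underlying \emph{meets}; the reconciliation is Remark~\ref{PdX-order}, which says the underlying order on a dual powerset is the reverse of the $\QFRel$-order. Within a single fibre of fixed membership degree $|\lam|$ the underlying meets of $\PdY$ (and of $\PdX$) therefore coincide with $\QFRel$-joins, so the right adjoint $g$ does preserve precisely the joins appearing above. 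I would state this reversal explicitly as the justification attached to the ``Condition~\ref{la-condition-PdX:Yoneda}'' lines, since it is the one genuinely non-routine point that keeps the dualized computation honest.
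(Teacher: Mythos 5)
Your proposal is correct and coincides with the paper's approach: the paper's entire proof of Proposition~\ref{la-condition-PdX} is the remark that it is ``similar to Proposition~\ref{la-condition-PX}, though one needs to be careful about the underlying preorders of $\PdX$ and $\PdY$ (see Remark~\ref{PdX-order})'', and your dualized co-Yoneda computation---with the right adjoint $g$ preserving underlying fibrewise meets, which the order reversal of Remark~\ref{PdX-order} identifies with $\QFRel$-joins---is exactly that argument carried out in detail. The only cosmetic slip is that in your final paragraph the joins pushed through $g$ live in the fibres of degree $q$ as well as $|\lam|$, but the reconciliation you state applies verbatim to every fibre, so nothing is lost.
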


\begin{proof}
Similar to Proposition~\ref{la-condition-PX}, though one needs to be careful about the underlying preorders of $\PdX$ and $\PdY$ (see Remark~\ref{PdX-order}).
\end{proof}

\subsection{$\FQ$-Galois connections vs. $\FQ$-polarities and {\textup(}dual{\textup)} $\FQ$-axialities}

The following proposition indicates that every $\FQ$-distributor gives rise to a $\FQ$-polarity, a $\FQ$-axiality and a dual $\FQ$-axiality:

\begin{prop} {\rm(See \cite{Lai2017,Shen2014,Shen2013a}.)} \label{Isbell-Kan-def}
Each $\FQ$-distributor $\phi\colon X\oto Y$ between $\FQ$-preordered $\FQ$-subsets induces three $\FQ$-Galois connections between their {\textup(}dual{\textup)} $\FQ$-powersets{\textup:}\footnote{In \cite{Lai2017,Shen2014,Shen2013a}, the $\FQ$-Galois connections $\uphi\dv\dphi$, $\phi^*\dv\phi_*$ and $\phi_{\dag}\dv\phi^{\dag}$ are respectively called the \emph{Isbell adjunction}, \emph{Kan adjunction} and \emph{dual Kan adjunction} induced by a $\FQ$-distributor $\phi$.}
\begin{enumerate}[label={\rm(\arabic*)}]
\item A $\FQ$-polarity $\uphi\dv\dphi$ from $X$ to $Y$ with
\begin{align*}
&\uphi\colon\PX\to\PdY,\quad\mu\longmapsto\phi\lda\mu\quad\text{and}\\
&\dphi\colon\PdY\to\PX,\quad\lam'\longmapsto\lam'\rda\phi
\end{align*}
for all $\mu\in\PX$ and $\lam'\in\PdY$.
\item A $\FQ$-axiality $\phi^*\dv\phi_*$ from $Y$ to $X$ with
\begin{align*}
&\phi^*\colon\PY\to\PX,\quad \mu'\longmapsto\mu'\comp\phi\quad\text{and}\\
&\phi_*\colon\PX\to\PY,\quad \mu\longmapsto\mu\lda\phi
\end{align*}
for all $\mu'\in\PY$ and $\mu\in\PX$.
\item A dual $\FQ$-axiality $\phi_{\dag}\dv\phi^{\dag}$ from $Y$ to $X$ with
\begin{align*}
&\phi_{\dag}\colon\PdY\to\PdX,\quad\lam'\longmapsto\phi\rda\lam'\quad \text{and}\\
&\phi^{\dag}\colon\PdX\to\PdY,\quad\lam\longmapsto\phi\comp\lam
\end{align*}
for all $\lam'\in\PdY$ and $\lam\in\PdX$.
\end{enumerate}
\end{prop}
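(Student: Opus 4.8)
The plan is to treat the three adjunctions uniformly and to read each one off from the mixed-associativity identities of Proposition~\ref{QFRel-cal}. By Proposition~\ref{adjoint-graph}, a pair of $\FQ$-order-preserving maps $f$ and $g$ between the relevant (dual) $\FQ$-powersets constitutes a $\FQ$-Galois connection $f\dv g$ precisely when the graph identity $f_\nat=g^\nat$ holds; written out with the canonical preorders $1_{\PX}^\nat(\mu,\mu')=\mu'\lda\mu$ and $1_{\PdX}^\nat(\lam,\lam')=\lam'\rda\lam$, this is the equation $1^\nat(fu,v)=1^\nat(u,gv)$. Hence for each case it suffices to check three things: (a) each map is well defined, landing in the asserted (dual) $\FQ$-powerset and preserving membership degrees; (b) each map is $\FQ$-order-preserving; and (c) the graph identity holds. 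Because (c) is phrased through the canonical structures $1^\nat$ and not through the underlying orders, the order-reversal on dual powersets recorded in Remark~\ref{PdX-order} never enters the computation.

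For (a) the membership degrees match by inspection, since every operation transports the singleton $\boldsymbol{1}_q$ unchanged, so the real content is that the fixpoint conditions defining $\PX$ and $\PdX$ are preserved. The composition maps $\phi^*=-\comp\phi$ and $\phi^\dag=\phi\comp-$ are immediate from associativity and the distributor identities $\phi=1_Y^\nat\comp\phi=\phi\comp 1_X^\nat$. For $\uphi\mu=\phi\lda\mu$ and $\dphi\lam'=\lam'\rda\phi$ I would use Proposition~\ref{QFRel-cal}(8): the estimate $1_Y^\nat\comp(\phi\lda\mu)\leq(1_Y^\nat\comp\phi)\lda\mu=\phi\lda\mu$ together with reflexivity yields $\phi\lda\mu\in\PdY$, and symmetrically $(\lam'\rda\phi)\comp 1_X^\nat\leq\lam'\rda(\phi\comp 1_X^\nat)=\lam'\rda\phi$ gives $\lam'\rda\phi\in\PX$. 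The two remaining implication maps $\phi_*\mu=\mu\lda\phi\in\PY$ and $\phi_\dag\lam'=\phi\rda\lam'\in\PdX$ are the slightly delicate cases, for here the free identity $1^\nat$ sits on the ``wrong'' side for part~(8); instead I would first rewrite $\mu\lda\phi=\mu\lda(1_Y^\nat\comp\phi)=(\mu\lda\phi)\lda 1_Y^\nat$ by part~(5) and then apply part~(7) to conclude $(\mu\lda\phi)\comp 1_Y^\nat\leq\mu\lda\phi$, with the dual computation handling $\phi\rda\lam'$.

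For (b) the $\FQ$-order-preservation of each map reduces, via the adjunction law of Proposition~\ref{QFRel-cal}(1), to a single instance of part~(4) for the four implication maps (for $\uphi$, say, to $(\phi\lda\mu')\comp(\mu'\lda\mu)\leq\phi\lda\mu$) and of part~(7) composed with $\phi$ on one side for the two composition maps. Part (c) is where the three stated adjunctions appear on the nose: for the $\FQ$-polarity the graph identity $1_{\PdY}^\nat(\uphi\mu,\lam')=1_{\PX}^\nat(\mu,\dphi\lam')$ is exactly $\lam'\rda(\phi\lda\mu)=(\lam'\rda\phi)\lda\mu$, i.e. Proposition~\ref{QFRel-cal}(6); for the $\FQ$-axiality it is $\mu\lda(\mu'\comp\phi)=(\mu\lda\phi)\lda\mu'$, the first identity of part~(5); and for the dual $\FQ$-axiality it is $\lam\rda(\phi\rda\lam')=(\phi\comp\lam)\rda\lam'$, the second identity of part~(5). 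In each case Proposition~\ref{adjoint-graph} then delivers the adjunction in the asserted direction.

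The only genuine obstacle is the bookkeeping in (a): keeping straight which of the two distributor identities $\phi=1_Y^\nat\comp\phi$ or $\phi=\phi\comp 1_X^\nat$ to insert, and recognizing that $\phi_*$ and $\phi_\dag$ need the part~(5)-then-part~(7) detour rather than a single application of part~(8). Once well-definedness is secured, steps (b) and (c) are mechanical, and the three headline $\FQ$-Galois connections fall out as the three mixed-associativity laws (5) and (6) of the $\FQ$-relational calculus; one could even dispense with the separate verification (b) by invoking the sharper characterization, valid for merely membership-preserving maps, that the graph identity alone forces both $\FQ$-order-preservation and adjointness.
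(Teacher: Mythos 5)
Your proposal is correct and takes essentially the same route as the paper: the paper's entire proof consists of the three graph-identity computations $\lam'\rda(\phi\lda\mu)=(\lam'\rda\phi)\lda\mu$, $\mu\lda(\mu'\comp\phi)=(\mu\lda\phi)\lda\mu'$ and $\lam\rda(\phi\rda\lam')=(\phi\comp\lam)\rda\lam'$ via the mixed-associativity laws (5) and (6) of Proposition~\ref{QFRel-cal}, which is exactly your step (c). Your steps (a) and (b) simply make explicit the well-definedness and $\FQ$-order-preservation that the paper leaves to the reader with ``one could easily perform the following calculations'', and your closing observation is accurate: (b) is redundant once the graph identity is verified, by the characterization of $\FQ$-Galois connections for merely membership-preserving maps stated right after Proposition~\ref{adjoint-graph}.
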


\begin{proof}
For all $\mu\in\PX$, $\lam\in\PdX$, $\mu'\in\PY$ and $\lam'\in\PdY$, one could easily perform the following calculations using the formulas in Proposition~\ref{QFRel-cal}:
\begin{align*}
1_{\PdY}^{\nat}(\uphi\mu,\lam')&=\lam'\rda(\phi\lda\mu)=(\lam'\rda\phi)\lda\mu=1_{\PX}^{\nat}(\mu,\dphi\lam'),\\
1_{\PX}^{\nat}(\phi^*\mu',\mu)&=\mu\lda(\mu'\comp\phi)=(\mu\lda\phi)\lda\mu'=1_{\PY}^{\nat}(\mu',\phi_*\mu),\\
1_{\PdX}^{\nat}(\phi_{\dag}\lam',\lam)&=\lam\rda(\phi\rda\lam')=(\phi\comp\lam)\rda\lam'=1_{\PdY}^{\nat}(\lam',\phi^{\dag}\lam).\qedhere
\end{align*}
\end{proof}

\begin{exmps} \label{polarities-axialities-exmp} \
\begin{enumerate}[label=(\arabic*)]
\item For any $X\in\QFOrd$, let $\phi=1_X^{\nat}$. Then the $\FQ$-polarity $\bigl(1_X^{\nat}\bigr)_{\ua}\dv\bigl(1_X^{\nat}\bigr)^{\da}$ on $X$ is precisely the $\FQ$-Galois connection $\ub\dv\lb$ given in Example~\ref{Q-Galois-exmp}\,\ref{Q-Galois-exmp:ub-lb}. Obviously, the $\FQ$-axiality $\bigl(1_X^{\nat}\bigr)^*\dv\bigl(1_X^{\nat}\bigr)_*$ and the dual $\FQ$-axiality ${\bigl(1_X^{\nat}\bigr)_{\dag}\dv\bigl(1_X^{\nat}\bigr)^{\dag}}$ on $X$ are both the identity maps on $\PX$ and $\PdX$, respectively.

    It is noteworthy to point out that the fixed points of the $\FQ$-polarity $\bigl(1_X^{\nat}\bigr)_{\ua}\dv\bigl(1_X^{\nat}\bigr)^{\da}$ constitute a complete {\protect\linebreak}$\FQ$-preordered $\FQ$-subset, which is precisely the \emph{MacNeille completion} of $X$ (see \cite[Remark~4.17\,(2)]{Shen2013a} and \cite[Section~5.5]{Shen2014}).
\item A $\FQ$-relation $\phi\colon X\rto Y$ between $\FQ$-subsets is considered as a \emph{fuzzy context} $(X,Y,\phi)$ in formal concept analysis (FCA) and rough set theory (RST) on fuzzy sets \cite{Lai2017,Shen2014,Shen2013b}. Considering $\phi$ as a $\FQ$-distributor between discrete $\FQ$-preordered $\FQ$-subsets, the $\FQ$-polarity $\uphi\dv\dphi$ and the $\FQ$-axiality $\phi^*\dv\phi_*$ induced by $\phi$ are the fundamental operators in FCA and RST, and their fixed points constitute the ``concept lattice'' (which are both complete $\FQ$-preordered $\FQ$-subsets) of the fuzzy context $(X,Y,\phi)$ based on FCA and RST, respectively.
\end{enumerate}
\end{exmps}

Recall that a \emph{$2$-functor} $F\colon\CC\to\CD$ between ordered categories is a functor preserving the order on hom-sets; that is,
$$f\leq g \implies Ff\leq Fg$$
for all morphisms $f,g\colon X\to Y$ in $\CC$. Let $\CC^{\co}$ denote the ordered category with the same objects and morphisms as in $\CC$, but \emph{reversing} the order on each hom-set of $\CC$. Then it is routine to check that
\begin{align*}
(-)_{\nat}&\colon(\QFOrd)^{\co}\to\QFDist,\quad (f\colon X\to Y)\longmapsto(f_{\nat}\colon X\oto Y)\quad\text{and}\\
(-)^{\nat}&\colon(\QFOrd)^{\op}\to\QFDist,\quad (f\colon X\to Y)\longmapsto(f^{\nat}\colon Y\oto X)
\end{align*}
are both $2$-functors of ordered categories, and furthermore:

\begin{prop} {\rm(See \cite{Heymans2010}.)}  \label{Kan-functor}
Both
\begin{align*}
(-)^*&\colon\QFDist\to(\QFOrd)^{\op},\quad (\phi\colon X\oto Y)\longmapsto(\phi^*\colon\PY\to\PX)\quad\text{and}\\
(-)^{\dag}&\colon\QFDist\to(\QFOrd)^{\co},\quad (\phi\colon X\oto Y)\longmapsto(\phi^{\dag}\colon\PdX\to\PdY)
\end{align*}
are $2$-functors, and one has two pairs of adjoint $2$-functors
$$\bfig
\morphism/@{->}@<4pt>/<800,0>[\QFDist`(\QFOrd)^{\op};(-)^*]
\morphism(800,0)|b|/@{->}@<4pt>/<-800,0>[(\QFOrd)^{\op}`\QFDist;(-)^{\nat}]
\place(350,5)[\mbox{\footnotesize{$\bot$}}]
\efig\quad\text{and}\quad
\bfig
\morphism/@{->}@<4pt>/<800,0>[(\QFOrd)^{\co}`\QFDist.;(-)_{\nat}]
\morphism(800,0)|b|/@{->}@<4pt>/<-800,0>[\QFDist.`(\QFOrd)^{\co};(-)^{\dag}]
\place(420,5)[\mbox{\footnotesize{$\bot$}}]
\efig$$
\end{prop}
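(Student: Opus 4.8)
The plan is to treat $(-)^*$ and $(-)^{\dag}$ in parallel, since the latter is the formal dual of the former, and to deduce both adjunctions from natural order-isomorphisms of hom-sets that merely repackage the Yoneda lemma; recall that on objects $(-)^*$ sends $X$ to $\PX$ and $(-)^{\dag}$ sends $X$ to $\PdX$, whereas $(-)^{\nat}$ and $(-)_{\nat}$ act as the identity on objects. I would first check that $(-)^*$ is a $2$-functor, the argument for $(-)^{\dag}$ being dual. Since $\phi^*\mu'=\mu'\comp\phi$, functoriality is immediate: $(1_X^{\nat})^*\mu'=\mu'\comp 1_X^{\nat}=\mu'$ by \eqref{PX-def-eq}, so $(1_X^{\nat})^*=1_{\PX}$, and associativity of composition (Proposition~\ref{QFRel-comp}) gives $(\psi\comp\phi)^*\mu''=\mu''\comp\psi\comp\phi=\phi^*(\psi^*\mu'')$, which is exactly the composite of $\phi^*$ and $\psi^*$ in $(\QFOrd)^{\op}$. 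For the $2$-cell condition, $\phi\le\phi'$ yields $\mu'\comp\phi\le\mu'\comp\phi'$ by Proposition~\ref{QFRel-comp}\,\ref{QFRel-comp:join}, i.e.\ $\phi^*\mu'\le(\phi')^*\mu'$ pointwise. Here I would isolate the bookkeeping observation that for $\mu,\nu\in\PX$ of common membership degree $q$ the pointwise order agrees with the underlying order of $\PX$: writing $q=\id_{\boldsymbol{1}_q}\in\DQ(q,q)$ and using $\mu=q\comp\mu$, the adjunction of Proposition~\ref{QFRel-cal}\,(1) turns $\mu\le\nu$ into $q\le\nu\lda\mu=1_{\PX}^{\nat}(\mu,\nu)$. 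Hence $\phi^*\le(\phi')^*$ in $\QFOrd(\PY,\PX)$, the hom-order of $(\QFOrd)^{\op}(\PX,\PY)$, so $(-)^*$ preserves $2$-cells. For $(-)^{\dag}$, where $\phi^{\dag}\lam=\phi\comp\lam$, the same computation applies, except that the underlying order on dual powersets is the \emph{reverse} of the pointwise order (Remark~\ref{PdX-order}); this single reversal is precisely absorbed by landing in $(\QFOrd)^{\co}$ instead of $(\QFOrd)^{\op}$.

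Next I would establish the two adjunctions by exhibiting, for all $X,Y\in\QFOrd$, the natural isomorphisms of ordered sets
\[(\QFOrd)^{\op}(\PX,Y)=\QFOrd(Y,\PX)\cong\QFDist(X,Y)\cong(\QFOrd)^{\co}(X,\PdY),\]
which are exactly the hom-isomorphisms defining $(-)^*\dv(-)^{\nat}$ and $(-)_{\nat}\dv(-)^{\dag}$. The correspondences are $\phi\longmapsto(y\mapsto\phi(-,y))$ and $\phi\longmapsto(x\mapsto\phi(x,-))$, with inverses sending $g$ to the $\FQ$-relation $(x,y)\mapsto(gy)(x)$ and $h$ to $(x,y)\mapsto(hx)(y)$; that these are mutually inverse bijections is the Yoneda lemma (Lemma~\ref{Yoneda}). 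That the assignments are well defined and land in $\QFOrd$ is the equivalence, for a $\FQ$-relation $\phi$, between the $\FQ$-distributor condition and the two statements $\phi(-,y)\in\PX$, $\phi(x,-)\in\PdY$ together with order-preservation of the induced maps; this translation is carried out with Proposition~\ref{QFRel-comp-imp} and the composition--implication adjunction of Proposition~\ref{QFRel-cal}\,(1). That the bijections are order-isomorphisms uses the pointwise-versus-underlying-order identification above, and it is here that Remark~\ref{PdX-order} forces the second isomorphism to target $(\QFOrd)^{\co}$: an order-reversing bijection onto $\QFOrd(X,\PdY)$ becomes order-preserving onto $(\QFOrd)^{\co}(X,\PdY)$.

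Finally, naturality of each isomorphism in both variables must be verified; this is routine $\FQ$-distributor calculus based on associativity and the graph/cograph identities, and from it one reads off the unit and counit as the Yoneda and co-Yoneda embeddings — for instance the unit of $(-)^*\dv(-)^{\nat}$ at $X$ is the image of $1_{\PX}$ under the isomorphism, namely $(\sy_X)_{\nat}\colon X\oto\PX$. I expect the main obstacle to be organizational rather than mathematical: keeping straight the three order conventions — the local order of $\QFRel$, the underlying order of $\PX$, and its reverse on $\PdX$ — and placing the dualities $\op$ and $\co$ so that each correspondence is genuinely an order-isomorphism in the asserted direction rather than a mere bijection. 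Once the pointwise/underlying-order identification is recorded as a lemma, the rest reduces to the calculus assembled in Propositions~\ref{QFRel-comp}--\ref{QFRel-cal} and the Yoneda lemma.
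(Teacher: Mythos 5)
Your proposal is correct and takes essentially the same route as the paper: the paper's proof likewise defines $\tphi\colon Y\to\PX$ and $\hphi\colon X\to\PdY$ as in \eqref{tphi-hphi-def} and obtains both adjunctions from the natural order-isomorphisms $\QFOrd(Y,\PX)\cong\QFDist(X,Y)\cong(\QFOrd)^{\co}(X,\PdY)$ of \eqref{QFOrd-QFDist}, declaring straightforward exactly the checks (2-functoriality, the distributor/order-preservation translation, and the pointwise-versus-underlying-order bookkeeping, including the $\co$-reversal from Remark~\ref{PdX-order}) that you carry out correctly. One small correction: the mutual inverseness of $\phi\longmapsto\tphi$ and $g\longmapsto\bigl((x,y)\mapsto(gy)(x)\bigr)$ is definitional rather than an instance of Lemma~\ref{Yoneda}; Yoneda enters only where you in fact invoke it later, namely in identifying the unit with $(\sy_X)_{\nat}\colon X\oto\PX$ (cf.\ Proposition~\ref{tphi-hphi-Yoneda}).
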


\begin{proof}
For any $\FQ$-distributor $\phi\colon X\oto Y$, define
\begin{equation} \label{tphi-hphi-def}
\tphi\colon Y\to\PX,\quad y\longmapsto\phi(-,y)\quad\text{and}\quad\hphi\colon X\to\PdY,\quad x\longmapsto\phi(x,-).
\end{equation}
It is straightforward to check that the assignments $\phi\longmapsto\tphi$ and $\phi\longmapsto\hphi$ give isomorphisms of hom-sets (isomorphisms of complete lattices, indeed, see Theorem~\ref{dist-sup-iso})
\begin{equation} \label{QFOrd-QFDist}
\QFOrd(Y,\PX)\cong\QFDist(X,Y)\cong(\QFOrd)^{\co}(X,\PdY)
\end{equation}
natural in $X,Y\in\QFOrd$.
\end{proof}

It follows immediately from the isomorphisms \eqref{QFOrd-QFDist} that whenever any one of $\phi\colon X\oto Y$, $\tphi\colon Y\to\PX$ or $\hphi\colon X\to\PdY$ is fixed, then so are the other two. The following identities are easy to verify, but quite useful:

\begin{prop} {\rm(See \cite{Shen2014,Shen2013a}.)} \label{tphi-hphi-Yoneda}
For any $\FQ$-distributor $\phi\colon X\oto Y$,
$$\hphi=\uphi\sy_X=\phi^{\dag}\syd_X\quad\text{and}\quad\tphi=\dphi\syd_Y=\phi^*\sy_Y.$$
\end{prop}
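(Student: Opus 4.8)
The plan is to prove the four equalities by evaluating each side at a point and comparing the resulting (co)presheaves; two elements of $\PdY$ (or $\PX$) are equal as soon as they agree as maps, so everything reduces to pointwise identities in the quantaloid calculus of Section~\ref{Fuzzy_Relations}. It is convenient to treat the two \emph{composition-type} identities $\hphi=\phi^{\dag}\syd_X$ and $\tphi=\phi^*\sy_Y$ (which involve $\comp$) separately from the two \emph{implication-type} identities $\hphi=\uphi\sy_X$ and $\tphi=\dphi\syd_Y$ (which involve $\lda$ and $\rda$).

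For the composition-type identities I would first record that, since $\phi$ is a $\FQ$-distributor, $\phi\comp 1_X^{\nat}=\phi=1_Y^{\nat}\comp\phi$: the inequalities ``$\leq$'' follow from \eqref{QDist-def} by bounding $\phi\comp 1_X^{\nat}\leq 1_Y^{\nat}\comp\phi\comp 1_X^{\nat}=\phi$ (and dually), while ``$\geq$'' follows from reflexivity alone, as $\phi=\phi\comp\id_X\leq\phi\comp 1_X^{\nat}$ (and dually). Granting this, $\phi^{\dag}\syd_X x=\phi\comp\syd_X x$ is, by the composition formula \eqref{psi-circ-phi}, the map $y\longmapsto(\phi\comp 1_X^{\nat})(x,y)=\phi(x,y)$; hence $\phi^{\dag}\syd_X x=\phi(x,-)=\hphi x$. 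Symmetrically $\phi^*\sy_Y y=\sy_Y y\comp\phi$ is the map $x\longmapsto(1_Y^{\nat}\comp\phi)(x,y)=\phi(x,y)$, giving $\phi(-,y)=\tphi y$.

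For the implication-type identities I would show that $\uphi\sy_X x=\phi\lda\sy_X x$ equals $\phi(x,-)$ via the two inequalities supplied by the adjunction of Proposition~\ref{QFRel-cal}\,(1). The forward inequality $\phi(x,-)\leq\phi\lda\sy_X x$ is equivalent to $\phi(x,-)\comp\sy_X x\leq\phi$, which is exactly transitivity \eqref{QDist-def} read off at the index $x$ (the summand indexed by $x$ in the composite $\phi\comp 1_X^{\nat}$ already dominates $\phi(x,-)\comp\sy_X x$ pointwise). For the reverse inequality I would feed the counit $(\phi\lda\sy_X x)\comp\sy_X x\leq\phi$ the diagonal pair $(x,x)$: reflexivity gives $1_X^{\nat}(x,x)\geq|x|$, and right-divisibility of the value of $\phi\lda\sy_X x$ by $|x|$ (condition~\eqref{QFRel-def}) then forces that value at $y$ to be $\leq\phi(x,y)$, i.e.\ $\phi\lda\sy_X x\leq\phi(x,-)$. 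The identity $\tphi=\dphi\syd_Y$ is entirely symmetric, using reflexivity $1_Y^{\nat}(y,y)\geq|y|$ on $Y$ together with left-divisibility, so that the relevant implication factor becomes $\geq e$.

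The main obstacle is purely the careful handling of this last ``diagonal squeeze'': one must invoke the correct one of the two divisibility conditions in \eqref{QFRel-def} and the correct one-sided implication ($\ldd$ versus $\rdd$), matching the singleton conventions of Example~\ref{Q-relation-exmp}\,\ref{Q-relation-exmp:DQ}, namely which endpoint of each $\FQ$-relation carries which membership degree. Everything else is formal bookkeeping: the reduction to pointwise statements, the idempotency $\phi\comp 1_X^{\nat}=\phi=1_Y^{\nat}\comp\phi$, and the adjunction of Proposition~\ref{QFRel-cal}\,(1) do all the real work, so no genuinely new computation beyond these is required.
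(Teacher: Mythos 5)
Your proof is correct. The paper itself gives no argument for Proposition~\ref{tphi-hphi-Yoneda} (it is declared ``easy to verify'' with a pointer to \cite{Shen2014,Shen2013a}), so the comparison is with the verification it implicitly expects. Your composition-type half is exactly that verification: $\phi\comp 1_X^{\nat}=\phi=1_Y^{\nat}\comp\phi$ follows from \eqref{QDist-def} plus reflexivity, and then $\phi^{\dag}\syd_X x=\phi\comp 1_X^{\nat}(x,-)=\phi(x,-)$ and $\phi^*\sy_Y y=(1_Y^{\nat}\comp\phi)(-,y)=\phi(-,y)$. Your implication-type half is also sound: the forward inequality is correctly reduced via Proposition~\ref{QFRel-cal}\,(1) to $\phi(x,-)\comp\sy_X x\leq\phi$, which holds because by Proposition~\ref{QFRel-comp-imp} this composite \emph{equals} the $x$-indexed summand of $\phi\comp 1_X^{\nat}\leq\phi$ (your ``dominates'' should be ``equals'', but the chain is right); and your diagonal squeeze for the reverse inequality is the genuinely delicate point, which you handle correctly --- the value $(\phi\lda\sy_X x)(y)$ lies in $\DQ(|x|,|y|)$, so $(\phi\lda\sy_X x)(y)=\bigl((\phi\lda\sy_X x)(y)\ldd|x|\bigr)\with|x|\leq\bigl((\phi\lda\sy_X x)(y)\ldd|x|\bigr)\with 1_X^{\nat}(x,x)\leq\phi(x,y)$, and dually for $\dphi\syd_Y$. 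Worth knowing, though, is that the paper's own toolkit affords a shorter route that eliminates the squeeze entirely: the point map $\boldsymbol{1}_{|x|}\to X$ is $\FQ$-order-preserving by reflexivity, its graph and cograph are $\syd_X x$ and $\sy_X x$, so Example~\ref{adjoint-dist-exmp}\,\ref{adjoint-dist-exmp:graph} gives $\syd_X x\dv\sy_X x$ in $\QFDist$, and Proposition~\ref{adj-dist-cal}\,\ref{adj-dist-cal:comp} immediately yields $\phi\lda\sy_X x=\phi\comp\syd_X x$ and $\syd_Y y\rda\phi=\sy_Y y\comp\phi$; the implication-type identities thus collapse into the composition-type ones you already established, with all divisibility bookkeeping absorbed into the adjunction calculus. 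Your direct argument buys self-containedness at the level of \eqref{QFRel-def}; the adjunction argument buys brevity and makes clear why exactly reflexivity is the only structural input beyond the distributor equations.
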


The following proposition is an immediate consequence of the above one:

\begin{prop} \label{phi-uphi}
For any $\FQ$-distributors $\phi,\psi\colon X\oto Y$,
$$\phi=\psi\iff\uphi=\upsi\iff\dphi=\dpsi\iff\phi^*=\psi^*\iff\phi^{\dag}=\psi^{\dag}.$$
\end{prop}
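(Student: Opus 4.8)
The plan is to reduce everything to the two injectivity facts already in hand: by the natural isomorphisms \eqref{QFOrd-QFDist} of Proposition~\ref{Kan-functor}, the assignments $\phi\mapsto\tphi$ and $\phi\mapsto\hphi$ are bijections, so a $\FQ$-distributor $\phi\colon X\oto Y$ is completely determined by either $\tphi\colon Y\to\PX$ or $\hphi\colon X\to\PdY$. The four operators $\uphi$, $\dphi$, $\phi^*$ and $\phi^{\dag}$ will each be shown to recover one of $\hphi$, $\tphi$ by precomposition with a (co-)Yoneda embedding, using the identities of Proposition~\ref{tphi-hphi-Yoneda}. Since each of the five equalities in the statement will thereby be proved equivalent to $\phi=\psi$, the whole chain of equivalences follows.

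The forward implications are immediate: as $\uphi$, $\dphi$, $\phi^*$ and $\phi^{\dag}$ are all defined as functions of $\phi$, the equality $\phi=\psi$ trivially yields each of the others. For the reverse implications I would argue as follows. Assume $\uphi=\upsi$; composing on the right with $\sy_X$ and invoking $\hphi=\uphi\sy_X$ from Proposition~\ref{tphi-hphi-Yoneda} gives $\hphi=\hpsi$, whence $\phi=\psi$ by bijectivity of $\phi\mapsto\hphi$. The same argument applied to $\hphi=\phi^{\dag}\syd_X$ disposes of the hypothesis $\phi^{\dag}=\psi^{\dag}$. Dually, using $\tphi=\dphi\syd_Y$ and $\tphi=\phi^*\sy_Y$, the hypotheses $\dphi=\dpsi$ and $\phi^*=\psi^*$ each force $\tphi=\tpsi$, hence $\phi=\psi$ via bijectivity of $\phi\mapsto\tphi$.

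There is no genuine obstacle here: the statement is a purely formal consequence of Proposition~\ref{tphi-hphi-Yoneda} together with the bijections of \eqref{QFOrd-QFDist}, and assembling the five equivalences simply amounts to chaining these implications. The one point demanding a moment's care is the bookkeeping of domains and codomains --- specifically, remembering that $\uphi$ and $\phi^{\dag}$ both factor through $\hphi$ (via $\sy_X$ and $\syd_X$ on the object $X$), whereas $\dphi$ and $\phi^*$ both factor through $\tphi$ (via $\syd_Y$ and $\sy_Y$ on the object $Y$) --- so that the correct Yoneda embedding is precomposed in each case and no operator is matched with the wrong side.
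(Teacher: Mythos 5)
Your proof is correct and is essentially the paper's own argument: the paper simply declares the proposition ``an immediate consequence'' of Proposition~\ref{tphi-hphi-Yoneda}, and the intended deduction is exactly yours --- precompose $\uphi$, $\phi^{\dag}$ with $\sy_X$, $\syd_X$ to recover $\hphi$, and $\dphi$, $\phi^*$ with $\syd_Y$, $\sy_Y$ to recover $\tphi$, then conclude $\phi=\psi$ from the bijections \eqref{QFOrd-QFDist}. Your matching of each operator with the correct (co-)Yoneda embedding agrees with the identities in Proposition~\ref{tphi-hphi-Yoneda}, so there is nothing to add.
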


With the above preparations, we are now ready to characterize an arbitrary $\FQ$-Galois connection in terms of {\protect\linebreak}$\FQ$-polarities and (dual) $\FQ$-axialities:

\begin{thm} \label{adj-power}
For $\FQ$-order-preserving maps $f\colon X\to Y$ and $g\colon Y\to X$, the following statements are equivalent{\textup:}
\begin{enumerate}[itemsep=-0.7em,label={\rm(\roman*)}]
\item \label{adj-power:X-Y}
    $\bfig
    \morphism/@{->}@<4pt>/<400,0>[X`Y;f]
    \morphism(400,0)|b|/@{->}@<4pt>/<-400,0>[Y`X;g]
    \place(200,5)[\mbox{\footnotesize{$\bot$}}]
    \efig$ is a $\FQ$-Galois connection.
\item \label{adj-power:PX-PdY}
    $\bfig
    \morphism/@{->}@<4pt>/<500,0>[\PX`\PdY;(f_{\nat})_{\ua}]
    \morphism(500,0)|b|/@{->}@<4pt>/<-500,0>[\PdY`\PX;(g^{\nat})^{\da}]
    \place(240,5)[\mbox{\footnotesize{$\bot$}}]
    \efig$ is a $\FQ$-polarity from $X$ to $Y$.
\item \label{adj-power:PY-PX}
    $\bfig
    \morphism/@{->}@<4pt>/<500,0>[\PY`\PX;(f_{\nat})^*]
    \morphism(500,0)|b|/@{->}@<4pt>/<-500,0>[\PX`\PY;(g^{\nat})_*]
    \place(250,5)[\mbox{\footnotesize{$\bot$}}]
    \efig$ is a $\FQ$-axiality from $Y$ to $X$.
\item \label{adj-power:PdY-PdX}
    $\bfig
    \morphism/@{->}@<4pt>/<500,0>[\PdY`\PdX;(f_{\nat})_{\dag}]
    \morphism(500,0)|b|/@{->}@<4pt>/<-500,0>[\PdX`\PdY;(g^{\nat})^{\dag}]
    \place(250,5)[\mbox{\footnotesize{$\bot$}}]
    \efig$ is a dual $\FQ$-axiality from $Y$ to $X$.
\end{enumerate}
\end{thm}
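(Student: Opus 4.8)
The whole theorem rests on Proposition~\ref{adjoint-graph}, which says that $f \dv g$ in $\QFOrd$ is equivalent to the equality of the two $\FQ$-distributors $f_{\nat}, g^{\nat}\colon X \oto Y$. Set $\phi := f_{\nat}$ and $\psi := g^{\nat}$. Under this notation the three statements \ref{adj-power:PX-PdY}--\ref{adj-power:PdY-PdX} assert that the respective ``mismatched'' pairs $(\uphi, \psi^{\da})$, $(\phi^*, \psi_*)$ and $(\phi_{\dag}, \psi^{\dag})$ — built from the operators supplied by Proposition~\ref{Isbell-Kan-def} — form adjunctions, and the point is that this happens exactly when $\phi = \psi$. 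The plan is therefore to reduce each of \ref{adj-power:PX-PdY}--\ref{adj-power:PdY-PdX} to the single equality $\phi = \psi$ and then quote Proposition~\ref{adjoint-graph}.

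The direction \ref{adj-power:X-Y} $\implies$ \ref{adj-power:PX-PdY}, \ref{adj-power:PY-PX}, \ref{adj-power:PdY-PdX} is immediate: Proposition~\ref{adjoint-graph} turns \ref{adj-power:X-Y} into $\phi = \psi$, whereupon the three pairs become $\uphi \dv \dphi$, $\phi^* \dv \phi_*$ and $\phi_{\dag} \dv \phi^{\dag}$, which are adjunctions for the single distributor $\phi$ by Proposition~\ref{Isbell-Kan-def}.

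For the converse implications I would exploit two facts: every (dual) $\FQ$-powerset is separated (Example~\ref{PX-tensor}), so an adjoint, when it exists, is determined \emph{exactly}, not merely up to isomorphism; and Proposition~\ref{phi-uphi}, which recovers $\phi = \psi$ from the equality of any one of the operators $\uphi$, $\dphi$, $\phi^*$, $\phi^{\dag}$. Starting from \ref{adj-power:PX-PdY}, i.e.\ $\uphi \dv \psi^{\da}$, I note that Proposition~\ref{Isbell-Kan-def} already furnishes $\uphi \dv \dphi$; since the two right adjoints $\dphi, \psi^{\da}\colon \PdY \to \PX$ of $\uphi$ share the separated codomain $\PX$, uniqueness gives $\psi^{\da} = \dphi$, hence $\phi = \psi$ by Proposition~\ref{phi-uphi} and then $f \dv g$ by Proposition~\ref{adjoint-graph}. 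The dual-axiality case \ref{adj-power:PdY-PdX}, i.e.\ $\phi_{\dag} \dv \psi^{\dag}$, is strictly parallel: comparing with $\phi_{\dag} \dv \phi^{\dag}$ and using separatedness of $\PdY$ forces $\psi^{\dag} = \phi^{\dag}$, again giving $\phi = \psi$.

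The case that requires a moment's care — and the one I would flag as the main obstacle — is the axiality \ref{adj-power:PY-PX}, i.e.\ $\phi^* \dv \psi_*$, because Proposition~\ref{phi-uphi} records equality of the \emph{left} adjoint $\phi^*$ rather than of the right adjoint $\psi_*$. The remedy is to run the uniqueness argument on left adjoints: applying Proposition~\ref{Isbell-Kan-def} to $\psi$ (not to $\phi$) yields $\psi^* \dv \psi_*$, so $\phi^*$ and $\psi^*$ are both left adjoints of the single map $\psi_*$; separatedness of their common codomain $\PX$ forces $\phi^* = \psi^*$, and Proposition~\ref{phi-uphi} then delivers $\phi = \psi$, whence $f \dv g$. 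The only additional point to stay vigilant about throughout is that the underlying preorder on each dual powerset is the \emph{reversed} local order (Remark~\ref{PdX-order}), so ``left'' and ``right'' adjoints must be read in that convention when handling \ref{adj-power:PX-PdY} and \ref{adj-power:PdY-PdX}.
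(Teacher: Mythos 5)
Your proof is correct and takes essentially the same route as the paper's: both directions reduce to the equality $f_{\nat}=g^{\nat}$ by pairing the given adjunction with the one furnished by Proposition~\ref{Isbell-Kan-def} and invoking uniqueness of adjoints between separated $\FQ$-preordered $\FQ$-subsets, then concluding via Propositions~\ref{phi-uphi} and~\ref{adjoint-graph} (the paper compares the two \emph{left} adjoints of $(g^{\nat})^{\da}$ where you compare the two \emph{right} adjoints of $(f_{\nat})_{\ua}$, a mirror image of the same argument). Your explicit care in choosing, for each case, the adjoint whose equality actually appears in Proposition~\ref{phi-uphi}'s list, and your explicit appeal to separatedness of the (dual) $\FQ$-powersets, simply spell out what the paper compresses into ``similar arguments.''
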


\begin{proof}
\ref{adj-power:X-Y}$\implies$\ref{adj-power:PX-PdY}, \ref{adj-power:X-Y}$\implies$\ref{adj-power:PY-PX} and \ref{adj-power:X-Y}$\implies$\ref{adj-power:PdY-PdX} are immediate consequences of Propositions~\ref{adjoint-graph} and \ref{Isbell-Kan-def}. Conversely, if $(f_{\nat})_{\ua}\dv(g^{\nat})^{\da}$, then $(f_{\nat})_{\ua}=(g^{\nat})_{\ua}$ since one also has $(g^{\nat})_{\ua}\dv(g^{\nat})^{\da}$, and thus Proposition~\ref{phi-uphi} guarantees $f\dv g$; this proves \ref{adj-power:PX-PdY}$\implies$\ref{adj-power:X-Y}. One could derive \ref{adj-power:PY-PX}$\implies$\ref{adj-power:X-Y} and \ref{adj-power:PdY-PdX}$\implies$\ref{adj-power:X-Y} with similar arguments.
\end{proof}

In particular, the $\FQ$-polarity $\bfig
\morphism/@{->}@<4pt>/<500,0>[\PX`\PdY;(f_{\nat})_{\ua}]
\morphism(500,0)|b|/@{->}@<4pt>/<-500,0>[\PdY`\PX;(g^{\nat})^{\da}]
\place(240,5)[\mbox{\footnotesize{$\bot$}}]
\efig$ may be considered as a lifting of the $\FQ$-Galois connection $f\dv g$, since one could easily verify the commutativity of the diagram
$$\bfig
\square|alra|/@{->}@<4pt>`->`->`@{->}@<4pt>/<800,500>[X`Y`\PX`\PdY;f`\sy_X`\syd_Y`(f_{\nat})_{\ua}]
\square|blrb|/@{<-}@<-4pt>```@{<-}@<-4pt>/<800,500>[X`Y`\PX`\PdY;g```(g^{\nat})^{\da}]
\place(400,505)[\mbox{\footnotesize{$\bot$}}] \place(400,5)[\mbox{\footnotesize{$\bot$}}]
\efig$$
with the identities given in Proposition~\ref{tphi-hphi-Yoneda}.

\subsection{$\FQ$-distributors vs. $\FQ$-polarities and {\textup(}dual{\textup)} $\FQ$-axialities}

The interaction between $\FQ$-distributors, $\FQ$-polarities and (dual) $\FQ$-axialities is much more profound than what is revealed in Proposition~\ref{Isbell-Kan-def}; the aim of this last subsection is to show that there exist bijective correspondences between them. The prototypes of the results below come from quantaloid-enriched categories \cite{Shen2014,Shen2013a}; nevertheless, we will provide their proofs in order-theoretic terms so that prior reading of \cite{Shen2014,Shen2013a} is not required for the purpose of understanding this subsection.

\begin{prop} \label{QFDist-QSup}
Let $X$ and $Y$ be $\FQ$-preordered $\FQ$-subsets.
\begin{enumerate}[label={\rm(\arabic*)}]
\item \label{QFDist-QSup:Isbell}
    Every $\FQ$-polarity from $X$ to $Y$ is of the form $\uphi\dv\dphi$ for some $\FQ$-distributor $\phi\colon X\oto Y$.
\item \label{QFDist-QSup:Kan}
    Every $\FQ$-axiality from $X$ to $Y$ is of the form $\phi^*\dv\phi_*$ for some $\FQ$-distributor $\phi\colon Y\oto X$.
\item \label{QFDist-QSup:dual-Kan}
    Every dual $\FQ$-axiality from $X$ to $Y$ is of the form $\phi_{\dag}\dv\phi^{\dag}$ for some $\FQ$-distributor $\phi\colon Y\oto X$.
\end{enumerate}
\end{prop}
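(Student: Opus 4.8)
The plan is to show that each of the three constructions in Proposition~\ref{Isbell-Kan-def} is surjective onto its target class: given a $\FQ$-Galois connection between the relevant (dual) powersets, I reconstruct a $\FQ$-distributor $\phi$ by restricting one leg along the (co-)Yoneda embedding, using the hom-set isomorphism~\eqref{QFOrd-QFDist}, and then verify that the construction of Proposition~\ref{Isbell-Kan-def} reproduces the given connection. For \ref{QFDist-QSup:Isbell}, let $f\dv g$ be a $\FQ$-polarity with $f\colon\PX\to\PdY$ and $g\colon\PdY\to\PX$. First I would set $\hphi:=f\sy_X\colon X\to\PdY$; this is $\FQ$-order-preserving, so by the isomorphism $\QFDist(X,Y)\cong(\QFOrd)^{\co}(X,\PdY)$ of~\eqref{QFOrd-QFDist} it is $\hphi$ for a unique $\FQ$-distributor $\phi\colon X\oto Y$. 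By Proposition~\ref{tphi-hphi-Yoneda} we have $\uphi\sy_X=\hphi=f\sy_X$, so $f$ and $\uphi$ agree on the representables $\sy_X x$; the task is to promote this to $f=\uphi$ on all of $\PX$.

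That promotion is the crux of the argument. Every $\mu\in\PX$ admits the decomposition $\mu=\mu\comp 1_X^{\nat}=\bv_{x\in X}\mu(x)\comp\sy_X x$, obtained from~\eqref{PX-def-eq} and Proposition~\ref{QFRel-comp-imp}, which exhibits $\mu$ as the underlying join in $\PX$ of the tensors $\mu(x)\otimes\sy_X x=\mu(x)\comp\sy_X x$ of representables. Since $f$ and $\uphi$ are both left adjoints out of the complete $\PX$, Theorem~\ref{la-condition} shows they are $\sup$-preserving, hence preserve underlying joins and tensors; reading off tensors in $\PX$ and $\PdY$ from Example~\ref{PX-tensor}, the computation of the proof of Proposition~\ref{la-condition-PX} then gives $f\mu=\bv_x\mu(x)\otimes f\sy_X x=\bv_x\mu(x)\otimes\uphi\sy_X x=\uphi\mu$. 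Thus $f=\uphi$. As $g$ and $\dphi$ are now both right adjoints of $\uphi$ and $\PX$ is separated (Example~\ref{PX-tensor}), uniqueness of adjoints forces $g=\dphi$, proving \ref{QFDist-QSup:Isbell}.

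Part~\ref{QFDist-QSup:Kan} is entirely parallel: for a $\FQ$-axiality $f\colon\PX\to\PY$ I set $\tphi:=f\sy_X$ to obtain $\phi\colon Y\oto X$ via~\eqref{QFOrd-QFDist}, use Proposition~\ref{tphi-hphi-Yoneda} to get $\phi^*\sy_X=\tphi=f\sy_X$, and run the same density argument to conclude $f=\phi^*$ and then $g=\phi_*$. Part~\ref{QFDist-QSup:dual-Kan} is the order-dual, and it is cleanest to carry out through the \emph{right} adjoint, since $\phi_{\dag}$ does not appear in Proposition~\ref{tphi-hphi-Yoneda} but $\phi^{\dag}$ does: for a dual $\FQ$-axiality $f\dv g$ with $g\colon\PdY\to\PdX$, I put $\hphi:=g\syd_Y$ to get $\phi\colon Y\oto X$, so that $\phi^{\dag}\syd_Y=\hphi=g\syd_Y$. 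Both $g$ and $\phi^{\dag}$ are $\inf$-preserving right adjoints into the separated complete $\PdX$ agreeing on the co-representables $\syd_Y y$, so the dual density argument (the cotensor-meet decomposition $\lam=\bw_y\lam(y)\rat\syd_Y y$, mirroring Proposition~\ref{la-condition-PdX}) yields $g=\phi^{\dag}$ and hence $f=\phi_{\dag}$.

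The main obstacle is precisely the density step: establishing that a $\sup$-preserving map out of $\PX$ (dually, an $\inf$-preserving map out of $\PdX$) is determined by its restriction to the (co-)representables. Everything else is formal, resting on the hom-set isomorphism~\eqref{QFOrd-QFDist}, on Theorem~\ref{la-condition} to convert ``left adjoint'' into ``preserves joins and tensors'', and on uniqueness of adjoints together with the separatedness of powersets to pass from equality of one leg to equality of the whole connection. The one point demanding care is the bookkeeping of the reversed underlying order on $\PdX$ versus $\PX$ (Remark~\ref{PdX-order}), so that joins in $\QFRel$ are matched to the correct (co)limits; this is exactly why phrasing \ref{QFDist-QSup:dual-Kan} via the right adjoint avoids a sign-flip headache.
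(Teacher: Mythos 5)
Your proposal is correct and follows essentially the same route as the paper's proof: you restrict one leg along the (co-)Yoneda embedding to get $\phi$ via the isomorphism \eqref{QFOrd-QFDist}, then use the decomposition $\mu=\bv_{x\in X}\mu(x)\comp\sy_X x$ together with Theorem~\ref{la-condition} (preservation of underlying (co)limits and (co)tensors, with the order-reversal of Remark~\ref{PdX-order} duly handled) to extend agreement on representables to all of $\PX$, exactly as in the paper's computation for part~\ref{QFDist-QSup:Isbell}; the paper even makes the same choice as you in part~\ref{QFDist-QSup:dual-Kan}, defining $\hphi:=g\syd_Y$ and identifying $g=\phi^{\dag}$. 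Your explicit appeal to uniqueness of adjoints and separatedness of the powersets to recover the other leg is only left implicit in the paper, and is a welcome addition.
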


\begin{proof}
\ref{QFDist-QSup:Isbell} Let $\bfig
\morphism/@{->}@<4pt>/<500,0>[\PX`\PdY;f]
\morphism(500,0)|b|/@{->}@<4pt>/<-500,0>[\PdY`\PX;g]
\place(240,5)[\mbox{\footnotesize{$\bot$}}]
\efig$ be a $\FQ$-polarity from $X$ to $Y$, then $\hphi:=(X\to^{\sy_X}\PX\to^f\PdY)$ defines a $\FQ$-distributor $\phi\colon X\oto Y$ (see Proposition~\ref{Kan-functor}). We claim that $f=\uphi$. Indeed, for any $\mu\in\PX$,
\begin{align*}
f\mu&=f(\mu\comp 1_X^{\nat})&(\text{Equation~\eqref{PX-def-eq}})\\
&=f\Big(\bv_{x\in X}\mu(x)\comp 1_X^{\nat}(-,x)\Big)&(\text{Proposition~\ref{QFRel-comp-imp}})\\
&=f\Big(\bv_{x\in X}\mu(x)\comp\sy_X x\Big)&\\
&=\bw_{x\in X}f(\mu(x)\comp\sy_X x)&(\text{Theorem~\ref{la-condition} and Remark~\ref{PdX-order}})\\
&=\bw_{x\in X}f\sy_X x\lda\mu(x)&(\text{Theorem~\ref{la-condition} and Example~\ref{PX-tensor}})\\
&=\bw_{x\in X}\hphi x\lda\mu(x)\\
&=\bw_{x\in X}\phi(x,-)\lda\mu(x)&(\text{Equations~\eqref{tphi-hphi-def}})\\
&=\phi\lda\mu&(\text{Proposition~\ref{QFRel-comp-imp}})\\
&=\uphi\mu,
\end{align*}
as desired.

\ref{QFDist-QSup:Kan} If $\bfig
\morphism/@{->}@<4pt>/<500,0>[\PX`\PY;f]
\morphism(500,0)|b|/@{->}@<4pt>/<-500,0>[\PY`\PX;g]
\place(250,5)[\mbox{\footnotesize{$\bot$}}]
\efig$ is a $\FQ$-axiality from $X$ to $Y$, then $\tphi:=(X\to^{\sy_X}\PX\to^f\PY)$ defines a $\FQ$-distributor $\phi\colon Y\oto X$ (see Proposition~\ref{Kan-functor}) with $f=\phi^*$.

\ref{QFDist-QSup:dual-Kan} If $\bfig
\morphism/@{->}@<4pt>/<500,0>[\PdX`\PdY;f]
\morphism(500,0)|b|/@{->}@<4pt>/<-500,0>[\PdY`\PdX;g]
\place(250,5)[\mbox{\footnotesize{$\bot$}}]
\efig$ is a dual $\FQ$-axiality from $X$ to $Y$, then $\hphi:=(Y\to^{\syd_Y}\PdY\to^g\PdX)$ defines a $\FQ$-distributor $\phi\colon Y\oto X$ (see Proposition~\ref{Kan-functor}) with $g=\phi^{\dag}$.

The verification of the details of \ref{QFDist-QSup:Kan} and \ref{QFDist-QSup:dual-Kan} is similar to \ref{QFDist-QSup:Isbell}, so we leave it to the readers.
\end{proof}

For $X,Y\in\QFOrd$, let us take a closer look at
\begin{enumerate}[label=(\arabic*)]
\item \label{QDist-X-oto-Y}
    $\FQ$-distributors $X\oto Y$, and
\item \label{Q-polarity-X-to-Y}
    $\FQ$-polarities from $X$ to $Y$.
\end{enumerate}
Propositions~\ref{Isbell-Kan-def} and \ref{QFDist-QSup} give us an assignment $\phi\longmapsto(\uphi\dv\dphi)$ from \ref{QDist-X-oto-Y} to \ref{Q-polarity-X-to-Y}, and an assignment ${(f\dv g)\longmapsto\phi}$ with $\hphi=f\sy_X$ from \ref{Q-polarity-X-to-Y} to \ref{QDist-X-oto-Y}. On one hand, in Proposition~\ref{QFDist-QSup} it is already shown that the composition of the two assignments is the identity when starting from \ref{Q-polarity-X-to-Y}. On the other hand, starting from a $\FQ$-distributor $\phi\colon X\oto Y$ one has $\hphi=\uphi\sy_X$ by Proposition~\ref{tphi-hphi-Yoneda}, showing that the composition in the other direction also produces the identity. Therefore, these two assignments are inverse to each other, and thus \ref{QDist-X-oto-Y} and \ref{Q-polarity-X-to-Y} are bijective to each other.

Since the same argument shows that $\FQ$-distributors $X\oto Y$ correspond bijectively to $\FQ$-axialities from $Y$ to $X$ and also dual $\FQ$-axialities from $Y$ to $X$, in conjunction with \eqref{QFOrd-QFDist} we have proved:

\begin{thm} \label{dist-polarity-axiality-bijection}
Let $X$ and $Y$ be $\FQ$-preordered $\FQ$-subsets. Then the following items are bijective to each other{\textup:}
\begin{itemize}
\item $\FQ$-distributors $X\oto Y${\textup;}
\item $\FQ$-order-preserving maps $Y\to\PX${\textup;}
\item $\FQ$-order-preserving maps $X\to\PdY${\textup;}
\item $\FQ$-polarities from $X$ to $Y${\textup;}
\item $\FQ$-axialities from $Y$ to $X${\textup;}
\item dual $\FQ$-axialities from $Y$ to $X$.
\end{itemize}
\end{thm}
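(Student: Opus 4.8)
The plan is to assemble this bijection out of the two families of correspondences already established, rather than to build anything from scratch: the natural isomorphisms of hom-sets in Proposition~\ref{Kan-functor} dispose of the ``function'' items, while Propositions~\ref{Isbell-Kan-def} and~\ref{QFDist-QSup} dispose of the ``Galois connection'' items. Concretely, the isomorphisms
$$\QFOrd(Y,\PX)\cong\QFDist(X,Y)\cong(\QFOrd)^{\co}(X,\PdY)$$
of~\eqref{QFOrd-QFDist}, realized by $\phi\longmapsto\tphi$ and $\phi\longmapsto\hphi$, already identify the first three items of the list. So the genuine work is to glue on the three kinds of $\FQ$-Galois connection, and here I would treat the $\FQ$-polarity case in full and let the other two follow by symmetry.

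For $\FQ$-polarities I would write down two assignments and check that they are mutually inverse. Going down, Proposition~\ref{Isbell-Kan-def} sends a $\FQ$-distributor $\phi\colon X\oto Y$ to the $\FQ$-polarity $\uphi\dv\dphi$ from $X$ to $Y$. Going up, Proposition~\ref{QFDist-QSup}\,\ref{QFDist-QSup:Isbell} sends a $\FQ$-polarity $f\dv g$ to the $\FQ$-distributor $\phi$ singled out by $\hphi=f\sy_X$, and, crucially, shows that this $\phi$ satisfies $f=\uphi$ (whence $g=\dphi$ by uniqueness of the right adjoint). That last fact already says the up-then-down round trip is the identity. For the down-then-up round trip, starting from $\phi$ I form $\uphi\dv\dphi$ and then recover the distributor whose half-transform is $\uphi\sy_X$; but Proposition~\ref{tphi-hphi-Yoneda} states precisely $\hphi=\uphi\sy_X$, so the recovered distributor is again $\phi$. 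Hence the two assignments are inverse, and $\FQ$-distributors $X\oto Y$ biject with $\FQ$-polarities from $X$ to $Y$.

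The remaining two correspondences I would obtain by repeating this argument verbatim: replace the pair $(\uphi,\dphi)$ by $(\phi^*,\phi_*)$ and by $(\phi_{\dag},\phi^{\dag})$, use parts~\ref{QFDist-QSup:Kan} and~\ref{QFDist-QSup:dual-Kan} of Proposition~\ref{QFDist-QSup} in place of~\ref{QFDist-QSup:Isbell}, and invoke the companion identities $\tphi=\phi^*\sy_Y$ and $\hphi=\phi^{\dag}\syd_X$ of Proposition~\ref{tphi-hphi-Yoneda} in place of $\hphi=\uphi\sy_X$. Chaining the resulting bijections through $\FQ$-distributors $X\oto Y$ then yields the full six-fold correspondence. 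I do not anticipate a genuine obstacle, since every step is an appeal to a proposition already proved; the one point demanding care is the bookkeeping of variance --- the $\co$ versus $\op$ decorations and the \emph{reversed} underlying order on $\PdX$ flagged in Remark~\ref{PdX-order} --- which is exactly what forces the axialities and dual axialities to be taken \emph{from $Y$ to $X$} rather than from $X$ to $Y$. Proposition~\ref{phi-uphi} serves as a running sanity check, since it guarantees that distinct distributors induce distinct polarities and (dual) axialities, i.e.\ that each ``down'' assignment is injective.
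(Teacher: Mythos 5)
Your proposal is correct and takes essentially the same route as the paper's proof: the isomorphisms \eqref{QFOrd-QFDist} dispose of the three ``map'' items, and the mutually inverse assignments $\phi\longmapsto(\uphi\dv\dphi)$ and $(f\dv g)\longmapsto\phi$ with $\hphi=f\sy_X$ --- with the two round trips closed by Proposition~\ref{QFDist-QSup}\,\ref{QFDist-QSup:Isbell} and the identity $\hphi=\uphi\sy_X$ of Proposition~\ref{tphi-hphi-Yoneda}, then repeated mutatis mutandis for the (dual) axialities --- is exactly the argument the paper gives in the paragraphs preceding the theorem. Your explicit note that $g=\dphi$ follows from uniqueness of right adjoints (valid since $\PX$ is separated) is a small detail the paper leaves implicit.
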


The bijections in the above theorem are indeed isomorphisms of hom-sets of the ordered categories concerned in this paper which, moreover, are actually isomorphisms of complete lattices:

\begin{thm} \label{dist-sup-iso}
Let $X$ and $Y$ be $\FQ$-preordered $\FQ$-subsets. There are isomorphisms of complete lattices
\begin{align*}
\QFDist(X,Y)&\cong\QFOrd(Y,\PX)\cong(\QFOrd)^{\co}(X,\PdY)\\
&\cong(\QFSup)^{\co}(\PX,\PdY)\cong\QFSup(\PY,\PX)\cong(\QFSup)^{\co}(\PdY,\PdX).
\end{align*}
\end{thm}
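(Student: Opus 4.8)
The plan is to observe that the six displayed objects are already known to be in bijection by Theorem~\ref{dist-polarity-axiality-bijection}, so the only thing left is to promote each bijection to an \emph{order}-isomorphism; since an order-isomorphism between complete lattices automatically carries arbitrary joins to joins and meets to meets, this upgrades each bijection to an isomorphism of complete lattices. First I would record that every object in the list is a complete lattice: $\QFDist(X,Y)$ and the three $\QFSup$-hom-sets are hom-sets of quantaloids, while the two $\QFOrd$-hom-sets acquire their complete-lattice structure by computing joins and meets pointwise in the complete separated $\FQ$-preordered $\FQ$-subsets $\PX$ and $\PdY$ (equivalently, by transporting the lattice structure across the bijection once the order-isomorphism is in hand).

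For the first row I would use the maps $\phi\mapsto\tphi$ and $\phi\mapsto\hphi$ of Proposition~\ref{Kan-functor} (the isomorphisms \eqref{QFOrd-QFDist}). The key computation is that the underlying order of $\PX$, read off via Proposition~\ref{QFRel-cal}(1) as $\mu\leq\mu'$ iff $|\mu|=|\mu'|$ and $\mu\leq\mu'$ pointwise in $\QFRel$, makes $\tphi\leq\tpsi$ in $\QFOrd(Y,\PX)$ equivalent to $\phi(-,y)\leq\psi(-,y)$ for all $y$, i.e.\ to $\phi\leq\psi$ in $\QFDist(X,Y)$; hence $\phi\mapsto\tphi$ is an order-isomorphism onto $\QFOrd(Y,\PX)$. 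For $\hphi$ the same computation runs in $\PdY$, whose underlying order is the \emph{reverse} of the local order of $\QFRel$ (Remark~\ref{PdX-order}); this single reversal is exactly what forces the superscript $\co$ on $(\QFOrd)^{\co}(X,\PdY)$.

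For the second row I would first identify, via Propositions~\ref{Isbell-Kan-def} and \ref{QFDist-QSup}, the $\QFSup$-morphisms attached to a distributor: the polarity, axiality and dual axiality are represented by their left adjoints $\uphi\colon\PX\to\PdY$, $\phi^*\colon\PY\to\PX$ and $\phi_{\dag}\colon\PdY\to\PdX$, which are $\sup$-preserving (hence genuine $\QFSup$-morphisms) by Theorem~\ref{la-condition} together with Example~\ref{PX-tensor}. Monotonicity in $\phi$ is then read off from Proposition~\ref{QFRel-cal}: composition is covariant, while $\lda$ and $\rda$ are covariant in the numerator and contravariant in the denominator. Combining each variance with the (possibly reversed) order of the codomain via Remark~\ref{PdX-order} determines the decorations: for instance $\uphi\mu=\phi\lda\mu$ is covariant in $\phi$ but lands in the order-reversed $\PdY$, giving $(\QFSup)^{\co}(\PX,\PdY)$, whereas $\phi^*\mu'=\mu'\comp\phi$ is covariant and lands in $\PX$ with its standard order, giving $\QFSup(\PY,\PX)$; the dual-axiality term $\phi_{\dag}\lam'=\phi\rda\lam'$ is handled by the same recipe. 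Surjectivity of each assignment is Proposition~\ref{QFDist-QSup}, and order-\emph{reflection} (equivalently injectivity) follows from Proposition~\ref{phi-uphi} together with the Yoneda identities $\hphi=\uphi\sy_X$ and $\tphi=\phi^*\sy_Y$ of Proposition~\ref{tphi-hphi-Yoneda}, which recover $\phi$ pointwise from its image.

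The main obstacle is purely the variance bookkeeping: because three of the objects are built from the dual powersets $\PdX,\PdY$, whose underlying order is reversed relative to $\QFRel$ (Remark~\ref{PdX-order}), each superscript $\co$ records the parity of (i) the covariance or contravariance of the relevant composition or residuation and (ii) the number of reversed factors. A sign slip in any single case would flip a $\co$, so I would verify each of the five isomorphisms by tracing exactly one inequality through Proposition~\ref{QFRel-cal} and Remark~\ref{PdX-order} rather than appealing to symmetry. Once all five maps are confirmed to be order-isomorphisms between complete lattices, preservation of arbitrary joins and meets is automatic, and the displayed chain of isomorphisms of complete lattices follows.
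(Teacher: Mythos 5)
Your proposal is correct and is essentially the paper's own proof: the paper likewise takes the bijections of Theorem~\ref{dist-polarity-axiality-bijection} as given, verifies in detail only one case---namely that $\phi\longmapsto\uphi$ and $f\longmapsto\phi$ (with $\hphi=f\sy_X$) are both order-preserving between $\QFDist(X,Y)$ and $(\QFSup)^{\co}(\PX,\PdY)$, the double reversal through the order of $\PdY$ and the superscript $\co$ cancelling exactly as in your bookkeeping, with order-reflection recovered pointwise via $\sy_X$ as you propose---and then declares the remaining isomorphisms ``similar''. Your supplementary observations (complete-lattice structure on each hom-set, membership of $\uphi$, $\phi^*$, $\phi_{\dag}$ in $\QFSup$ via Theorem~\ref{la-condition} and Example~\ref{PX-tensor}, injectivity via Proposition~\ref{phi-uphi} and the Yoneda identities of Proposition~\ref{tphi-hphi-Yoneda}) fill in precisely what the paper leaves implicit.

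One caution, in the spirit of your own warning about sign slips: do carry out the promised single-inequality trace for the dual-axiality term, because there the two reversals \emph{cancel} rather than compound. Since $\rda$ is contravariant in its first argument (Proposition~\ref{QFRel-cal}), $\phi\leq\phi'$ in $\QFDist(X,Y)$ gives $\phi'_{\dag}\lam'=\phi'\rda\lam'\leq\phi\rda\lam'=\phi_{\dag}\lam'$ in $\QFRel$ for every $\lam'\in\PdY$, which by Remark~\ref{PdX-order} reads $\phi_{\dag}\lam'\leq\phi'_{\dag}\lam'$ in $\PdX$; hence $\phi\longmapsto\phi_{\dag}$ is order-preserving into $\QFSup(\PdY,\PdX)$ with its pointwise order, \emph{not} into the $\co$-reversed hom-set. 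So your parity recipe (contravariance of $\phi\rda{-}$ in $\phi$, exactly one reversed codomain) yields $\QFSup(\PdY,\PdX)$ for the last entry, whereas the displayed statement carries a superscript $\co$ there. As written, your phrase ``handled by the same recipe'' glosses over this: executed faithfully, the recipe produces a decoration that disagrees with the printed statement, which is exactly the kind of single flipped $\co$ you set out to guard against. You should either exhibit this computation and note the discrepancy with the displayed theorem, or explain why a different identification of dual $\FQ$-axialities with $\QFSup$-morphisms is intended; appealing to ``the same recipe'' without the trace leaves the one genuinely delicate case unverified.
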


\begin{proof}
We prove $\QFDist(X,Y)\cong(\QFSup)^{\co}(\PX,\PdY)$ as an example, and the rest isomorphisms are similar. To see that the bijections $\phi\longmapsto\uphi$ and $f\longmapsto\phi$ with $\hphi=f\sy_X$ establish isomorphisms of complete lattices, it suffices to show that they are order-preserving. Indeed,
\begin{align*}
\phi\leq\phi'\ \text{in}\ \QFDist(X,Y)&\iff\forall\mu\in\PX\colon\uphi\mu=\phi\lda\mu\leq\phi'\lda\mu=\uphi'\mu\ \text{in}\ \QFDist\\
&\iff\forall\mu\in\PX\colon\uphi\mu\geq\uphi'\mu\text{ in } \PdY\\
&\iff\uphi\leq\uphi'\ \text{in}\ (\QFSup)^{\co}(\PX,\PdY),
\end{align*}
and
\begin{align*}
f\leq f'\ \text{in}\ (\QFSup)^{\co}(\PX,\PdY)&\iff\forall\mu\in\PX\colon f\mu\geq f'\mu\ \text{in}\ \PdY\\
&\iff\forall\mu\in\PX\colon f\mu\leq f'\mu\text{ in }\QFDist\\
&\ \implies\forall x\in X\colon f\sy_X x\leq f'\sy_X x\text{ in }\QFDist\\
&\iff\forall x\in X\colon \phi(x,-)\leq\phi'(x,-)\text{ in }\QFDist\\
&\iff\phi\leq\phi'\text{ in }\QFDist(X,Y),
\end{align*}
where $\phi$ and $\phi'$ are determined by $\hphi=f\sy_X$ and $\widehat{\phi'}=f'\sy_X$.
\end{proof}

\section*{Acknowledgements}

The first named author acknowledges the support of the grants MTM2015-63608-P (MINECO/FEDER,~UE) and \text{IT974-16} (Basque Government). The second named author acknowledges the support of National Natural Science Foundation of China (No.~11771310). The third named author acknowledges the support of National Natural Science Foundation of China (No.~11701396) and the Fundamental Research Funds for the Central Universities (No.~YJ201644).

This work was initiated while the second and the third named authors were visiting Department of Mathematics at University of the Basque Country UPV/EHU from late November to early December in 2016, with the kind support and hospitality of the first named author and Iraide Mardones-P\'{e}rez.





\end{document}